\tikzset{
  treenode/.style = {shape=rectangle, rounded corners,
                     draw, align=center,
                     top color=white, bottom color=blue!20},
%  root/.style     = {treenode, font=\Large, bottom color=red},
  env/.style      = {treenode, font=\ttfamily\normalsize},
  dummy/.style    = {circle,draw}
}
\newtheorem{theorem}{Theorem}
\newtheorem{lemma}{Lemma}
\newtheorem{remark}{Remark}
\newtheorem{corollary}{Corollary}
\newtheorem{example}{Example}
\newtheorem{definition}{Definition}
\newcommand{\MonF}{$\ensuremath{Mon}_F$}
\newcommand{\Var}{\mathit{Var}}
\newcommand{\SOSrule}[2]{\frac{\displaystyle #1}{\displaystyle #2}}
\newcommand{\depth}{\mathit{depth}}
\newcommand{\yes}{\mathit{yes}}
\newcommand{\no}{\mathit{no}}
\newcommand{\vend}{\mathit{end}}
\newcommand{\Act}{\mathit{Act}}
\newcommand{\pre}{\mathit{pre}}
\journal{Journal of Logical and Algebraic Methods in Programming}
\begin{document}
\begin{frontmatter}
\title{Axiomatizing recursion-free, regular monitors\tnoteref{thanks}}
\tnotetext[thanks]{This article is based on material presented at the 31st Nordic
  Workshop on Programming Theory, NWPT 2019, in Tallinn. The authors were supported by the projects `Open Problems in the Equational Logic of Processes’ (OPEL) (grant No 196050-051) and `Mode(l)s of Verification and Monitorability' (MoVeMent) (grant No~217987) of the Icelandic Research Fund, and `Runtime and Equational Verification of Concurrent Programs' (ReVoCoP) (grant No 222021), of the Reykjavik University Research Fund. Luca Aceto's work was also partially supported by the Italian MIUR  PRIN 2017 project FTXR7S IT MATTERS `Methods and Tools for Trustworthy Smart Systems'.}

\author[ru,gssi]{Luca Aceto} \ead{luca@ru.is, luca.aceto@gssi.it}
\author[ru]{Antonis Achilleos} \ead{antonios@ru.is}
\author[ru]{Elli Anastasiadi\corref{corresponding}} \ead{elli19@ru.is}
\author[ru]{Anna Ingolfsdottir} \ead{annai@ru.is}

\address[ru]{ICE-TCS, Department of Computer Science, Reykjavik University, Iceland}
\address[gssi]{Gran Sasso Science Institute, L'Aquila, Italy}

\cortext[corresponding]{Elli Anastasiadi, Menntavegur 1, 102 Reykjavik, Iceland}

%\ead[url]{www.elsevier.com}

\begin{abstract}
Monitors are a key tool in the field of runtime verification, where they are used to verify system properties by analyzing execution traces generated by processes. Work on runtime monitoring carried out in a series of papers by Aceto et al.$~$has specified monitors using a variation on the regular fragment of Milner's CCS and studied two trace-based notions of equivalence over monitors, namely verdict and $\omega$-verdict equivalence. This article is devoted to the study of the equational logic of monitors modulo those two notions of equivalence. It presents complete equational axiomatizations of verdict and $\omega$-verdict equivalence for closed and open terms over recursion-free  monitors. It is also shown that verdict equivalence has no finite equational axiomatization over open monitors when the set of actions is finite and contains at least two actions.
\end{abstract}

\begin{keyword}	
Monitors \sep Formal Verification \sep CCS \sep Equational Logic \sep Processes \sep Process Algebra \sep Axiomatization \sep Trace Equivalence \sep Verdicts.
\end{keyword}

\end{frontmatter}
%\linenumbers
\section{Introduction}
% !TeX root = axioms_Monitors_VeEq.tex
%% Last modified: apr 15 10:00:43 GMT 2021
%% Last spell checked: 

The search for equational axiomatizations of a notion of equivalence over some process description language is one of the classic topics in concurrency theory, as witnessed by the literature on this subject over the last forty years. (See, for instance,~\cite{AcetoFIL05,BaetenBR2009,BaetenB90,BergstraK84,Brookes83,Gla01,Hennessy81,HennessyM85,HoareHJMRSSSS87,Milner84,Milner:1989:CC:63446} for early references as well as survey and textbook accounts, and the papers~\cite{AcetoCFIL21,AcetoCILP20,GrabmayerF20,KappeB0WZ20} for examples of the rich body of recent contributions to this field.) This research avenue has its intellectual roots in the time-honored study of the existence of finite, (conditional) equational proof systems for equality of regular expressions, as presented in~\cite{JHConway71,Kozen94,KozenS20,redko_reg_lang,Salomaa66}.

There are manifold reasons for studying equational axiomatizations of
equivalences over processes. For example, the existence of a finite,
or at least finitely specified, equational axiomatization for some
notion of process equivalence is often considered as one of the
yardsticks to assess its mathematical tractability. Additionally,
equational axiomatizations provide a purely syntactic description of
the chosen notion of equivalence over processes and characterize the
essence of a process semantics by means of a few revealing axioms,
which can be used to compare a variety of semantics in a
model-independent way (as done, for instance,
in~\cite{Gla01}). Moreover, such axiomatizations pave the way to the
use of theorem-proving techniques to establish that two process
descriptions express the same behavior modulo the chosen notion of
behavioral equivalence~\cite{CranenGKSVWW13,GrooteR01,Lin95}, and also
play an important role in the partial evaluation of
programs~\cite{Heering86}.
 
 In this paper, we study the equational logic of the \emph{monitors}
studied by Aceto et al.~in, for
instance,~\cite{AcetoPOPL19,operGuidetoMon,FraAI17}. Monitors are a key
tool in the field of runtime verification
(see~\cite{BartocciF2018,FalconeHR13,
HavelundG05,HavelundR01a,LeuckerS09,PnueliZ06,SokolskyR12,
TabakovRV12}
and the references therein for an overview of this active research
area), where they are used to check for system properties by
analyzing execution traces generated by processes and are often
expressed using some automata-based formalism. The notion of monitorable property has been defined in a seminal paper by Pnueli and Zaks~\cite{PnueliZ06}. Intuitively, a property of finite and infinite system executions is $s$-monitorable, for some finite trace of observable events $s$, if there is an extension of $s$ after which a monitor will be able to determine conclusively whether the observed system execution satisfies or violates the property. This means that verdicts issued by monitors are \emph{irrevocable}. In that work by Pnueli and Zaks, a property is described by the set of finite and infinite executions that satisfy it. However, in the theory and practice of runtime verification, one often specifies properties finitely using formalisms such as automata or (variations on) temporal logics and studies what specifications in the chosen formalism are `monitorable’ and with what correctness guarantees—see, for instance,~\cite{BarringerRH10,monitorLTL_tLTL,sen_rosu_reg_exp_monitors}. Since monitors are part of the trusted computing base, the automated, correct-by-design \emph{monitor synthesis} from the formal specification of properties has been thoroughly studied in the literature and is often accompanied by the experimental evaluation of the overhead induced by monitoring—see, for example, the study of various approaches to the automated monitor synthesis for systemC specifications given in~\cite{TabakovRV12} and the framework for benchmarking of runtime verification tools presented in~\cite{AcetoAFI21}. 

In~\cite{AcetoPOPL19,operGuidetoMon,FraAI17}, Aceto et al.~specified monitors using a variation on the
regular fragment of Milner's CCS~\cite{M80} and studied two
trace-based notions of equivalence over monitors, namely verdict and
$\omega$-verdict equivalence. Intuitively, two monitor descriptions
are verdict equivalent when they accept and reject the same finite
execution traces of the systems they observe.
 The notion of
$\omega$-verdict equivalence is the `asymptotic version' of verdict
equivalence, in that it is solely concerned with the infinite traces
that are accepted and rejected by monitors. 
In their work, Aceto et. al. focus on determining the `monitorable’ fragment of Hennessy-Milner Logic with recursion~\cite{AcetoPOPL19,FraAI17} and provide monitor-synthesis algorithms for properties that can be expressed in that fragment. 
The key (and non-negotiable) property that the monitor synthesized from a formula $\varphi$ in the monitorable fragment of that logic should satisfy is \emph{soundness}, which means that a verdict issued by the monitor as it examines a system execution determines whether that execution satisfies $\varphi$ or not correctly. 
Naturally, sound monitors cannot produce contradictory verdicts for a given trace. 

\paragraph{Our contribution}
When monitors are described by expressions in some
monitor-specification language, such as the one employed by Aceto et
al.~in \emph{op.~cit.}, it is natural to ask oneself whether one can
(finitely) axiomatize notions of monitor equivalence over (fragments
of) that language. This study is devoted to addressing that question
in the simplest non-trivial setting. In particular, in order to stay within the realm of classic equational logic over total algebras, we consider a language that allows one to specify unsound monitors. However, all the results we present in the paper specialize to sub-languages consisting of (sound) monitors that can only issue either positive or negative verdicts.

The main results we present in this paper are complete equational characterizations of verdict equivalence over both closed (that is, variable-free) and open, recursion-free regular monitors. More specifically, we first provide an equational axiomatization of verdict equivalence over closed terms from the language of monitors we study that is finite if so is the set of actions monitors can observe (Theorem \ref{thm:GrCompFin}). The landscape of axiomatizability results for verdict equivalence over open terms turns out to be more varied. This variety is witnessed by the fact that there are three different axiomatizations, depending on whether the set of actions is infinite (Theorem~\ref{thm:CompOpenInf}), finite and containing at least two actions (Theorem~\ref{thm:CompOpenFin}) or a singleton (Theorem~\ref{thm:CompOpenUnary}). Only the axiomatization given in Theorem~\ref{thm:CompOpenUnary} is finite and we show that this is unavoidable. Indeed, verdict equivalence has no finite equational basis when the set of actions is finite and of cardinality at least two (Theorem~\ref{thm:nonfinfinal}). 

It turns out that the above-mentioned axiomatizations are also complete for $\omega$-verdict equivalence if the set of actions that monitors may observe is infinite, as in that case the two notions of equivalence coincide. On the other hand, if the set of actions is finite, $\omega$-verdict equivalence is strictly coarser than verdict equivalence. We also provide a finite, complete axiomatization of $\omega$-verdict equivalence for closed monitors in the setting of a finite set of actions (Theorem \ref{thm:GrCompOmega}). Our Theorem~\ref{thm:CompOpenOmega} gives a complete axiomatization of $\omega$-verdict equivalence over open monitors when the set of actions contains at least two actions. If the set of actions is a singleton, $\omega$-verdict equivalence has a finite equational basis (Theorem \ref{thm:compOpenOmegaUnary}). 

The equational axiomatizations we present in this article capture the
‘laws of monitor programming’~\cite{HoareHJMRSSSS87} for an admittedly
rather inexpressive language. Indeed, recursion-free regular monitors
describe essentially tree-like finite-state automata with
distinguished accept and reject states at their `leaves' with
self-loops labeled by every action. (See the operational semantics of monitors in Table~1. Note, however, that those
automata may have infinitely many transitions, if the set of actions
monitors can observe is infinite. As shown already by Milner in his
classic books on CCS~\cite{M80,Milner:1989:CC:63446}, this feature is
useful when modeling system events that carry data values. See, for
instance, the paper~\cite{BarringerGHS04} for one of the earliest
attempts to incorporate data into runtime verification.) However, as
witnessed by our results and their proofs, the study of the equational
theory of monitors modulo the notions of equivalence we consider is
non-trivial even for the minimal language studied in this paper. In
our, admittedly biased, opinion, it is therefore worthwhile to map the
territory of axiomatizability results for recursion-free regular
monitors, since results for more expressive languages will have to
build upon those we obtain in this article. We remark, in passing,
that the non-finite axiomatizability result in
Theorem~\ref{thm:nonfinfinal} is obtained over a substantially more
restrictive syntax than classic negative results for the algebra of
regular expressions, which rely on the hardness of expressing the
interplay between Kleene star and concatenation
equationally~\cite{AcetoFI98,JHConway71,redko_reg_lang}.

The contribution of this paper is entirely theoretical and we make no
claims pertaining to the applicability of our current results in the
practice of runtime verification. However, apart from their intrinsic
theoretical interest, (extensions of) the equational axiomatizations
we present might be used in the automatic, syntax-driven synthesis of
monitors from specifications of ‘monitorable properties’, as presented
in \cite{AcetoPOPL19,operGuidetoMon,FraIntroRV}, to
rewrite monitor expressions in an `equivalent, but simpler' syntactic
form, for instance by eliminating `redundant' sub-expressions. As
witnessed by the study of optimized temporal monitors for SystemC
presented in~\cite{TabakovRV12}, the investigation of monitor
optimizations based on equational rewriting or other techniques
requires a substantial experimental research effort and is outside the
scope of this article. We discuss other avenues for future research in
Section~\ref{Sect:conclusions}.

\section{Preliminaries}
% !TeX root = axioms_Monitors_VeEq.tex
%% Last modified: apr 21 10:06:14 GMT 2021

We begin by introducing recursion-free regular monitors (or simply monitors in this study) and the two notions of verdict equivalence that we study in this paper. We refer the interested reader to \cite{AcetoPOPL19,FraAI17} for background motivation and more information. 

\paragraph{Syntax of monitors}\label{sect:syntax} Let $\Act$ be a set of visible actions, ranged over by $a,b$. Following Milner~\cite{Milner:1989:CC:63446}, we use $\tau \not\in \Act$ to denote an unobservable action. The symbol  $\alpha$ ranges over $\Act \cup \{\tau\}$. Let $\Var$ be a countably infinite set of variables, ranged over by $x,y,z$. We assume that $\Act\cup \{\tau\}$ and $\Var$ are disjoint.

We write $\Act^{\omega}$ for the set of infinite sequences over
$\Act$. As usual, $\Act^*$ stands for the set of finite sequences over
$\Act$. Let $A$ be a set of finite sequences and $B$ be a set of
sequences. We write $A \cdot B$ for the concatenation of $A$ and $B$.

The collection \MonF~of (regular, recursion-free) monitors is the set of terms generated by the following grammar:

\begin{equation*}
\begin{split}
& m,n ::=~v ~~ \mid ~~a.m ~~\mid~~ m+n ~~\mid~~ x \\
& v   ::=~ \vend ~\mid~ \yes~ \mid ~\no
\end{split}
\end{equation*}
where $a \in \Act$ and $x\in \Var$. The terms $\vend$, $\yes$ and $\no$
are called \emph{verdicts}. Intuitively, $\yes$ stands for the
acceptance verdict, $\no$ denotes a rejection verdict and $\vend$ is
the inconclusive verdict, namely the state a monitor reaches when,
based on the sequence of observations it has processed so far, it
realizes that it will not be able to issue an acceptance or rejection
verdict in the future. As will be formalized by the operational
semantics of monitors to follow, verdicts are irrevocable. This means
that once a monitor reaches a verdict, it will stick to it regardless
of what further observations it makes. See, for
instance,~\cite{AcetoPOPL19,BartocciF2018,FraAI17} for a detailed
technical discussion.

Intuitively, a monitor of the form $a.m$ can observe action $a$ and behave like $m$ thereafter. On the other hand, a monitor of the form $m+n$ can behave either like $m$ or like $n$.
\begin{remark}
The work on which we build in this paper considers a setting with
three verdicts, two of which are `conclusive.' There are a number of
other approaches in the field of runtime verification that consider
many-valued verdicts. We refer the interested reader to, for instance,~\cite{BarringerFHRR12,BarringerRH10,BauerKV15,BonakdarpourFRR16,FalconeFM12} for further information. 
\end{remark}
%%%%%%%%%%%%%%%%%
Closed monitors are those that do not contain any occurrences of variables. A (closed) \emph{substitution} is a mapping $\sigma$ from variables to (closed) monitors. We write $\sigma(m)$ for the monitor that results when applying the substitution $\sigma$ to $m$. Note that $\sigma(m)$ is closed, if $\sigma$ is a closed substitution. 

\begin{definition}[\textbf{Notation}]
We use $m ~[+v]$ for a verdict $v$ to indicate that $v$ is an optional summand of $m$, that is, that the term can be either  $m$ or  $m+v$. In addition a monitor will be called $v$-free for a verdict $v$, when it does not contain any occurrences of $v$. 

For a finite index set $I = \{ i_1, \ldots , i_k\}$ and indexed set of monitors $\{m_i \}_{i \in I}$, we write $\sum_{i \in I} m_i$ to stand for $\vend$ if $I = \emptyset$ and for $m_{i_1} + \ldots + m_{i_k}$ otherwise. This notation is justified by the fact that $+$ is associative and commutative, and has $\vend$ as a neutral element, in all of the semantics we use in this paper. 
%
%Here we also introduce the generalized summation $\sum_{i \in I} m_i$ justified by the fact that $+$ is associative and commutative in all the semantics we use. $\sum_{i \in \emptyset} m_i$ stands for $\vend$. This is also consistent equationaly since $ \sum_{a \in A} a.m_a + \sum_{a \in \emptyset} a.m_a = \sum_{a \in A} a.m_a$ is always a valid equation.
\end{definition}

We now associate a notion of syntactic depth with each monitor. Intuitively, the decision a monitor $m$ takes when reading a string $s \in \Act^*$ only depends on the prefixes of $s$ whose length is at most the syntactic depth of $m$.
%From now on the $\vend$ verdict will be denoted as $\displaystyle\sum_{a \in \emptyset} a.m_a$. 
% From the way our monitors are defined one can understand that even tough they can classify arbitrary large traces, their description is finite and therefore each "important" decision will be taken before some specific constant. In order to reason about this characteristic we will need the definition of Syntactic depth. 
\begin{definition}[\textbf{Syntactic Depth}]\label{def:depth}
For any closed monitor $m \in$ \MonF, we define $\depth(m)$ as follows: \begin{itemize}
    \item $ \depth(a.m) = 1+\depth(m)$,
    \item $\depth(m_1 +m_2) = \max(\depth(m_1),\depth(m_2))$ and 
    \item $\depth(v) = 0$ for a verdict $v$. 
\end{itemize}
\end{definition}
%This notion will be useful later on when discussing the asymptotic behaviour of monitors. 

\paragraph{Semantics of monitors}

For each $\alpha\in \Act \cup \{\tau\}$, we define the transition relation $\xrightarrow[\text{}]{~\alpha~} \subseteq$ \MonF~$\times$ \MonF~as the least one that satisfies the axioms and rules in Table \ref{tab:sos_rules}.

\begin{table}
\begin{gather*}
\scalebox{1}{$ $}\, \SOSrule{}
{a.m \xrightarrow[\text{}]{a} m}
\qquad
\scalebox{1}{$ $}\, \SOSrule{m \xrightarrow[\text{}]{~\alpha~} m'}
{m + n \xrightarrow[\text{}]{~\alpha~} m'}
\qquad
\scalebox{1}{$ $}\, \SOSrule{n \xrightarrow[\text{}]{~\alpha~} n'}
{m + n \xrightarrow[\text{}]{~\alpha~} n'}
\qquad 
\scalebox{1}{$ $}\, \SOSrule{}
{v \xrightarrow[\text{}]{~\alpha~} v}
\qquad
\end{gather*}
\caption{\label{tab:sos_rules} Operational semantics of processes in \MonF.}
\end{table}

For example, $\yes + x \xrightarrow[]{\tau} \yes$ and $a.\yes + \vend \xrightarrow[]{b} \vend$, for each $a,b \in \Act$. A useful fact based on the above operational semantics is that if $m \xrightarrow[]{\tau} m'$, then $m'=v$ for some verdict $v$. 

Note that variables have no transitions. They represent under-specification in monitor behavior. For instance, monitor $a.\yes + x$ is one that we know can reach the verdict $\yes$ after having observed an $a$ action. Further information on the behavior of that monitor can only be gleaned once the variable $x$ has been instantiated via a (closed) substitution.

For $m,m'$ in \MonF~and $s=a_1\ldots a_k$ in $\Act^*,~ k\geq 0,$  $ m \xrightarrow[]{s} m'$ holds iff there are $m_0,,\ldots, m_k$ such that
 \[
m =m_0 \xrightarrow[]{a_1} m_1 \cdots m_{k-1} \xrightarrow[]{a_k} m_k = m' .
\]

Additionally, for $s\in \Act^{*}$, we use  $m \xRightarrow[]{s} m'$ to mean that: \begin{enumerate}
    \item $m~ (\xrightarrow[]{\tau})^* ~m'$ if $s=\varepsilon$, where $\varepsilon$ stands for the empty string,
    \item $m \xRightarrow[]{\varepsilon} m_1 \xrightarrow[]{a}  m_2 \xRightarrow[]{\varepsilon} m'$ for some $m_1,m_2$ if $s=a \in \Act$ and
    \item $m \xRightarrow[]{a} m_1 \xRightarrow[]{s'} m' $ for some $m_1$ if $s=a.s'$ , for some $s' \neq \varepsilon$. 
\end{enumerate}
If $m \xRightarrow[]{s} m'$ for some $m'$, we call $s$ a \emph{trace} of $m$.
% and we call $s$ a trace of $m$.

\begin{lemma}\label{lemma:axiomD}
For all $s \in \Act^*,~ m,n \in$ \MonF, and verdict $v$, $m + n \xRightarrow[]{s} v$ iff $m \xRightarrow[]{s} v$ or $n \xRightarrow[]{s} v$.
\end{lemma}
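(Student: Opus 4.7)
The plan is to prove both directions by a direct analysis of the strong-transition sequence underlying each weak transition $\xRightarrow[]{s}$. The argument rests on two properties of the operational semantics of Table \ref{tab:sos_rules}: first, by the grammar, $m+n$ is never syntactically a verdict, so every weak transition $m+n \xRightarrow[]{s} v$ must decompose into at least one strong transition; second, the only SOS rules that can produce transitions from a sum $m+n$ are the two $+$-rules, and these simply lift a transition from $m$ or from $n$ without altering its target.

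For the ($\Rightarrow$) direction, I would unfold $m+n \xRightarrow[]{s} v$ into a non-empty strong-transition sequence $m+n = p_0 \xrightarrow[]{\alpha_1} p_1 \xrightarrow[]{\alpha_2} \cdots \xrightarrow[]{\alpha_k} p_k = v$. By inspecting the rules, the first step $p_0 \xrightarrow[]{\alpha_1} p_1$ must be derived by the left or the right $+$-rule, yielding $m \xrightarrow[]{\alpha_1} p_1$ or $n \xrightarrow[]{\alpha_1} p_1$. Prepending that step to the remaining sequence $p_1 \xrightarrow[]{\alpha_2} \cdots \xrightarrow[]{\alpha_k} v$ (which does not reference the sum at all) gives a strong-transition sequence that witnesses $m \xRightarrow[]{s} v$ or $n \xRightarrow[]{s} v$ respectively, since the pattern of $\tau$-labels and visible-action labels is preserved.

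For the ($\Leftarrow$) direction, suppose $m \xRightarrow[]{s} v$ (the case of $n$ is symmetric). If the underlying strong sequence is non-empty, its first transition $m \xrightarrow[]{\alpha_1} m_1$ lifts to $m+n \xrightarrow[]{\alpha_1} m_1$ by the left $+$-rule, and all subsequent transitions emanate from $m_1$ and are inherited unchanged, yielding $m+n \xRightarrow[]{s} v$. The delicate case, and what I expect to be the only real subtlety, is when the underlying strong sequence is empty: then $s = \varepsilon$ and $m = v$, but $m+n = v+n$ is not syntactically equal to $v$, so it cannot realise a zero-step $\tau$-sequence to $v$. Here I would exploit the verdict rule $v \xrightarrow[]{\tau} v$ together with the left $+$-rule to derive $v+n \xrightarrow[]{\tau} v$, giving $m+n \xRightarrow[]{\varepsilon} v$ as required.
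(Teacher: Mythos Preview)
Your proposal is correct and identifies exactly the key subtlety the paper highlights: in the right-to-left direction with $s=\varepsilon$ and $m=v$, one needs the verdict self-loop $v \xrightarrow[]{\tau} v$ together with the $+$-rule to obtain $v+n \xrightarrow[]{\tau} v$. The paper phrases its argument as an induction on the length of $s$, whereas you argue directly by unfolding the weak transition into its underlying strong-transition sequence and analyzing the first step; both are standard operational arguments and amount to the same reasoning, with your version avoiding an explicit induction at the cost of implicitly relying on the (routine) equivalence between the recursive definition of $\xRightarrow[]{s}$ and the existence of a suitable $\tau$-interspersed strong sequence.
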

\begin{proof}
We prove both implications separately, by induction on the length of $s$. The details are straightforward and are therefore omitted. Here we limit ourselves to remarking that, in the proof of the implication from right to left, if $s = \varepsilon$ and $m = v$, say, then $v+n \xrightarrow[]{\tau} v$ by the rules in Table \ref{tab:sos_rules}. \end{proof}
%In both cases of the implication we will use induction on the length of $s$: 
%\begin{itemize}
%    \item For the implication $m \xRightarrow[]{s} v$ or $ n \xRightarrow[]{s} v$ then  $m + n \xRightarrow[]{s} v$ we proceed as follows: if $s=\varepsilon $ and $m=v$ then $v +n \xRightarrow[]{\varepsilon} v$  since $v \xRightarrow[]{\tau} v $. For $s = a.s'$ if $m \xRightarrow[]{s} v \Rightarrow m \xRightarrow[]{a} m'$ where $m' \xRightarrow[]{s'} v $ and therefore $m + n \xRightarrow[]{a} m' \Rightarrow m +n \xRightarrow[]{s} v$ 
%    \item For the implication $m + n \xRightarrow[]{s} v$ if $m \xRightarrow[]{s} v$ or $n \xRightarrow[]{s} v$ we have that: 
%
%
%If $s = \varepsilon$ and $m +n = v$ one of the $m,n$ must be equal to $v$. The case were $m +n \xRightarrow[]{s} v$ and $s = a.s'$ means that $m +n \xRightarrow[]{a} m' $ if and only if one of the $m,n$ can do an $a-$transition and arrive at $m'$. By the inductive argument either $m  \xRightarrow[]{a} m'$ which means $m \xRightarrow[]{s} v$ or, $n \xRightarrow[]{a} m'$ which means that $n \xRightarrow[]{s} v$
%\end{itemize} 

\begin{remark}
Note that the implication from right to left in Lemma \ref{lemma:axiomD} would not hold in the absence of rule $ v \xrightarrow[]{\tau} v$ in Table \ref{tab:sos_rules}. 
\end{remark}
\paragraph{Verdict and $\omega$-verdict equivalence}

Let $m$ be a (closed) monitor. We define: %\\

\begin{equation*}
\begin{split}
& L_a(m) = \{ s \in \Act^* \mid m \xRightarrow[]{s} \yes\}  ~\text{and} \\
& L_r(m) = \{ s \in \Act^* \mid m \xRightarrow[]{s} \no \} . 
\end{split}
\end{equation*}
%%%%%%%%%%%%%%%%%%%%%%%%%%
Intuitively, $L_a(m)$ denotes the set of traces that are accepted by $m$, whereas $L_r(m)$ stands for the set of traces that $m$ rejects. The sets $L_a(m)$ and $L_r(m)$ will also be referred to as the acceptance and rejection set of $m$ respectively. 
Note that we allow for monitors that may both accept and reject the same trace. This is necessary to maintain our monitors closed under $+$ and to work with classic total algebras rather than partial ones. Of course, in practice, one is interested in monitors that are consistent in their verdicts. One way to ensure consistency in monitor verdicts, which was considered in~\cite{FraAI17}, is to restrict oneself to monitors that use only one of the conclusive verdicts $\yes$ and $\no$. All the results that we present in the remainder of this paper apply to such monitors.
\begin{remark}
The reader might wonder about the connection between the languages that are accepted/rejected by recursion-free regular monitors and star-free languages~\cite{Schutzenberger65a}. A simple argument by induction on the structure of monitors shows that every recursion-free regular monitor denotes a pair of star-free languages, one for its acceptance set and one for its rejection set.  Moreover, this means that recursion-free regular monitors correspond to properties that can be expressed in LTL~\cite{Kamp1968-KAMTLA}. However, there are star-free languages (and therefore LTL properties) that cannot be described by recursion-free regular monitors. For example, the language $(ab)^*$ is  star-free (see, for instance,~\cite[page~267]{DiekertG08}) but does not correspond to any recursion-free regular monitor.

The monitors we consider in this paper output a positive or negative verdict after a finite number of computational steps, if they do so at all. This means that the linear-time temporal properties to which their acceptance and rejection set correspond  are both `Always Finitely Refutable' and `Always Finitely Satisfiable' in the sense of~\cite{PeledH18}, as proven in \cite{AcetoPOPL19}. 
 
\end{remark}

% i.e  $L_a(m) \cap L_r(m) \neq \emptyset$ is allowed.
\begin{definition}
Let $m$ and $n$ be closed monitors.
\begin{itemize}
    \item We say that $m$ and $n$ are \textbf{verdict equivalent}, written $m  \simeq n$, if $L_a(m) = L_a(n)~ $ and $~L_r(m) = L_r(n) $.
    
    \item We say that $m$ and $n$ are \textbf{$\omega$-verdict equivalent}, written $m  \simeq_{\omega} n$,   if $L_a(m) \cdot \Act^{\omega} = L_a(n) \cdot \Act^{\omega}~ $ and $ ~L_r(m) \cdot \Act^{\omega} = L_r(n) \cdot \Act^{\omega} $.
\end{itemize}
%%%%%%%%%%%%%%%%%%%%%%%%%%%%%%
For open monitors $m$ and $n$, we say that $m \simeq n$ if $\sigma (m) \simeq \sigma (n)$, for all closed substitutions $\sigma$. The relation $\simeq_{\omega}$ is extended to open monitors in similar fashion. 

\begin{example}
It is easy to see that $m + \vend \simeq m$ holds for each $m \in $\MonF . Moreover, since $L_a(\vend) = \emptyset$ and $L_r(\vend)=\emptyset$, $a.\vend  \simeq \vend$ holds for each $a \in \Act$.
\end{example}
%%A substitution $\sigma$ is closed if $\sigma(m)$ is a closed monitor 
%%$\in \MonF$ for all $x \in \Var$. 
%
%Two open monitors $m$ and $n$, will be called $\omega$-verdict equivalent if %$\sigma(m) \simeq_{\omega} \sigma(n)$, for all closed substitutions $\sigma$.
\end{definition}
%%%%%%%%%%%%%%%%%%%%%%%%%%%%%%%%%%%%%%%%%%%%
One can intuitively see that the notion of $\omega$-verdict equivalence refers to a form of asymptotic behavior. Indeed, monitors $m$ and $n$ are $\omega$-verdict equivalent if, and only if, they accept and reject the same infinite traces in the sense of \cite{AcetoPOPL19}. Next we provide a lemma that clarifies the
relations between the two notions of equivalence defined above. 

\begin{lemma}\label{lem:1}
The following statements hold: 
\begin{itemize}
    \item $\simeq$ and $\simeq_{\omega}$ are both congruences. 
    \item $\simeq \subseteq \simeq_{\omega}$ and the inclusion is strict when $\Act$ is finite. 
    \item If $\Act$ is infinite then $\simeq = \simeq_{\omega}$.
\end{itemize}
\end{lemma}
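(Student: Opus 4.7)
The plan is to treat the three statements in order, using simple facts derivable from the operational semantics.

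For the first part, I would begin by establishing, for closed monitors, the compositional identities $L_a(a.m) = \{a\} \cdot L_a(m)$ and $L_r(a.m) = \{a\} \cdot L_r(m)$, which follow from the fact that $a.m$ has no $\tau$-transitions and a unique $a$-derivative $m$, together with $L_a(m+n) = L_a(m) \cup L_a(n)$ and $L_r(m+n) = L_r(m) \cup L_r(n)$, which are immediate from Lemma~\ref{lemma:axiomD}. From these, congruence of $\simeq$ on closed monitors is immediate; concatenating on the right with $\Act^\omega$ gives the same for $\simeq_\omega$. Since $\sigma(a.m) = a.\sigma(m)$ and $\sigma(m+n) = \sigma(m) + \sigma(n)$ for any closed substitution $\sigma$, both congruences lift to open monitors by the very definition via closed substitutions.

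For the second part, the inclusion $\simeq \subseteq \simeq_\omega$ is trivial from the definitions. For strictness when $\Act$ is finite, I would take $m = \yes$ and $n = \sum_{a \in \Act} a.\yes$. Using the identities above, $L_a(m) = \Act^*$ whereas $L_a(n) = \Act^+$, so $\varepsilon$ witnesses $m \not\simeq n$; however $L_a(m) \cdot \Act^\omega = \Act^\omega = L_a(n) \cdot \Act^\omega$ and $L_r(m) = L_r(n) = \emptyset$, so $m \simeq_\omega n$.

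The third part is the substantive case. The main tool I would develop is a finitary characterisation of the trace sets: for every closed monitor $m$ there exist finite sets $Y_m, N_m \subseteq \Act^*$ such that $L_a(m) = Y_m \cdot \Act^*$ and $L_r(m) = N_m \cdot \Act^*$. One defines these by structural induction, with $Y_\yes = \{\varepsilon\}$, $Y_\no = Y_\vend = \emptyset$, $Y_{a.m} = \{a\} \cdot Y_m$, $Y_{m+n} = Y_m \cup Y_n$ (and analogously for $N_m$), and verifies the identity from the compositional identities of the first part together with the fact that verdicts loop on every action. Given this, suppose $m \simeq_\omega n$ but, towards a contradiction, $L_a(m) \neq L_a(n)$, and pick $s \in L_a(m) \setminus L_a(n)$. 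The set $U = \{u \in \Act^* : s \cdot u \in Y_n\}$ is finite (since $Y_n$ is) and does not contain $\varepsilon$ (else $s \in L_a(n)$). The set $F \subseteq \Act$ of first letters of elements of $U$ is therefore finite; since $\Act$ is infinite, pick $a \in \Act \setminus F$. Then for any $w \in \Act^\omega$, the word $s \cdot a \cdot w$ lies in $L_a(m) \cdot \Act^\omega$ because $s \in L_a(m)$, yet no prefix of it lies in $Y_n$ (prefixes of $s$ are excluded since $s \notin Y_n \cdot \Act^*$, and proper extensions have the form $s \cdot a \cdot w'$, which would require $a \cdot w' \in U$, contradicting $a \notin F$). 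Hence $s \cdot a \cdot w \notin L_a(n) \cdot \Act^\omega$, contradicting $m \simeq_\omega n$. The argument for $L_r$ is identical. The main obstacle I anticipate is isolating the finitary structure of $L_a$ and $L_r$ just described; once it is in hand, the infinite-alphabet step is a short König-lemma-style argument, and everything else reduces to routine unwinding of the operational semantics.
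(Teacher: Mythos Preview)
Your proposal is correct. Parts~1 and~2 match the paper exactly, including the witness $\yes$ versus $\sum_{a\in\Act}a.\yes$ for strictness.

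For part~3 the underlying idea is the same as the paper's---use the infiniteness of $\Act$ to pick a ``fresh'' action and build an infinite trace that would separate $m$ and $n$ modulo $\simeq_\omega$---but the packaging differs. The paper simply chooses an action $a$ that does not \emph{syntactically occur} in $m$ or $n$ and argues directly that if $sa^\omega\in L_a(n)\cdot\Act^\omega$ then $n$ must already accept $s$. You instead first isolate the structural fact $L_a(m)=Y_m\cdot\Act^*$ (and likewise for $L_r$) with $Y_m$ finite, and then pick $a$ avoiding the first letters of $\{u: su\in Y_n\}$. Your route makes explicit the ``finitely generated, extension-closed'' shape of the verdict languages that the paper leaves implicit in the phrase ``it is not hard to see that $n$ accepts $s$''; the paper's route is shorter because one fresh action works uniformly for all $s$, without needing to introduce $Y_m$ or $U$. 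Either argument is fine.
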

\begin{proof}
For the first claim, it suffices to prove that $\simeq$ and $\simeq_{\omega}$ are equivalence relations and that they are preserved by $a.\_$ and $+$. The proof is standard and is thus omitted.

For the second claim, the inclusion $\simeq \subseteq \simeq_{\omega}$ is easy to check using the definitions of the two relations. The fact that the inclusion is strict when the set of actions is finite follows from the validity of the equivalence 
$\yes  \simeq_{\omega} \displaystyle\sum_{a \in \Act} a.\yes $.

However, that equivalence is not valid modulo verdict equivalence since the first monitor accepts the empty string $\varepsilon$, but $\displaystyle\sum_{a \in \Act} a.\yes$ cannot.
%%%%%%%%%%%%%%%%%%%%%%%

Finally, suppose that $\Act$ is infinite. Assume that $m$ and $n$ are $\omega$-verdict equivalent and that $s$ is a finite trace accepted by $m$. We will argue that $n$ also accepts $s$. To this end, note that, since $\Act$ is infinite, there is some action $a$ that does not occur in $m$ and $n$. Since $m$ accepts $s$, the infinite trace $sa^{\omega}$ is in $L_a(m)\cdot \Act^{\omega}$. By the assumption that $m$ and $n$ are $\omega$-verdict equivalent, we have that $sa^{\omega}$ is in $L_a(n)\cdot \Act^{\omega}$. As $a$ does not occur in $n$, it is not hard to see that $n$ accepts $s$. Therefore, by symmetry, $m$ and $n$ accept the same traces. The same argument shows that $L_r(m)=L_r(n)$, and therefore $m \simeq n$. \end{proof}

\paragraph{Equational logic}
An axiom system $\mathcal{E}$ over \MonF~is a collection of equations $m = n$ expressed in the syntax of \MonF. An equation $m = n$  is derivable from an axiom system $\mathcal{E}$ (notation $\mathcal{E} \vdash m = n $) if it can
be proven from the axioms in $\mathcal{E}$ using the rules of equational logic (reflexivity, symmetry, transitivity, substitution and closure under the \MonF~contexts). See Table~\ref{tab:EL}. In the rest of this work we shall always implicitly assume, without loss of generality, that equational axiom systems are closed with respect to symmetry, i.e., that if $m = n$ is an axiom, so is $n =m$. 

\begin{table}%[h]
\centering
\begin{tabular}{|ll|} 
\hline
Reflexivity    &                                                                                                                                                                       \\                                                             & $t = t$                                                                                                                                                          \\
Symmetry                                                             &                                                                                                                                                                             \\                                                            & $\SOSrule{t =t'}{t'= t}$ \\
Transitivity  &                                                                                                                                                                             \\
 & $\SOSrule{t_1 = t_2, ~t_2 = t_3}{t_1= t_3}$                                                                                         \\
Congruence (For any $n$-ary $f$) &                                                                                                                                                                             \\ & $\SOSrule{t_i = t_i',~ i = 1,2,\ldots ,n}{f(t_1,\ldots t_n ) = f(t_1',\ldots, t_n')}$  \\

Substitutivity (For each substitution $\sigma$)                 &                                                                                                                                                                             \\                                                                 & $\SOSrule{t = t'}{\sigma(t) = \sigma(t')}$                                                                                  \\
\hline
\end{tabular}
\caption{Rules of equational logic \label{tab:EL}}

\end{table}

We say that $\mathcal{E}$ is \emph{sound} with respect to $\simeq$ when $m \simeq n$ holds whenever $\mathcal{E} \vdash m = n$. We say that $\mathcal{E}$ is \emph{complete}  with respect to $\simeq$ when $\mathcal{E}$ can prove all the valid equations $m \simeq n$. Similar definitions apply for $\omega$-verdict equivalence. 
The notion of completeness, when limited to closed terms, is referred to as \textit{ground completeness}.

\section{A ground-complete axiomatization of verdict and $\omega$-verdict equivalence}
% !TeX root = axioms_Monitors_VeEq.tex
%%% Last modified: apr 21 09:35:51 GMT 2021
%%% Last spell checked:

Our goal in this paper is to study the equational theory of $\simeq$ and $\simeq_{\omega}$ over \MonF. Our first main result is to give a ground-complete axiomatization of verdict equivalence over \MonF. To this end, consider the axiom system $\mathcal{E}_v$, whose axioms are listed in Table \ref{groundaxioms}.

%Our proposed axioms system for verdict equivalence is $\mathcal{E}_{v}$, whose axioms are: 
\begin{table}[h]
\begin{minipage}{0.5\textwidth}
\begin{equation*}
\begin{split}
& \textbf{(A1)} ~x + y  = y + x \\
& \textbf{(A2)} ~x + (y + z)  = (x+y)+z \\
& \textbf{(A3)} ~x + x  = x \\
& \textbf{(A4)} ~x + \vend  = x 
\end{split}
\end{equation*}
\end{minipage}
\begin{minipage}{0.46\textwidth}
\begin{equation*}
\begin{split}
& \mathbf{(E_a)}~ a.\vend = \vend ~(a \in \Act)\\
& \mathbf{(Y_a)}~  \yes = \yes + a.\yes ~(a \in \Act) \\
& \mathbf{(N_a)}~  \no = \no +  a.\no ~(a \in \Act) \\
& \mathbf{(D_a)}~ a.(x + y)  = a.x +a.y ~(a \in \Act)
\end{split}
\end{equation*}
\end{minipage}
\caption{The axioms of $\mathcal{E}_v$}
\label{groundaxioms}
\end{table}

\begin{remark} Note that $\mathcal{E}_v$ is finite, if so is $\Act$.
\end{remark}

The subscript $_v$ in the naming scheme of the axiom set refers to the kind of equivalence that it axiomatizes, namely verdict equivalence. It will later be replaced with $_\omega$ when we study $\omega$-verdict equivalence and used accordingly from that point forward.

We provide now the following lemma as an observation on the number of necessary axioms when $\Act$ is finite and as an example proof based on these axioms. 
\begin{lemma} When $\Act$ is finite, the family of axioms $(Y_a)$ can be replaced with 
$$ \mathbf{(Y)} ~ \yes = \yes + \displaystyle\sum_{a \in \Act} a.\yes \text{.}$$ Similarly the family of axioms $(N_a)$ can be replaced with $$ \mathbf{(N)}~ \no = \no + \displaystyle\sum_{a \in \Act} a.\no\text{.}$$
\end{lemma}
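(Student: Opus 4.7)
The plan is to prove two things: that the single axiom $(Y)$ can be derived from the family $\{(Y_a)\}_{a \in \Act}$ together with $(A1)$--$(A4)$, and conversely that each $(Y_a)$ can be derived from $(Y)$ together with those same basic axioms. The argument for the $(N)$ versus $(N_a)$ case is word-for-word the same with $\yes$ replaced by $\no$, so I would state it as such and not repeat the work.

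For the first direction, I would proceed by induction on the cardinality of $\Act$. Enumerate $\Act = \{a_1, \ldots , a_k\}$. Starting from $\yes$, apply $(Y_{a_1})$ to rewrite it as $\yes + a_1.\yes$; then apply $(Y_{a_2})$ to the $\yes$-summand on the left and use $(A1)$ and $(A2)$ to re-associate, obtaining $\yes + a_1.\yes + a_2.\yes$; continue for each $a_i$. After $k$ steps this yields $\yes = \yes + \sum_{a \in \Act} a.\yes$, which is $(Y)$.

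For the second direction, fix an arbitrary $a \in \Act$ and add $a.\yes$ to both sides of $(Y)$, using congruence of equational logic, to obtain
\begin{equation*}
\yes + a.\yes \;=\; \Bigl(\yes + \sum_{a' \in \Act} a'.\yes\Bigr) + a.\yes .
\end{equation*}
Since $a \in \Act$, the summand $a.\yes$ already occurs in $\sum_{a' \in \Act} a'.\yes$, so using $(A1)$, $(A2)$ and the idempotence axiom $(A3)$ we may absorb the extra copy of $a.\yes$ and conclude that the right-hand side equals $\yes + \sum_{a' \in \Act} a'.\yes$. Applying $(Y)$ once more, this equals $\yes$, yielding $(Y_a)$.

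I do not anticipate a real obstacle here: the lemma is a bookkeeping observation. The only subtlety worth calling out explicitly is that axioms $(A1)$--$(A3)$ let us treat the finite formal sum $\sum_{a\in \Act} a.\yes$ as indexed by the \emph{set} $\Act$ (independent of enumeration order and free of duplicates), which is precisely what makes the absorption step in the second direction, and the re-association steps in the inductive step of the first direction, formally valid.
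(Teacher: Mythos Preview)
Your proposal is correct and matches the paper's own proof in spirit: both directions rely on idempotence $(A3)$ together with associativity/commutativity to absorb or duplicate a summand $b.\yes$ inside the finite sum, and both treat the forward direction (deriving $(Y)$ from the $(Y_a)$'s) as a routine iteration over $\Act$. The only cosmetic point is that ``induction on the cardinality of $\Act$'' is slightly imprecise phrasing---what you mean is an iteration over the elements of the fixed finite set $\Act$---but the intended argument is clear and sound.
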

\begin{proof}
It is not hard to see that the equation $Y$ can be proved by using the family of equations $Y_a$. For the converse we can use axioms $A3$ and $Y$ to prove any equation $\yes = \yes +b.\yes$ of the family $\{Y_a \mid a  \in \Act \} $. Indeed, $\mathcal{E}_v$ proves  $$\yes = \yes + \displaystyle\sum_{a \in \Act} a.\yes = \yes + \displaystyle\sum_{a \in \Act} a.\yes + b.\yes  = \yes + b.\yes .$$ \end{proof}
%Throughout this work in most cases we will use the axioms $Y_a$ and $N_a$ even when $\Act$ is finite.
\begin{theorem}\label{thm:GroundSound}
$\mathcal{E}_{v}$ is sound modulo $\simeq$. That is, if $\mathcal{E}_{v} \vdash m = n$ then $m \simeq n$, for all $m,n \in $\MonF.

\end{theorem}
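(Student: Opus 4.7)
The plan is standard: by Lemma \ref{lem:1}, $\simeq$ is a congruence, so equational derivability from $\mathcal{E}_v$ is automatically contained in $\simeq$ as soon as every instance of every axiom is sound. I would therefore go through the nine axiom schemata one at a time and, for each axiom $m = n$ and every closed substitution $\sigma$, verify that $L_a(\sigma(m)) = L_a(\sigma(n))$ and $L_r(\sigma(m)) = L_r(\sigma(n))$, using only the operational rules in Table \ref{tab:sos_rules} and Lemma \ref{lemma:axiomD}.

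For the four axioms (A1)--(A4), Lemma \ref{lemma:axiomD} reduces the question to a purely set-theoretic one: for any closed monitors $p, q$ and any verdict $v \in \{\yes, \no\}$, the set of traces driving $p + q$ to $v$ is the union of the corresponding sets for $p$ and $q$. Commutativity, associativity, and idempotence of $+$ then follow from the same properties of set union; the neutrality of $\vend$ in (A4) follows from $L_a(\vend) = L_r(\vend) = \emptyset$, which holds because $\vend \notin \{\yes, \no\}$ and the only transitions out of $\vend$ are the self-loops $\vend \xrightarrow{\alpha} \vend$. For $(E_a)$, the only transition of $a.\vend$ is $a.\vend \xrightarrow{a} \vend$, so its acceptance and rejection sets are both empty, matching those of $\vend$. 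For $(Y_a)$, the verdict rule $v \xrightarrow{\alpha} v$ gives $L_a(\yes) = \Act^*$ and $L_r(\yes) = \emptyset$; then by Lemma \ref{lemma:axiomD}, $L_a(\yes + a.\yes) = L_a(\yes) \cup L_a(a.\yes) = \Act^* \cup a \Act^* = \Act^*$ and $L_r(\yes + a.\yes) = \emptyset$. The case of $(N_a)$ is symmetric.

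For the distributivity axiom $(D_a)$, I would observe that the first step of any derivation from $a.(\sigma(x) + \sigma(y))$ or from $a.\sigma(x) + a.\sigma(y)$ must be a labelled $a$-transition (no $\tau$-transition leaves a term whose outermost operator is an action prefix), so neither side accepts or rejects the empty trace, and every accepted or rejected trace has the form $a s'$. Two applications of Lemma \ref{lemma:axiomD} then give $a.\sigma(x) + a.\sigma(y) \xRightarrow{a s'} v$ iff $\sigma(x) \xRightarrow{s'} v$ or $\sigma(y) \xRightarrow{s'} v$ iff $a.(\sigma(x) + \sigma(y)) \xRightarrow{a s'} v$. None of these verifications is genuinely difficult; the one subtle point worth keeping in mind throughout is that the self-loop rule $v \xrightarrow{\alpha} v$ of Table \ref{tab:sos_rules} is precisely what validates the right-to-left direction of Lemma \ref{lemma:axiomD} and, consequently, the soundness of (A4) and of $(Y_a)$/$(N_a)$ under substitutions that instantiate a variable with a verdict.
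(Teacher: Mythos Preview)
Your proposal is correct and follows exactly the approach the paper indicates: verify each axiom separately, relying on the congruence property of $\simeq$ from Lemma~\ref{lem:1}. The paper itself omits all the details you have supplied, so your write-up is simply a fleshed-out version of the same argument.
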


\begin{proof} It suffices to prove soundness for each of the axioms separately. The details of the proof are standard and therefore omitted. \end{proof}

%
%\begin{itemize}
%    \item \textbf{$A1-4$} are proved easily since both sides of the equations correspond to identical expressions over \MonF.
%    \item For \textbf{$E_{a}$} it is enough to argue that none of the sides of the equation interfere in any way with the verdicts of the relevant monitors. 
%    \item \textbf{$Y_a$} (similarly for \textbf{$N_a$}) For any occurrence of a$\yes$ in a monitor $m \in$ \MonF we have that every sequence leading to that $\yes$ will also be a member of $L_a(m)$ when we replace it with a $\yes + a.\yes$. The converse also holds since a verdicts are irrevocable and therefore all sequences that lead to a $\yes$ will also lead to a$\yes$ when extended by any action of $\Act$.
%    \item \textbf{$D_a$} is easily proved by the use of Lemma \ref{leamma:axiomD} $$L_a(a.(m+n)) = \{a.s \mid s\in L_a(m+n)\} = \{ a.s \mid s \in L_a(m) \cup L_a(n) \}= L_a(a.m + a.n)$$ .
%    
    % $$L_a\{a.(x+y)\} = \{s: \Act^* ~\mid~ x+y \xRightarrow[]{a.s} \yes \} = \{s: \Act^* ~\mid~ x \xRightarrow[]{a.s} \yes \vee   y \xRightarrow[]{a.s} \yes \}$$ $$ = \{s: \Act^* ~\mid~ x \xRightarrow[]{a.s} \yes \}\cup \{s: \Act^* ~\mid~ y \xRightarrow[]{a.s} \yes \} = L_a\{a.x\}\cup L_a\{ a.y\}  = L_a\{a.x+ a.y\}$$
%\end{itemize}
% This completes the proof of the soundness of $\mathcal{E}_{v}$. 

% This means that all monitors that can be proved equal thought the equations of $E_veq$ will also be verdict equivalent. 

In what follows, we will consider terms up to axioms $A1$-$A4$.

A fact that will be proven useful later on is the following: If $ m \xrightarrow[]{a} n$ then $A1-A4,E_a,Y_a,N_a ~\vdash m = m + a.n.$ This follows easily by induction on the size of $m$ and a case analysis on its form and it is thus omitted.  
%Proof: The proof is by structural induction on m. We proceed by a case analysis on the form of m.
%
%    If m = \vend then \vend  = \vend + \vend = \vend + a.\vend.
%    If m = \yes then use A3 and Y_a.
%    If m = \no then use A3 and N_a.
%    If m = a.n then use A3.
%    If m = m_1 + m_2 use induction and A1-A4. 

We will now prove that the axiom system $\mathcal{E}_{v}$ is ground complete for verdict equivalence. 
\begin{theorem}\label{thm:GrCompFin}
$\mathcal{E}_{v}$ is ground complete for $\simeq$ over \MonF. That is, if $m,n$ are closed monitors in \MonF~and $m \simeq n$ then $\mathcal{E}_{v} \vdash m = n$.                    \end{theorem}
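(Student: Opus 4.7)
The plan is to establish a canonical normal form for closed monitors and show that $\mathcal{E}_v$ suffices to reduce any monitor to it. For a closed $m$, let $A(m) \subseteq \Act^*$ be the set of prefix-minimal elements of $L_a(m)$ and $R(m)$ the prefix-minimal elements of $L_r(m)$; both are finite antichains. Define
\[
\mathrm{NF}(m) ~=~ \sum_{w \in A(m)} w.\yes ~+~ \sum_{w \in R(m)} w.\no,
\]
with the convention that $w.v$ abbreviates $a_1.a_2.\cdots.a_k.v$ for $w = a_1\cdots a_k$, and that the empty sum is $\vend$. Since $L_a$ and $L_r$ characterize verdict equivalence, $m \simeq n$ forces $\mathrm{NF}(m)$ and $\mathrm{NF}(n)$ to be identical as multisets of summands, so $(A1)$--$(A4)$ immediately give $\mathcal{E}_v \vdash \mathrm{NF}(m) = \mathrm{NF}(n)$. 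The theorem therefore reduces to the single claim that $\mathcal{E}_v \vdash m = \mathrm{NF}(m)$ for every closed $m$.

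I would prove this claim by structural induction on $m$. Verdicts are trivial. For $m = a.m'$, the induction hypothesis rewrites $m'$ as $\mathrm{NF}(m')$, and repeated applications of $(D_a)$ push the prefix $a$ over each summand to produce $\mathrm{NF}(a.m')$; the degenerate subcase where $L_a(m') \cup L_r(m') = \emptyset$ collapses $\mathrm{NF}(m')$ to $\vend$ and is settled by $(E_a)$.

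The principal difficulty is the case $m = m_1 + m_2$. The induction hypothesis gives $\mathcal{E}_v \vdash m = \mathrm{NF}(m_1) + \mathrm{NF}(m_2)$, a sum indexed by $A(m_1) \cup A(m_2)$ and $R(m_1) \cup R(m_2)$. However, $A(m)$ and $R(m)$ consist of the prefix-minimal elements of these unions, so every summand $w.\yes$ for which some strict prefix $w'$ already contributes $w'.\yes$ must be absorbed. The crux is the derivable equation
\[
\mathcal{E}_v \vdash \yes ~=~ \yes + u.\yes \qquad \text{for every } u \in \Act^*,
\]
which I would prove by induction on $|u|$ using $(Y_a)$, $(D_a)$ and congruence: the step expands $a.\yes$ as $a.(\yes + u'.\yes) = a.\yes + a.u'.\yes$ via $(D_a)$ and then folds the leading $a.\yes$ back into $\yes$ using $(Y_a)$. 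Prepending $w'$ and applying $(D_a)$ once more yields $w'.\yes = w'.\yes + w'.u.\yes$, which is exactly the absorption required; the symmetric equation using $(N_a)$ handles $\no$. Duplicate summands arising from $A(m_1) \cap A(m_2)$ or $R(m_1) \cap R(m_2)$ are removed by $(A3)$, and $(A4)$ handles the degenerate case $A(m) = R(m) = \emptyset$. The main technical obstacle is establishing these absorption equations together with the combinatorial bookkeeping on antichains needed to match $\mathrm{NF}(m_1) + \mathrm{NF}(m_2)$ with $\mathrm{NF}(m_1 + m_2)$; the remainder of the induction is routine.
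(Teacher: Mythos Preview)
Your proof is correct, but it follows a genuinely different route from the paper. The paper works with a tree-shaped normal form $\sum_{a\in A} a.m_a\,[+\yes][+\no]$ (each $m_a$ again in normal form), then sharpens this to a \emph{reduced} normal form in which any $m_a$ is $v$-free whenever $v$ is a top-level summand. Completeness is obtained by showing, via induction on size and a case analysis on which verdicts are present, that two verdict-equivalent reduced normal forms must be structurally identical. By contrast, your canonical form is flat: a sum of terms $w.\yes$ and $w.\no$ indexed by the prefix-minimal strings of $L_a(m)$ and $L_r(m)$. Since these antichains are determined by the semantics, verdict-equivalent monitors share the same canonical form automatically, and the whole argument boils down to the absorption law $\yes = \yes + u.\yes$ (and its $\no$ counterpart), which you derive cleanly from $(Y_a)$ and $(D_a)$.

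Your argument is more elementary for the closed case: it avoids the intermediate reduced-normal-form machinery and the final structural matching argument. The paper's approach, however, is designed with the later development in mind: the tree-shaped (reduced) normal form extends naturally to open terms by adding variable summands at each level, and this is exactly what the paper exploits in the completeness proofs over open monitors. A flat normal form indexed by accepting/rejecting strings does not accommodate variables as readily, so while your route is a perfectly good shortcut to Theorem~\ref{thm:GrCompFin}, it would need to be replaced rather than refined for the open-term results.
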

%%%%%%%%%%%%%%%%%%
As a first step towards proving that $\mathcal{E}_{v}$ is complete over closed terms, we isolate a notion of normal form for monitors and prove that each closed monitor in \MonF~can be proved equal to a normal form using the equations in $\mathcal{E}_{v}$.
% It remains to be seen whether each possible verdict equivalence is provable by $E_{veq}$. For this proof we will use structural induction over the form of the monitors $m,n \in \MonF$. In order to be able to argue over the structure of terms in $\MonF$ we will first need some new terminology. Over the syntax given in \ref{sect:syntax}  we can define the notion of a \textit{normal from}. 

\begin{definition}\textbf{(Normal Form)}
A normal form is a closed term $m \in$ \MonF~of the form: 
$$\displaystyle\sum_{a \in A} a.m_a~ [+ \yes]~ [+ \no]$$
for some finite $A \subseteq \Act$,
%and $\{ m_a \mid a \in A   \}$,
where each $m_a$ is a term in normal form that is different from $\vend$.
\end{definition}
Note that, by taking $A=\emptyset$ in the definition above, we obtain
that $\vend$ is a normal form. In fact, it is the normal form with the
smallest size.

\begin{lemma}\label{lem:endVerdict}
The only normal form that does not contain occurrences of $\yes$ and $\no$ is $\vend$.
\end{lemma}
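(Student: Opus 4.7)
The plan is to argue by strong induction on the size (equivalently, the number of symbols, or the number of applications of the normal-form recursion) of a normal form $m$ that contains no occurrence of $\yes$ or $\no$, showing $m = \vend$.

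Since $m$ is a normal form, by definition it has the shape
\[
m = \sum_{a \in A} a.m_a ~[+\yes]~[+\no]
\]
for some finite $A \subseteq \Act$, with each $m_a$ a normal form different from $\vend$. The assumption that $m$ is $\yes$-free and $\no$-free immediately rules out the optional summands, so $m = \sum_{a \in A} a.m_a$ and moreover each subterm $m_a$ is itself $\yes$-free and $\no$-free (since any occurrence of $\yes$ or $\no$ inside some $m_a$ would appear in $m$).

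If $A = \emptyset$, then by the notational convention introduced earlier for indexed sums, $m = \sum_{a \in \emptyset} a.m_a = \vend$, and we are done. Otherwise, pick any $a \in A$. The subterm $m_a$ is a normal form, is strictly smaller than $m$, and is $\yes,\no$-free, so by the induction hypothesis $m_a = \vend$. This contradicts the defining requirement of a normal form that every $m_a$ be different from $\vend$. Hence the case $A \neq \emptyset$ is impossible, which completes the induction.

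The argument is essentially a straightforward structural induction; the only subtlety worth flagging is the base case, which is not ``$m$ is a verdict'' but rather ``the indexing set $A$ is empty,'' and this relies specifically on the convention that an empty sum equals $\vend$. The requirement built into the definition of normal form that each $m_a \neq \vend$ is exactly what prevents any nontrivial $\yes,\no$-free normal form from existing, so there is no real obstacle beyond unfolding the definition carefully.
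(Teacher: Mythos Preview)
Your proof is correct and follows essentially the same approach as the paper: induction on the size of the normal form, using the inductive hypothesis to force each $m_a = \vend$, which contradicts the normal-form requirement unless $A = \emptyset$. Your presentation is in fact slightly cleaner in explicitly noting that the $\yes,\no$-freeness rules out the optional summands and in highlighting the role of the empty-sum convention.
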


\begin{proof}

We proceed by induction on the size of a normal form $m$. Our base case is a verdict $v$. The only such verdict that does not contain an occurrence of either $\yes$ or $\no$ is $\vend$, which trivially satisfies the lemma. Assume now that $m = \displaystyle\sum_{a \in A} a.m_a$ is a normal form satisfying the statement of the lemma. Since each $m_a$ is $\yes$- and $\no$-free, by inductive hypothesis, $m_a = \vend$. This is only possible if $A = \emptyset$. Thus $m= \vend$. 
\end{proof}

%If $m = \vend$ then it is in normal form and it satisfies this lemma. If $m$ is some $\displaystyle\sum_{a \in A} a.m_a$ for some finite $A \subseteq \Act$ and $\{ m_a \mid a \in A   \}$. For each $a \in A$ the $m_a$ monitor does not contain any `$\yes$ or $\no$. Therefore by induction hypothesis we can say that $\forall a \in A, m_a = \vend$. Therefore $m = \displaystyle\sum_{a \in A} a.\vend$. However a normal form is not allowed to perform transitions that result to $\vend$. For $m$ this means that it is allowed no transitions since we have showed that if it was allowed any they would result in $\vend$. Therefore $m = \displaystyle\sum_{a \in \emptyset} a.m_a = \vend$.  

\begin{lemma}\textbf{(Normalization)}\label{lem:normClosed}
Each closed term $m \in $\MonF~is provably equal to some normal form $m'$ with $\depth(m') \leq \depth(m)$. 
%using axioms axioms $A1-4,~ E_a~(a \in \Act)$ and $D_a~ a \in \Act$
\end{lemma}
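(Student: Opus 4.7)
The plan is to prove the lemma by structural induction on $m$, with a nested secondary induction to handle the sum case. The base cases $m \in \{\vend, \yes, \no\}$ are immediate, since each verdict is itself in normal form (taking $A = \emptyset$) and has depth $0$. For $m = a.m_1$, the outer induction hypothesis yields a normal form $m_1'$ with $\mathcal{E}_v \vdash m_1 = m_1'$ and $\depth(m_1') \leq \depth(m_1)$. If $m_1' = \vend$, axiom $(E_a)$ rewrites $a.\vend$ to $\vend$, which is a normal form. Otherwise $a.m_1'$ itself matches the normal form template with $A = \{a\}$ and $m_a = m_1'$, and $\depth(a.m_1') = 1 + \depth(m_1') \leq 1 + \depth(m_1) = \depth(m)$.

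The substantive case is $m = m_1 + m_2$. The outer IH gives normal forms $m_1', m_2'$, and it then remains to establish a sub-lemma: the sum of two normal forms is provably equal to a normal form whose depth is at most the maximum of the two input depths. I will prove this sub-lemma by induction on $\max(\depth(m_1'),\depth(m_2'))$. Writing
\[
m_1' \;=\; \sum_{a \in A_1} a.n_a^1 \;[+\yes]\;[+\no], \qquad m_2' \;=\; \sum_{a \in A_2} a.n_a^2 \;[+\yes]\;[+\no],
\]
I use $(A1)$ and $(A2)$ to regroup $m_1' + m_2'$ by head action. For each $a \in A_1 \cap A_2$, axiom $(D_a)$ used right-to-left turns $a.n_a^1 + a.n_a^2$ into $a.(n_a^1 + n_a^2)$; the inner sum has strictly smaller max-depth than $\max(\depth(m_1'),\depth(m_2'))$, so the sub-induction normalises it to some $p_a$. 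If $p_a = \vend$, axiom $(E_a)$ collapses $a.p_a$ to $\vend$, which is then removed by $(A4)$; otherwise $a.p_a$ becomes a summand of the output. Axiom $(A3)$ handles any duplicate verdict summands $\yes + \yes$ or $\no + \no$ introduced by the merge. The resulting term matches the normal form template, and its depth is bounded by $\max(\depth(m_1'),\depth(m_2')) \leq \max(\depth(m_1),\depth(m_2)) = \depth(m)$.

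The key obstacle is precisely this sum case. A naive structural (or depth) induction on $m$ alone is insufficient, because when $m = m_1 + m_2$ the depths of the summands are not strictly smaller than $\depth(m)$, so merging two already-normalised terms cannot be reduced to a subproblem of the primary induction. The nested induction on the maximum depth of the two normal forms being merged is what unblocks things: the depth strictly decreases precisely when one crosses an outermost action prefix via $(D_a)$, and that is the step that drives the merging procedure forward. The side condition that each inner $m_a$ in a normal form is different from $\vend$ is preserved automatically by using $(E_a)$ to prune any $a.\vend$ produced during the merge.
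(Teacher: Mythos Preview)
Your proof is correct, and it arrives at the same result through a slightly different packaging of the induction than the paper's. The paper uses a single induction on the lexicographic ordering over pairs $(\depth(m), \mathit{size}(m))$: in the sum case $m = m_1 + m_2$, the summands have equal depth but strictly smaller size, so the hypothesis applies to them directly; then, after distributing with $(D_a)$, each merged inner term $m_{1b}' + m_{2b}'$ has strictly smaller depth than $m$, so the hypothesis applies again regardless of size. You instead keep the outer induction purely structural and push the work of merging two normal forms into a separate sub-lemma proved by induction on the maximum depth of the two inputs. Both arguments hinge on the same observation---that crossing an outermost action prefix via $(D_a)$ strictly decreases depth---but your decomposition isolates the merge step as a standalone statement, which is arguably cleaner and avoids the need to track size at all. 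The paper's single lexicographic induction is more compact but makes the reader do a bit more work to see why the ordering decreases in each sub-case.
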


\begin{proof} We prove the claim by induction on the lexicographic ordering $\prec$ over pairs $(\depth(m),size(m))$ of a monitor $m$, where $size(m)$ denotes the length of $m$ in symbols. We proceed with a case analysis on the form $m$ may have. Our induction basis will be a verdict $v$. If $v = \vend$ then the monitor is already in normal form. Otherwise: 
If $m = v$ for some verdict $v =\yes$ or $\no$ then it is proved equal to $v+ \vend$ (from axiom $A4$). Indeed this normal form of a non-$\vend$ verdict has depth less or equal to that of the initial monitor. 

Our induction hypothesis is that, for all monitors $m_0 \in$ \MonF~up such that $(\depth(m_0),size(m_0)) \prec (\depth(m), size(m))$, we have that $\mathcal{E}_{v} \vdash m_0 = m_0'$ with $m_0'$ in normal form and $\depth(m_0') \leq \depth(m_0)$. 

Assume that $m = a.n$ then clearly $n$ has depth less than that of $m$ and therefore by the inductive hypothesis $\mathcal{E} \vdash n \simeq n'$ where $n'$ is in normal form and of depth less or equal than $n$. If $n' = \vend$ then $\mathcal{E}_v \vdash m = \vend$ (using $E_a$) which is a normal form of smaller depth. Otherwise $a.n'$ is also a normal form. 

Assume that $m = m_1 + m_2$. 
% In this case we have that the monitors $m_1,m_2$ might have depth equal to that of $m$. However they both  have size strictly smaller than $m$ and therefore even in the case of equal depth they appear earlier in the lexicographic ordering of pairs we have chosen to apply induction on. 
 By applying the induction hypothesis we have that $$\mathcal{E}_v \vdash m_1 = \sum_{a \in A_1} a.m_{1a}~ [+\yes][+\no] ~\text{and}~ \mathcal{E}_v \vdash m_2 = \sum_{a \in A_2} a.m_{2a}~ [+\yes][+\no].$$  Therefore by applying axioms from $\mathcal{E}_v$ we can rewrite $m$ as: 
    
    $$m =\displaystyle\sum_{a \in A_1\setminus A_2} a.m_{1a}' + \sum_{a \in A_2\setminus A_1} a.m_{2a}' + \sum_{b \in A_1\cap A_2} b.(m_{1b}' + m_{2b}') ~[+\yes][+\no].$$ 

Where by the statement of the lemma we have:  
$$depth\left(\displaystyle\sum_{a \in A_1\setminus A_2} a.m_{1a}'\right) \leq depth (m_1) \leq \depth(m)$$
 and similarly:
 $$\depth\left(\displaystyle\sum_{a \in A_2\setminus A_1} a.m_{2a}'\right) \leq depth (m_2) \leq \depth(m).$$ 
 It remains to show that the summand $\displaystyle\sum_{b \in A_1\cap A_2} b.(m_{1b}' + m_{2b}')$ is equal to a normal form and that it has depth less or equal to that of $m$. However, this is not trivial to see, since the terms $m_{1a}'$ and $m_{2b}'$ have been rewritten by the normalization procedure and therefore we cannot guarantee that their summation has size less of that of $m$ (applying the inductive hypothesis only results in terms of smaller depth but not size as we saw for instance in the case of normalization of verdicts). However we have the following: $$\depth( m_{1b}' + m_{2b}') = max[\depth(m_{1b}'),\depth(m_{2b}')] $$ $$ <~ max[\depth(m_1'),\depth(m_2')].$$

 The later of the above quantities is guaranteed to be less than or equal to $\depth(m)$ by the inductive hypothesis. Therefore we still have that the monitor $m_{1b}' + m_{2b}'$ appears earlier in the lexicographic ordering and therefore $\mathcal{E}_v$ can prove it equal to a normal form of smaller depth. We will call this normal form $m_b'$. We have therefore that that $\depth(m_b') \leq depth (m_{1b}' + m_{2b}') $.  We now have the necessary result that $$\mathcal{E}_v \vdash m = m' = \displaystyle\sum_{a \in A_1 \cup A_2} a.m_a ~ [+\yes][+\no],$$ where each $m_a$ is in normal form and of depth strictly less than that of $m$ which means that $\depth(m') \leq \depth(m)$ and we are done. \end{proof}

Since now we have that each term in \MonF~is provably equal to a normal form, we might attempt to prove Theorem \ref{thm:GrCompFin} by arguing that the normal forms of two verdict equivalent monitors are identical. However, it turns out that this is not true. Consider, for example, the case were $m = \yes$ and $n = \yes + a.a.a.\yes$. These two monitors are clearly verdict equivalent as $L_a(m)=L_a(n) = \Act^*$ and $L_r(m)=L_r(n) = \emptyset$. However, even though they are in normal form they are not syntactically equal. Intuitively, $a.a.a.\yes$ in monitor $n$ is redundant, as it can be absorbed by  $\yes$.  In what follows, we will show how to reduce the normal form of a monitor further using equations in $\mathcal{E}_v$ in order to eliminate such redundant sub-terms.

\begin{lemma}\label{lem:ReducedNF}  The following statements hold for any monitor in \MonF: 
\begin{enumerate}
    \item For each action $a$, if $m$ is a closed $\no$-free term then  $\mathcal{E}_{v} \vdash \yes + a.m = \yes$.
    \item For each action $a$, if $m$ is a closed monitor that contains occurrences of both $\yes$ and $\no$ then $\mathcal{E}_{v} \vdash \yes + a.m = \yes + a.n$ for some $\yes$-free closed monitor n. 
    \item For each action $a$, if $m$ is a closed $\yes$-free term then  $\mathcal{E}_{v} \vdash \no + a.m = \no$.
    \item For each action $a$, if $m$ is a closed monitor that contains occurrences of both $\yes$ and $\no$ then $\mathcal{E}_{v} \vdash \no + a.m = \no + a.n$ for some $\no$-free closed monitor n. 
\end{enumerate}
  
\end{lemma}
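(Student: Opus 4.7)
The plan is to prove all four statements simultaneously by strong induction on $\depth(m)$, observing that statements~3 and~4 are obtained from statements~1 and~2 by swapping the roles of $\yes$ and $\no$ (and invoking $N_a$ in place of $Y_a$), so I would write the argument in detail only for statements~1 and~2. Before entering the induction, I would use Lemma~\ref{lem:normClosed} to replace $m$ by its normal form, which has no greater depth, so that throughout the proof $m$ may be assumed to have the shape $\sum_{b\in B} b.m_b\,[+\yes]\,[+\no]$ with each $m_b$ a normal form of strictly smaller depth.

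For statement~1, the base case $\depth(m)=0$ forces $m \in \{\vend,\yes\}$ since $m$ is $\no$-free: one finishes via $E_a$ together with $A4$ in the first case, and by $Y_a$ directly in the second. In the inductive step, no $\no$ may appear in the normal form, and distributing with $D_a$ gives $a.m = \sum_b a.b.m_b\,[+a.\yes]$; any top-level $a.\yes$ summand is absorbed by $\yes + a.\yes = \yes$. For each remaining $a.b.m_b$, I would invoke the induction hypothesis on the $\no$-free $m_b$ with the specific action $b$ to obtain $\yes + b.m_b = \yes$, and then chain
\[
\yes + a.b.m_b \;=\; \yes + a.\yes + a.b.m_b \;=\; \yes + a.(\yes + b.m_b) \;=\; \yes + a.\yes \;=\; \yes,
\]
using $Y_a$ at the first and last steps and $D_a$ in the middle.

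For statement~2 the base case is vacuous, because no single verdict contains both $\yes$ and $\no$. In the inductive step I would distribute $a$ across the normal form of $m$ and treat each summand case-by-case: any top-level $a.\yes$ is absorbed by $Y_a$; each $a.b.m_b$ with $m_b$ $\no$-free is absorbed by the chain of rewrites used in statement~1; each $a.b.m_b$ with $m_b$ $\yes$-free is kept unchanged; and for each $a.b.m_b$ with $m_b$ containing both verdicts I would apply the induction hypothesis for statement~2 to $m_b$ (which has depth strictly less than $\depth(m)$) to obtain a $\yes$-free $n_b$ with $\yes + b.m_b = \yes + b.n_b$, and then lift the rewrite one prefix up via
\[
\yes + a.b.m_b \;=\; \yes + a.\yes + a.b.m_b \;=\; \yes + a.(\yes + b.m_b) \;=\; \yes + a.(\yes + b.n_b) \;=\; \yes + a.b.n_b,
\]
again using only $D_a$ and $Y_a$. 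Collecting the surviving summands together with any top-level $a.\no$ yields $\yes + a.m = \yes + a.n$ with $n = \sum_{b\in B'} b.n'_b \,[+\no]$, which is manifestly $\yes$-free.

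The main obstacle I foresee is the bookkeeping in statement~2: even when $\yes$ does not occur at the top level of $m$, it may sit arbitrarily deep inside some $m_b$, and the induction hypothesis has to commute neatly with the outer $a.\_$ prefix. The small identity $\yes + a.(\yes + X) = \yes + a.X$, an immediate consequence of $Y_a$ together with $D_a$, is precisely what transports a reduction at depth $d-1$ to one at depth $d$; once this pattern is isolated, statements~1 and~2 (and, by symmetry, statements~3 and~4) fall out together from a single induction.
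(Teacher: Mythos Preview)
Your proposal is correct and follows essentially the same approach as the paper: the paper proves statements~1 and~2 by structural induction on $m$ (rather than induction on $\depth(m)$ after pre-normalizing), but the core of both arguments is the identical lifting trick $\yes + a.b.m' = \yes + a.\yes + a.b.m' = \yes + a.(\yes + b.m')$, using $Y_a$ and $D_a$ to push the induction hypothesis under one more prefix. Your pre-normalization step is a harmless detour that slightly streamlines the case analysis for sums.
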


\begin{proof}
We only prove statements \textit{1} and \textit{2} as the proofs of \textit{3} and \textit{4} are similar. We will use structural induction on $m$.
\begin{enumerate} 
    \item If $m$ is a verdict other than $\no$ then the claim follows using axioms $E_a,~Y_a$ and $A4$ appropriately. If $m =b.m'$ where $m'$ is $\no$-free then $\mathcal{E}_v$ derives: $$\yes +a.m \overset{\mathrm{Y_a}}{=}  \yes + a.\yes + a.m \overset{\mathrm{D_a}}{=} \yes + a.(\yes + b.m')  \overset{\mathrm{\textbf{I.H.}}}{=} \yes +a.\yes \overset{\mathrm{Y_a}}{=} \yes .$$ If $m$ is of the form $m_1 +m_2$ where $m_1,m_2$ are $\no$-free, then it suffices to apply axiom $D_a$ and the induction hypothesis.
    \item Assume that $m$ contains occurrences of both $\yes$ and $\no$. We will show that $\mathcal{E}_v \vdash \yes + a.m = \yes + a.n$ for some $\yes$-free monitor $n$. 
    
   If $m=v$ for some verdict $v$ then the claim follows vacuously.

   If $m = b.m'$ for some $m'$ that contains both $\yes$ and $\no$ then there is some $\yes$-free $n'$, such that $n = b.n'$, and $\mathcal{E}_v$ derives : 
    $$ \yes + a.m = \yes + a.b.m' \overset{\mathrm{Y_a}}{=} \yes + a.\yes + a.b.m' \overset{\mathrm{D_a}}{=} \yes + a.(\yes + b.m')$$ 
   
     and for some $\yes$-free $n'$ s.t. $n = b.n'$: 
   $$  \overset{\mathrm{I.H}}{=} \yes + a.(\yes + b.n') \overset{\mathrm{D_a}}{=} \yes + a.\yes + a.b.n' \overset{\mathrm{Y_a}}{=} \yes +a.n .$$

Finally if $m = m_1 + m_2$ then $\mathcal{E}_v$ can derive:

$$\yes + a.(m_1 + m_2) \overset{\mathrm{D_a}}{=} \yes + a.m_1 + a.m_2.$$ We now isolate the following cases based on what verdicts the monitors $m_i,~i \in \{ 1,2\}$ contain. If any $m_i, ~i \in \{ 1,2\}$ is both $\yes$- and $\no$-free it must be equal to $\vend$ as it is in normal form and therefore $\mathcal{E}_v \vdash \yes + a.m_i = \yes$. If $m_i, ~i \in \{ 1,2\}$ contains occurrences of both $\yes$ and $\no$, then the induction hypothesis yields that $$\mathcal{E}_v \vdash \yes + a.m_i = \yes + a.n_i$$ for some $\yes$-free $n_i$. If $m_i, i \in \{ 1,2\}$ is $\yes$-free we already have the result that $\mathcal{E}_v \vdash \yes + a.m_i = \yes + a.n_i$ for some $\yes$-free monitor $n_i$ (which in this case coincides with $m_i$). Finally, if some $m_i$ is $\no$-free then, by statement \textit{1} in the lemma, 

$$ \mathcal{E}_v \vdash \yes + a.m_i = \yes.$$

Combining these observations, we have that:
$$ \mathcal{E}_v \vdash \yes + a.m = \yes  +a.n_1 + a.n_2~~$$ where both $n_1$ and $n_2$ are $\yes$-free and therefore by axiom $D_a$: 
$$ \mathcal{E}_v \vdash \yes+ a.m = \yes + a.n$$ for some $\yes$-free monitor $n$.

%    
%    If $m = \yes +\no$ ($m$ has to contain occurrences of both $\yes$ and $\no$) then by $A3$ we have the requested result. We start by defining as $m'$ the normal form of $m$. We can rewrite $m'$ as $\displaystyle\sum_{a_1 \in A_1} a_1.m_{a_1} + \sum_{a_2 \in A_2} a_2.m_{a_2}$ where $m_{a_1}$ is $\no$-free and $m_{a_2}$ is $\yes$-free. This form can be constructed by utilizing axiom $D_a$. Note that there might be actions $a$ in both $A_1$ and $A_2$ which means that this is not a normal form. By the previous case of this lemma this reduces to :$$ \displaystyle\sum_{a_2 \in A_2} a_2.m_{a_2} + \yes$$ and each $m_{a_2}$ is $\yes$-free and not equal to $\vend$.
\end{enumerate}
\end{proof}
% It suffices to prove that any sequence $s \in L_a(m)$ i.e any occurrence of ```$\yes'- in $m$ (similarly for occurrences of``$\no'--) can be found in the right hand side  $\yes$. We are presenting an inductive argument over the size of the sequence.  

% For $s = (s1)$ we can manipulate the right hand side ```$\yes'- verdict ,by utilising axioms $Y_a, A3, D_a$ in the following manner: $$\yes = \yes + a.\yes = \yes + a.(\yes+s1.\yes)$$ $$
%  =^{(D_{s1})} \textbf{\yes + a.\yes} + a.s1.\yes = \yes + a.s1.\yes$$
% Which then we can repeat inductively for any sequence size. 

The above lemma suggests the notion of a \textit{reduced} normal form. 

\begin{definition}\label{ReducedNF_Closed}\textbf{(Reduced normal form) }
A reduced normal form is a term $$m  = \displaystyle\sum_{a \in A} a.m_a ~[+\yes] ~[+\no]$$ in normal form, where if $v \in \{\yes,\no\}$ is a summand of $m$ then each $m_a$ is $v$-free and in reduced normal form.
\end{definition}
\begin{remark}
Note here that if $\displaystyle\sum_{a \in A} a.m_a +\yes +\no$ is in reduced normal form then $A = \emptyset$.
\end{remark}
\begin{lemma}\label{lem:realReducedNF_Closed}
Each monitor in normal form is provably equal to a monitor in reduced normal form.
\end{lemma}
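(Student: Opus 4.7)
The plan is to proceed by strong induction on $\depth(m)$. The base case $\depth(m)=0$ forces $m\in\{\vend,\yes,\no,\yes+\no\}$, and each such term is vacuously in reduced normal form (its set of action-prefixed summands is empty, so the implication in Definition \ref{ReducedNF_Closed} is trivial). For the inductive step, write $m=\sum_{a\in A}a.m_a\,[+\yes]\,[+\no]$ with $A\neq\emptyset$ and apply the induction hypothesis to each $m_a$, whose depth is strictly less than $\depth(m)$, to obtain a reduced normal form $m_a'$ with $\mathcal{E}_v\vdash m_a=m_a'$. Throughout the argument I use that $m_a'\neq\vend$ as a term: by Lemma \ref{lem:endVerdict}, the non-$\vend$ normal form $m_a$ contains an occurrence of $\yes$ or $\no$, so by soundness (Theorem \ref{thm:GroundSound}) at least one of $L_a(m_a'),L_r(m_a')$ is non-empty, whereas $L_a(\vend)=L_r(\vend)=\emptyset$.

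Next I case split on which verdicts appear as summands of $m$. If neither $\yes$ nor $\no$ is a summand, then $\sum_{a\in A}a.m_a'$ is already a reduced normal form. If both are summands, each $a.m_a'$ can be eliminated outright: when $m_a'$ is $\no$-free (resp.\ $\yes$-free), statement 1 (resp.\ 3) of Lemma \ref{lem:ReducedNF} drops the summand; and when $m_a'$ contains both verdicts, statement 2 yields a $\yes$-free $n$ with $\yes+a.m_a'=\yes+a.n$, after which statement 3 gives $\no+a.n=\no$, so the summand again collapses. The outcome is the reduced normal form $\yes+\no$.

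The remaining case is when exactly one of $\yes,\no$ is a summand; suppose it is $\yes$ (the other is symmetric). For each $a\in A$: if $m_a'$ is $\no$-free, drop the summand via statement 1 of Lemma \ref{lem:ReducedNF}; if $m_a'$ is already $\yes$-free, keep it unchanged; and if $m_a'$ contains both verdicts, statement 2 rewrites $\yes+a.m_a'$ as $\yes+a.n$ for some $\yes$-free closed $n$. The main obstacle lies in this final subcase, since $n$ comes with no structural guarantees and need be neither normal nor reduced. I would resolve it by noting that the constructive proof of Lemma \ref{lem:ReducedNF} in fact yields $\depth(n)\leq\depth(m_a')<\depth(m)$; applying Lemma \ref{lem:normClosed} to $n$ then produces a $\yes$-free normal form $n'$ with $\depth(n')\leq\depth(n)$ (normalization uses only $A1$--$A4$, $D_a$, $E_a$, none of which introduces $\yes$), and the induction hypothesis, valid at this strictly smaller depth, delivers a reduced normal form $n''$. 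Reducing the $\yes$-free $n'$ calls only statements 3 and 4 of Lemma \ref{lem:ReducedNF}, so $n''$ is still $\yes$-free; moreover $n''\neq\vend$ because $L_r(n)=L_r(m_a')$ is non-empty (as $m_a'$ contains $\no$). Substituting $a.n''$ for $a.m_a'$ for every such $a$ in $m$ yields the desired reduced normal form, completing the induction.
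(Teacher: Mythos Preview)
Your proof is correct and follows the same approach as the paper---induction on the depth of the normal form, invoking Lemma~\ref{lem:ReducedNF}---but you have spelled out considerably more detail than the paper's one-line proof. In particular, you correctly identify and resolve a subtlety the paper glosses over: statement~2 of Lemma~\ref{lem:ReducedNF} only guarantees a $\yes$-free monitor $n$, not one in (reduced) normal form, and your observation that the construction in that lemma preserves a depth bound, together with re-normalization and a soundness argument for $\yes$-freeness, closes the gap cleanly.
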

\begin{proof}
The claim follows from Lemma \ref{lem:ReducedNF}, using induction on the depth of the normal form. \end{proof}

We are now ready to complete the proof of Theorem \ref{thm:GrCompFin}.

\begin{proof}[Proof of Theorem \ref{thm:GrCompFin}]\label{thm:ProofOfMainThm}
Since each monitor is provably equal to a reduced normal form (Lemma \ref{lem:realReducedNF_Closed}), and by the soundness of $\mathcal{E}_v$ (Theorem \ref{thm:GroundSound}), it suffices to prove the claim for verdict equivalent reduced normal forms $m$ and $n$. We proceed by induction on the sum of the sizes of $m$ and $n$, and a case analysis on the possible form $m$ may have. 
%Recall that we consider terms up to $A4$ and we thus remove $end$ summands in reduced normal forms. 
\begin{enumerate}
    \item Assume that $m = \yes +\no \simeq n$. Since $L_a(m) = L_r(m) = \Act^*$, it follows that $n$ has both $\yes$ and $\no$ as summands. Since $n$ is in reduced normal form it must be the case that $n = \yes +\no$, and we are done.
    \item Assume that $m = \displaystyle\sum_{a \in A}a.m_a +\yes \simeq n$, where, for all $a \in A,~ m_a$ is $\yes$-free and in reduced normal form and $n = \displaystyle\sum_{b \in B} b.n_b [+\yes] [+\no]$, where each $n_b$ is in reduced normal form and is $v$-free, if $v$ is a summand of $n$. \label{case:4}
Since $\varepsilon \in L_a(m)\setminus L_r(m)$, we have that $\yes$ is a summand of $n$ and $\no$ is not. Thus $n = \displaystyle\sum_{b \in B}b.n_b +\yes$, and each $n_b$ is $\yes$-free. We claim that: 
    \begin{enumerate}[start=1,label={(\bfseries C\arabic*)}]

        \item $A=B$ and \label{claim:A=B}
        \item for all $a \in A,~ m_a \simeq n_a$.
    \end{enumerate}
    To prove that $A=B$, we assume that $a \in A$. Since $m_a$ is $\yes$-free and different from $\vend$, there is some $s \in \Act^*$ such that $a.s \in L_r(m)$. As $m \simeq n$, we have that $a.s \in L_r(n)$. We conclude that $a \in B$ and $s \in L_r(n_a)$. By symmetry, claim \ref{claim:A=B} follows.

    We now show that $m_a \simeq n_a$ for each $a \in A$. Since $m_a$ and $n_a$ are $\yes$-free, $L_a(m_a) = L_a(n_a) = \emptyset$. We pick now some arbitrary $s \in L_r(m_a)$ ($L_r(m_a) \neq \emptyset$ because $m_a \neq \vend$). This means that $a.s \in L_r(m) = L_r(n)$ and therefore $s \in L_r(n_a)$. The claim follows by symmetry. By the induction hypothesis, $\mathcal{E}_{v} \vdash m_a= n_a$ for each $a \in A = B$. Therefore
     $$m = \displaystyle\sum_{a \in A}a.m_a +\yes = \displaystyle\sum_{b \in B}b.n_b +\yes  = n$$ 
     is provable from $\mathcal{E}_{v}$ and we are done. 
\item We are left with the case where $m = \displaystyle\sum_{a \in A}a.m_a +\no \simeq n$ and the case $m = \displaystyle\sum_{a \in A}a.m_a$. The proofs for those cases are similar to the one for case \textit{2} and are thus omitted. 
\qedhere
\end{enumerate}
\end{proof}

\subsection{Axiomatizing $\omega$-verdict equivalence} When $\Act$ is infinite, by Lemma~\ref{lem:1} and Theorem~\ref{thm:GrCompFin}, $\mathcal{E}_v$ gives a ground-complete axiomatization of $\omega$-verdict equivalence as well. However, when $\Act$ is finite, $\mathcal{E}_{v}$ is not powerful enough to prove all the equalities between closed terms that are valid with respect to $\omega$-verdict equivalence. The new axioms needed to achieve a ground complete axiomatization in this setting are:  \\
\begin{minipage}[c]{0.5\textwidth}
\centering
   $\mathbf{(Y_{\omega})}$ ~$\yes = \displaystyle\sum_{a \in \Act} a.\yes$

\end{minipage} 
\begin{minipage}[c]{0.5\textwidth}
\centering
    $\mathbf{(N_{\omega})}$~ $\no = \displaystyle\sum_{a \in \Act} a.\no$.

\end{minipage} 
The resulting axiom system is called $\mathcal{E}_{\omega}$. 
\begin{remark}
The soundness of the new axioms is trivially shown since $$L_a(\yes)\cdot \Act^{\omega} =\Act^* \cdot \Act^{\omega} = \Act^+\cdot \Act^{\omega} = L_a(\displaystyle\sum_{a \in \Act} a.\yes)\cdot \Act^{\omega}$$ while $L_r(\yes) =L_r(\displaystyle\sum_{a \in \Act} a.\yes) = \emptyset$ (and symmetrically for the $N_{\omega}$ equation).
\end{remark}
\begin{theorem}\label{thm:GrCompOmega}
$\mathcal{E}_{\omega}$ is ground complete for $\simeq_{\omega}$ over closed terms when $\Act$ is finite. That is if $m,n$ are closed monitors in \MonF~and $m \simeq_{\omega} n$ then $\mathcal{E}_{\omega} \vdash m = n$.
\end{theorem}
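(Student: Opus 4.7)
The plan is to follow the blueprint of the proof of Theorem~\ref{thm:GrCompFin}, adapting the notion of reduced normal form so as to reflect the new axioms $Y_\omega$ and $N_\omega$. Call a closed monitor $m$ an \emph{$\omega$-reduced normal form} if it is a reduced normal form $\sum_{a \in A} a.m_a~[+\yes]~[+\no]$, each $m_a$ is itself an $\omega$-reduced normal form, and moreover (i) whenever $\yes$ is not a summand of $m$, it is not the case that $A = \Act$ and every $m_a$ has $\yes$ as a summand, together with (ii) the symmetric condition for $\no$. Intuitively, condition (i) forbids the situation that, in the proof of Lemma~\ref{lem:1}, makes $\simeq$ strictly finer than $\simeq_\omega$.

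The first step is to show that every closed monitor is provably equal from $\mathcal{E}_{\omega}$ to an $\omega$-reduced normal form. Starting from a reduced normal form (Lemma~\ref{lem:realReducedNF_Closed}), I would inductively $\omega$-normalize each summand $m_a$; then, whenever condition (i) fails, so that $A = \Act$ and each $m_a$ has $\yes$ among its summands, the derivation
\[
\sum_{a \in \Act} a.(\yes + m_a') \;=\; \sum_{a \in \Act} a.\yes + \sum_{a \in \Act} a.m_a' \;=\; \yes + \sum_{a \in \Act} a.m_a',
\]
justified by $D_a$ and $Y_\omega$, brings $\yes$ to the top level; one then re-applies Lemma~\ref{lem:realReducedNF_Closed} to restore reduced form. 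The symmetric procedure handles (ii). A well-founded measure, for instance lexicographic order on depth together with the number of pending violations of (i)--(ii), ensures termination.

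The heart of the proof is the following \emph{key lemma}: if $m$ is in $\omega$-reduced normal form and $L_a(m) \cdot \Act^\omega = \Act^\omega$, then $\yes$ is a summand of $m$ (and symmetrically for $\no$). The argument is by structural induction on $m = \sum_{a \in A} a.m_a~[+\no]$. If $\yes$ were not a summand, then $\varepsilon \notin L_a(m)$, so each infinite word $a\beta$ would need a nonempty prefix $as$ in $L_a(m)$; inspecting the operational semantics of the normal form shows that this forces $A = \Act$ and $L_a(m_a) \cdot \Act^\omega = \Act^\omega$ for every $a \in \Act$. By the induction hypothesis each $m_a$ would then have $\yes$ as a summand, violating condition (i).

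With the key lemma in hand, given $\omega$-verdict equivalent monitors $m,n$ in $\omega$-reduced normal form, both must agree on which of $\yes,\no$ appear as top-level summands. The rest of the argument then mirrors the proof of Theorem~\ref{thm:GrCompFin}: one uses the first-letter decomposition of $L_a(m)\cdot \Act^\omega$ and $L_r(m) \cdot \Act^\omega$ to match the action summands and argues inductively that the corresponding sub-terms $m_a$ and $n_a$ are themselves $\omega$-verdict equivalent, hence provably equal. I expect the main technical obstacle to be the termination and correctness of the $\omega$-normalization procedure, since extracting a verdict at one level may necessitate re-normalization both of siblings and, in principle, of further ancestors before condition (i) or (ii) is re-examined at the top.
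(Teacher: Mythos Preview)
Your approach is correct but takes a genuinely different route from the paper. The paper does \emph{not} introduce a stronger normal form: it works with the same reduced normal forms as in Theorem~\ref{thm:GrCompFin} and, in the main induction, directly handles the new cases where one side has a verdict summand while the other does not. For instance, when $m=\yes+\no$ and $n=\sum_{a\in A}a.n_a$, the paper argues in place that $A=\Act$ and each $n_a\simeq_\omega\yes+\no$, applies the induction hypothesis, and then uses $D_a$, $Y_\omega$, $N_\omega$ to rewrite $n$ to $\yes+\no$. Similarly, for $m=\yes+\sum_{a\in A}a.m_a$ against $n=\sum_{b\in B}b.n_b$ with no verdict summand, it shows $B=\Act$, that $\yes+m_a\simeq_\omega n_a$ for $a\in A$ and $\yes\simeq_\omega n_b$ for $b\notin A$, and again concludes via the induction hypothesis and $Y_\omega$. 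In effect, the paper performs your $\omega$-normalization ``lazily'' inside the case analysis.

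Your version front-loads this work into the $\omega$-reduced normal form and the key lemma, which makes the final induction identical in shape to that of Theorem~\ref{thm:GrCompFin}; this is conceptually cleaner and yields the pleasant fact that top-level verdicts are determined by the $\omega$-languages. The price is the normalization argument, but you are being overly cautious there: once each $m_a$ is recursively $\omega$-normalized, a single extraction of $\yes$ (and/or $\no$) suffices. After extracting $\yes$, the residual $m_a'$ are already $\yes$-free (because $m_a$ was in reduced normal form with $\yes$ as summand), so the result is immediately in reduced normal form, and conditions (i)--(ii) are easily seen to hold for the new term without any further re-normalization. No elaborate well-founded measure is needed beyond structural induction on depth.
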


\begin{proof} 

By Lemma \ref{lem:realReducedNF_Closed} we may assume that $m$ and $n$ are in reduced normal form. We will prove the claim by induction on the sizes of $m$ and $n$ for two $\omega$-verdict equivalent monitors $m,n$ in reduced normal form. 

We will proceed by a case analysis of the form $m$ may have and limit ourselves to presenting the proof for a few selected cases that did not arise in the proof of Theorem \ref{thm:GrCompFin}. 
\begin{itemize}
    \item Assume that $m = \yes +\no \simeq_{\omega} \displaystyle\sum_{a \in A}a.n_a = n$. First of all note that $A= \Act$. Indeed if $a \in \Act\setminus A$ then $a^{\omega} \in (L_a(m)\cdot \Act^{\omega})\setminus (L_a(n)\cdot \Act^{\omega})$ which contradicts our assumption that $m \simeq_{\omega} n$. Moreover, it is not hard to see that, for each $a \in \Act$, $L_a(n_a)\cdot \Act^{\omega} = L_r(n_a)\cdot \Act^{\omega} = \Act^{\omega}$.
    This means that, for each $a \in \Act$, $n_a \simeq_{\omega} \yes +\no$. By induction, for each $a \in \Act$, we have that $\mathcal{E}_{\omega} \vdash n_a = \yes +\no$. Thus, $\mathcal{E}_{\omega} \vdash n = \displaystyle\sum_{a \in \Act} a.(\yes +\no) $. From axiom $D_a$, $\mathcal{E}_{\omega} \vdash n = \displaystyle\sum_{a \in \Act} a.\yes + \sum_{a\in \Act} a.\no $ which from our two new axioms $Y_{\omega},N_{\omega}$ yields  $\mathcal{E}_{\omega} \vdash n = \yes +\no = m$, and we are done.  
    \item Assume that $m = \yes +\no \simeq_{\omega} \displaystyle\sum_{a \in A}a.n_a +\yes$, with each $n_a$ being $\yes$-free and different from $\vend$. Again, reasoning as in the previous case, we have that $A= \Act$. Moreover for each $a \in \Act$, $L_r(n_a)\cdot \Act^{\omega}= \Act^{\omega}$. Following the same argument as above only for the $\no$ verdict we arrive at the conclusion that $\mathcal{E}_{\omega} \vdash n = \yes +\displaystyle\sum_{a \in \Act} a.\no = \yes +\no = m$. 
    \item The case  $m = \yes +\no \simeq_{\omega} \displaystyle\sum_{a \in A}a.n_a +\no$ is symmetrical to the one above. 
    \item Assume that $m = \yes + \displaystyle\sum_{a \in A} a.m_a \simeq_{\omega} \sum_{b \in B} b.n_b$ where both $m$ and $n$ are in reduced normal form. First of all, we follow an argument similar to the first case analyzed above, to the point where $\mathcal{E}_{\omega} \vdash n = \yes +\displaystyle\sum_{b \in B'} b.n_b'$ for some $\yes$-free monitors $n_b'$. 
For the proof of this final case we will use the following facts, whose validity can be easily established:  
\begin{enumerate}[start=1,label={(\bfseries S\arabic*)}]

        \item $B = \Act$,
        \item for all $b \in \Act$, $L_a(n_b) = \Act^{\omega}$, and
        \item for all $a \in A$, $L_r(m_a) = L_r(n_a)$. 
\end{enumerate}

So, for each $a \in A$, $\yes +m_a \simeq_{\omega} n_a$. 
Since both of these monitors have smaller depth that the original ones, we have that by induction: 
\begin{equation}\label{eq:1}
\mathcal{E}_{\omega} \vdash \yes+m_a = n_a,~ \forall~ a \in A \texttt{.} 
\end{equation}

For each $b \in \Act\setminus A$, we have that $\yes \simeq_{\omega} n_b$ (because $L_r^{\omega}(n_b) = \emptyset$). Again, we have that, by induction: 
\begin{equation}\label{eq:2}
\mathcal{E}_{\omega} \vdash \yes = n_a,~ \forall~ b \in \Act\setminus B \texttt{.}
\end{equation}

So: 
$$ \mathcal{E}_{\omega} \vdash n = \displaystyle\sum_{b \in \Act} b.n_b = \sum_{a \in A} a.n_a + \sum_{b \in \Act\setminus A} b.\yes $$
By equations (\ref{eq:1}) and (\ref{eq:2}): 
$$ \mathcal{E}_{\omega} \vdash n = \displaystyle\sum_{a \in A} a. (\yes + m_a) + \sum_{b \in \Act\setminus A} b.\yes$$

$$= \displaystyle\sum_{a \in A} a.\yes + \sum_{a \in A} a.m_a +\sum_{b \in \Act\setminus A} b.\yes  $$ 

$$ = \displaystyle\sum_{a \in \Act} a.\yes + \sum_{a \in A} a.m_a = \yes + \sum_{a \in A} a.m_a,$$ using axiom $Y_{\omega}$, and we are done. 

The above  analysis can be applied symmetrically for the cases: 
\begin{itemize}
    \item $m = \no + \displaystyle\sum_{a \in A} a.m_a \simeq_{\omega} \sum_{b \in B} b.n_b = n$ and
    \item $m = \displaystyle\sum_{a \in A} a.m_a \simeq_{\omega} \sum_{b \in B} b.n_b = n.$
\end{itemize}

% The only non-trivial case we will analyse is that one of theorem \ref{thm:comp}, case \ref{case:4}. 
% \textbf{Claim:} If $m =_{\omega}n$ then  $\forall s \in L_a(m)~ \exists k_0 : s.\displaystyle\sum_{a \in \Act} a.\sum_{a \in \Act}\ldots \yes, _{k_0 -times} \in L_a(n)$

% \begin{proof}
% Suppose that $\exists s_0 \in L_a(m): \forall k \geq 1 ~ s_0.\displaystyle\sum_{a \in \Act} a.\sum_{a \in \Act} a.\ldots a.\yes, _{k -times} \notin L_a(n) $ then by selecting a $k$ large enough so that the size of $s.s_k$ is larger than the depth of $n$ we know that $n$ cannot have a $\yes$ verdict for any sequence starting with $s.s_{k_0}$ while $m$ accepts all of them, which is a contradiction. 
\end{itemize}

This completes the proof. \end{proof}

% Therefore it suffices to apply axiom $Y_\omega$, $k_s$ times.

% The above procedure can be applied for any occurrence of $\yes$ and $no \in m,n$. Since each monitor will contain a finite amount of verdicts the above procedure will eventually produce two identical monitors. 
% \end{proof}
% \begin{note}
% One can confirm here that using axioms $Y_{\omega},N_{\omega}$ the families of axioms $Y_a,N_a$ are provable, while the converse does not hold. 
% \end{note}

\section{Open Terms}
% !TeX root = axioms_Monitors_VeEq.tex
%% Last modified: apr 21 10:07:34 GMT 2021

Thus far, we have only studied the completeness of equational axiom systems for $\simeq$ and $\simeq_{\omega}$ over closed terms. However, in our grammar we allow for variables and it is natural to wonder whether the ground-complete axiomatizations we have presented in Theorems \ref{thm:GrCompFin} and \ref{thm:GrCompOmega} are also complete for verdict equivalence and $\omega$-verdict equivalence over open terms. Unfortunately, this turns out to be false. Indeed, the equation 
\[
(\mathbf{O1})  ~~\yes +\no = \yes +\no + x
\]
is valid with respect to $\simeq$ (as both sides trivially accept and reject all traces), but cannot be proved using the equations in $\mathcal{E}_{\omega}$. This is because all the equations in that axiom system have the same variables on their left- and right-hand sides. Our goal in the remainder of this section is to study the equational theory of $\simeq$ and $\simeq_{\omega}$ over open terms. Subsection \ref{sect:openInf} will present our results when $\Act$ is infinite  as this case turns out to be more straightforward. We consider the setting of a finite set of actions in Subsection \ref{sect:openFin}. In what follows, we use $\mathcal{E}_v'$ for the axiom system that results by adding $O1$ to $\mathcal{E}_v$. The superscript $ '$ will be used in the name of an axiom set to denote that the axiom set is complete for one notion of equivalence over \textit{open} terms. The absence of a superscript refers respectively to a \textit{ground complete} axiom set. 

%The initial axiom system for open equations is $\mathcal{E}_v$ only expanded with $O1$ and will be called $\mathcal{E}_{v}'$.

Towards a completeness theorem, we modify the notion of normal form, to take variables into account. To that end we define: 

\begin{definition}\label{def:NormalFormOpen}
A term $m \in$ \MonF~is in \textbf{open normal form} if it has the form: 
$$m = \displaystyle\sum_{a \in A} a.m_a + \displaystyle\sum_{i \in I} x_i ~[+ \yes] ~ [+\no]$$ 
where $\{ x_i \mid i \in I\}$ is a finite set of variables, $A$ is a finite subset of $\Act$ and each $m_a$ is an (open) term in open normal form that is different from $\vend$.
\end{definition}

\begin{lemma}\label{lem:NormalFormOpen}
Each open term $m \in $\MonF~is provably equal to some open normal form $m'$ with $\depth(m') \leq \depth(m)$.

\end{lemma}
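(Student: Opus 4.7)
The plan is to adapt the proof of Lemma~\ref{lem:normClosed} by induction on the lexicographic ordering $\prec$ over pairs $(\depth(m),size(m))$, enlarging the base of the induction with one additional leaf case (namely variables) and extending the sum case so that free variables are collected alongside action-prefixed summands.

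The base cases are: verdicts, handled exactly as in Lemma~\ref{lem:normClosed} via axiom $A4$; and a variable $m = x$, for which $A4$ gives $x = x + \vend$, an open normal form with $A = \emptyset$, $I = \{x\}$ and no verdict summands, whose depth is $0 = \depth(x)$. For the inductive step in the prefix case $m = a.n$, I would apply the induction hypothesis to $n$ to obtain an open normal form $n'$ with $\depth(n') \leq \depth(n)$; if $n' = \vend$ we apply $E_a$ to collapse $a.n'$ to $\vend$, and otherwise $a.n'$ is itself already an open normal form of depth $1+\depth(n') \leq \depth(m)$.

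The main case is the sum $m = m_1 + m_2$. Applying the induction hypothesis to both summands yields open normal forms
\[
m_k' = \sum_{a \in A_k} a.m_{ka} + \sum_{i \in I_k} x_i ~[+\yes]\,[+\no], \qquad k \in \{1,2\},
\]
with $\depth(m_k') \leq \depth(m_k)$. Using $A1$--$A4$ we rearrange $m_1' + m_2'$ into
\[
\sum_{a \in A_1 \setminus A_2} a.m_{1a} + \sum_{a \in A_2 \setminus A_1} a.m_{2a} + \sum_{b \in A_1 \cap A_2} b.(m_{1b} + m_{2b}) + \sum_{i \in I_1 \cup I_2} x_i ~[+\yes]\,[+\no],
\]
where the $\yes$ (resp.\ $\no$) summand is kept exactly when it occurred in $m_1'$ or $m_2'$, and where $A3$ is used to collapse duplicate variable occurrences when $I_1 \cap I_2 \neq \emptyset$. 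For each shared $b \in A_1 \cap A_2$ the subterm $m_{1b} + m_{2b}$ has strictly smaller depth than $m$ — indeed $\depth(m_{1b}+m_{2b}) < \max(\depth(m_1'),\depth(m_2')) \leq \depth(m)$ — so it is strictly earlier in the lexicographic order and the induction hypothesis provides an open normal form $m_b'$ for it with $\depth(m_b') \leq \depth(m_{1b}+m_{2b})$. Substituting these back produces the required open normal form for $m$ of depth at most $\depth(m)$.

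The only genuinely new ingredient over Lemma~\ref{lem:normClosed} is the bookkeeping of variable summands, which is handled entirely by the commutative-monoid axioms $A1$--$A4$ since variables act as leaves of depth zero and do not interact with the distributivity axiom $D_a$. Consequently the hardest step is the one already present in the closed case: the recursive normalization of $m_{1b}+m_{2b}$ for $b \in A_1 \cap A_2$, where one must rely on the strict decrease of depth (rather than size) to invoke the induction hypothesis after the subterms have themselves been rewritten.
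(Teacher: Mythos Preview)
Your proposal is correct and matches precisely what the paper intends: the paper omits the proof entirely, stating only that it ``follows the lines of the one for Lemma~\ref{lem:normClosed},'' and your write-up carries out exactly that adaptation, adding the variable leaf case and the bookkeeping of variable summands in the sum case. One small slip: in the sum case you write ``Using $A1$--$A4$ we rearrange\ldots'' to obtain the summand $b.(m_{1b}+m_{2b})$, but merging $b.m_{1b}+b.m_{2b}$ into a single prefix requires axiom $D_b$ as well (as in the closed-term proof); this is clearly just a wording oversight, since you explicitly reference $D_a$ later.
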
 
The proof of the above result follows the lines of the one for Lemma \ref{lem:normClosed} for closed terms and is thus omitted.

%Note that each $m_a$ might contain by itself a set of variables where each one could appear in the set $\{ x_i \mid i \in I\}$ or not. 
%
%
%For instance consider the case: 

As in the case of closed terms, we now proceed to characterize a class of open normal forms for open terms whose verdict equivalence can be detected ``structurally''. The following example highlights the role that equation $(O1)$ plays in that characterization.

\begin{example}
Consider the following monitor in open normal form:
$$m = x+ \yes + a.b.(\no + b.a.x) .$$
Monitor $m$ contains two occurrences of the variable $x$. However, because of the interplay between the two verdicts, one of them is redundant and can be removed thus:

\begin{align*}
\mathcal{E}_v' \vdash m & =  x+ \yes + a.b.(\no + b.a.x) \\
& \overset{\mathrm{Y_a}}{=}  x+ \yes + a.\yes + a.b.(\no + b.a.x) \\
& \overset{\mathrm{Y_b}}{=}  x+ \yes + a.(\yes + b.\yes) + a.b.(\no + b.a.x) \\
&\overset{\mathrm{D_{a}}}{=}  x+ \yes + a.(\yes + b.\yes + b.(\no + b.a.x)) \\
&\overset{\mathrm{D_{b}}}{=}  x+ \yes + a.(\yes + b.(\yes + \no + b.a.x)) \\
& \overset{\mathrm{O_1}}{=}  x + \yes + a.(\yes + b.(\yes + \no)) \\
&\overset{\mathrm{D_{b}}}{=}  x + \yes + a.(\yes + b.\yes + b.\no) \\
& \overset{\mathrm{Y_b}}{=}   x + \yes + a.(\yes + b.\no) \\
&\overset{\mathrm{D_{a}}}{=}  x + \yes + a.\yes + a.b.\no \\
& \overset{\mathrm{Y_a}}{=}  x + \yes + a.b.\no .
\end{align*}

\end{example}

The above example motivates the following notion of reduced normal form for open terms.

\begin{definition} An \textbf{\textit{open reduced normal form}} is a term $$m = \displaystyle\sum_{a \in A} a. m_a + \sum_{i \in I} x_i~ [+\yes]~ [+\no]$$
where if $v \in \{ \yes,\no\}$ is a summand of m then each $m_a$ is $v$-free, different from $\vend$ and in open reduced normal form. In addition:

\begin{itemize}
\item if both $\yes$ and $\no$ are summands of $m$ then $m$ is equal to $\yes + \no$, 
\item if $\yes$ is a summand of $m$ and $m \xrightarrow[]{s} \no+m'$, for some $s$ and $m'$ then $m'$ is equal to $\vend$,
\item if $\no$ is a summand of $m$ and $m \xrightarrow[]{s} \yes+m'$, for some $s$ and $m'$ then $m'$ is equal to $\vend$.

\end{itemize}

\end{definition}

In what follows we will omit the word ``open'' when referring to the normal form of a term that contains variables. 
\begin{lemma}\label{lem:redNFOpen}
For each open monitor $m \in$ \MonF, its normal form  is provably equal to a reduced normal form.
\end{lemma}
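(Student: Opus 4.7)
The plan is to proceed by induction on the depth of the open normal form, after first establishing the helper identity: for every $t \in$ \MonF, $\mathcal{E}_v' \vdash t + \yes + \no = \yes + \no$. This helper is proved by straightforward structural induction on $t$. The verdict and $\vend$ cases use $A3$ and $A4$; the variable case is exactly axiom $O1$; for a prefix $t = a.t'$, I would introduce $a.\yes + a.\no$ via $Y_a$ and $N_a$, combine with $D_a$ to reach $a.(t' + \yes + \no)$, apply the inductive hypothesis inside the prefix, and then re-distribute with $D_a$, $Y_a$, $N_a$ to absorb everything back into $\yes + \no$; the sum case splits into two successive applications of the hypothesis.

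With this helper in hand, I fix a normal form $m = \sum_{a \in A} a.m_a + \sum_{i \in I} x_i ~[+\yes]~[+\no]$ and apply the inductive hypothesis to each $m_a$ (which has strictly smaller depth), so each $m_a$ may be assumed already in open reduced normal form. A case analysis on which verdicts occur as summands of $m$ then completes the argument. If both $\yes$ and $\no$ are present, the helper immediately collapses $m$ to $\yes + \no$. If neither is present, the additional reduced-normal-form clauses are vacuous and $m$ is already reduced. The substantive case is when exactly one verdict, say $\yes$ (the $\no$ case is fully symmetric), is a summand of $m$.

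In that case I would first push $\yes$ under each action: using $Y_a$ and $A3$ to write $\yes = \yes + \sum_{a \in A} a.\yes$, then $D_a$ to merge $a.m_a + a.\yes$ into $a.(m_a + \yes)$, obtaining $m = \sum_{a \in A} a.(m_a + \yes) + \sum_i x_i + \yes$. Each $m_a + \yes$ is an open normal form of depth $\depth(m_a) < \depth(m)$, so the inductive hypothesis yields a reduced normal form $n_a$ with $\mathcal{E}_v' \vdash m_a + \yes = n_a$. Taking the closed substitution $\sigma_0$ that sends every variable to $\vend$, $\sigma_0(m_a + \yes)$ accepts $\varepsilon$, so by soundness $\sigma_0(n_a)$ does too, which is only possible if $\yes$ is a top-level summand of $n_a$. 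Thus $n_a = \yes + n_a''$, and the reduced-normal-form condition on $n_a$ makes every action-successor of $n_a$ be $\yes$-free, so $n_a''$ is $\yes$-free throughout. I would then strip the duplicate $\yes$ by $\yes + a.n_a = \yes + a.\yes + a.n_a'' = \yes + a.n_a''$ (using $D_a$ and $Y_a$), with $E_a$ deleting the summand whenever $n_a'' = \vend$, producing the candidate reduced form $\sum a.n_a'' + \sum_i x_i + \yes$.

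The final verification---that this candidate really satisfies the clause ``every state of the form $\no + m'$ reachable from $m$ has $m' = \vend$''---reduces to the analogous clause for each $n_a$: every such reachable state arises inside some $n_a''$, and inherits the required property from $n_a$ (which had $\yes$ as a summand and was reduced), since stripping the top $\yes$ from $n_a$ only removes a self-looping trajectory on $\yes$ and does not introduce new reachable states. The main delicate point I anticipate is precisely this bookkeeping---verifying that each structural clause of the open reduced normal form is preserved by the push-and-strip manoeuvre, and confirming that depth is the correct induction measure so that the recursive call on $m_a + \yes$, which has the same depth as $m_a$ and strictly smaller depth than $m$, is legitimate.
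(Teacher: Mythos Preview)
Your approach is essentially the paper's: case analysis on the verdict summands, collapsing via $O1$ when both verdicts are present, and in the one-verdict case pushing the verdict under each action prefix, recursing, and stripping it back out. Two minor points are worth tightening. First, your helper identity $t + \yes + \no = \yes + \no$ follows in one step by substituting $t$ for $x$ in axiom $O1$; the structural induction on $t$ is unnecessary, and this direct substitution is exactly what the paper does. Second, your depth induction has a small wrinkle: once you replace each $m_a$ by its reduced form via the inductive hypothesis, you no longer control its depth, so the subsequent application of the hypothesis to $m_a + \yes$ is not immediately justified. The paper sidesteps this by inducting on \emph{size} rather than depth; alternatively, you could simply skip the preliminary reduction of $m_a$ and apply the hypothesis directly to the original $m_a + \yes$, which plainly has depth $\depth(m_a) < \depth(m)$.
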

\begin{proof}
By Lemma \ref{lem:NormalFormOpen} we may assume that $m$ is in in normal form. The proof is by induction on the size of $m$ and we isolate the following cases, depending on the verdicts $v \in \{\yes,\no\}$ $m$ has as summands:
\begin{enumerate}
	\item Case $m = \displaystyle\sum_{a \in A} a.m_a + \sum_{i \in I} x_i$. In this case we use the induction hypothesis on the $m_a$ monitors. These are different from $\vend$ and have smaller size than $m$ and therefore they are provably equal to a reduced normal form, i.e $\mathcal{E}_v' \vdash m_a = m_a'$ where $m_a'$ is in reduced normal form. Thus  $\mathcal{E}_v'$ proves $m = \displaystyle\sum_{a \in A} a.m'_a + \sum_{i \in I} x_i$, and we are done since $\displaystyle\sum_{a \in A} a.m'_a + \sum_{i \in I} x_i$ is in reduced normal form.

	By applying the congruence closure equational law we have that $\mathcal{E}_v' \vdash m = m'$, where $m'$ is in reduced normal form. 

	\item Case $m = \yes + \displaystyle\sum_{a \in A} a.m_a + \sum_{i \in I} x_i$. In this case by the induction hypothesis each $m_a$ is provably equal to a reduced normal form. The extra step here is that if $m \xrightarrow[]{s} \no+m'$, for some $s$ and $m'$ then $m'$ is equal to $\vend$. In such a scenario we have that:
	
If $s = \varepsilon$, then the claim follows trivially from $O1$. Otherwise $s=a.s'$ for some action $a \in \Act$ and $m \xrightarrow[]{a} m_a \xrightarrow[]{s'} \no+m'$. We now apply our axioms as follows: $$m = \yes + \displaystyle\sum_{a \in A} a.m_a + \sum_{i \in I} x_i \overset{\mathrm{Y_a}}{=} \yes + \displaystyle\sum_{b \in A \setminus\{a\}} b.m_b + a.\yes + a.m_a +\sum_{i \in I} x_i $$ $$\overset{\mathrm{D_a}}{=}\yes + \displaystyle\sum_{b \in A\setminus\{a\}} b.m_b + a.(\yes + a.m_a) +\sum_{i \in I} x_i.$$ 

This means that since $\yes +m_a$ has size smaller than $m$ it is provably equal to a reduced normal form. Additionally, since it contains a $\yes$ summand and $m_a \xrightarrow[]{s'} \no+m'$, by the induction hypothesis we have that $m'$ is equal to $\vend$ and we are done. 
% $$\yes +a.(x +\no +m_a') \overset{\mathrm{Y_a}}{=} \yes + a.\yes + a. (x +\no +m_a') \overset{\mathrm{D_a}}{=} \yes + a.(\yes +\no + x +m_a') $$ $$\overset{\mathrm{O_1}}{=} \yes + a. (\yes +\no) \overset{\mathrm{D_a}}{=} \yes +a.\yes +a.\no \overset{\mathrm{Y_a}}{=} \yes + a.\no$$\textbf{.}
 
    \item Case $m = \no + \displaystyle\sum_{a \in A} a.m_a + \sum_{i \in I} x_i$. The proof of this case is symmetrical to Case $2$ and therefore omitted. 
    \item Case $m = \yes + \no + \displaystyle\sum_{a \in A} a.m_a + \sum_{i \in I} x_i$. In this case we use the following simple argument. Starting for axiom $O_1$ we use the substitution $\sigma(x) = \displaystyle\sum_{a \in A} a.m_a + \sum_{i \in I} x_i$ and we get: $$\yes + \no  = \yes + \no + \displaystyle\sum_{a \in A} a.m_a + \sum_{i \in I} x_i,$$ and by applying the equational law of transitivity we have that $\mathcal{E}_v' \vdash m = \yes +\no$.
    \qedhere 
\end{enumerate}
\end{proof}
% Cases $2$ and $3$ are symmetric and slightly differ from the proof of Lemma \ref{lem:ReducedNF}. The extra claim in the case of the open terms is that we can remove occurrences of variables in the $m_a$'s when they contain a $\no$ summand (we present here the proof for case $2$).
% 
%
%Assume a monitor of the form $m = \yes + a.(x+\no + m_a')$. Then we apply our axioms as follows: $$\yes +a.(x +\no +m_a') = \yes + a.\yes + a. (x +\no +m_a') = \yes + a.(\yes +\no + x +m_a') $$ $$= \yes + a. (\yes +\no) = \yes +a.\yes +a.\no = \yes + a.\no$$\textbf{.}
% 
%The final case ($4$) is a simple application of our new axiom $O1$ to remove all the variables in $\displaystyle\sum_{i \in I} x_i$ and then repeating the inductive argument for the $m_a$'s.

The normal form defined above for open terms is adjusted over the closed terms case. This is because now our syntax is allowing for variables and therefore it is convenient for proofs to take these variables into account in a controlled and consistent manner. The further reducing that occurred towards defining the open reduced normal forms was possible due to the existence of the new axiom $O_1$, which gave us the option to remove variable occurrences. The new axiom $O_1$ is the only axiom we have currently available that does not contain every variable occurrence in both of its sides and it is therefore the only rule we have available that can help us remove variables from equations. In the presence of other axioms with this property we can further reduce our normal forms, as we will see later on.

%Note here that the defined normal form depends on the syntax we are analyzing in each case, while the reduced normal form depends on the axioms we have available for application (which in turn depend on our notion of equivalence and the characteristics of the model). Later on, where we will have extra axioms , our notion of reduced normal form will be further refined. 

In the following subsections, we will study the full equational theory of verdict and omega-verdict equivalence over open terms. 

%%%%%%%%%%%%%%%%%%%%%%%%%%%

\subsection{Infinite set of actions} \label{sect:openInf}
We begin by considering the equational theory of open monitors when
the set of actions is infinite. Apart from its theoretical interest,
this scenario has also some practical relevance. Indeed, as shown
already by Milner in~\cite{M80,Milner:1989:CC:63446}, infinite sets of
uninterpreted actions are useful when modeling system events that
carry data values. Runtime monitoring of systems with
data-dependent behavior has been an active field of research for over
15 years---see, for instance, the paper~\cite{BarringerGHS04} for an
early reference.

When the set of actions $\Act$ is infinite, it is easy to define a one-to-one mapping from open to closed terms that will help us prove completeness of the axiom system $\mathcal{E}_v'$.

\begin{theorem}\label{thm:CompOpenInf}(Completeness for open terms modulo $\simeq$) $\mathcal{E}_{v}'$ is complete for $\simeq$ over open monitors in \MonF~when $\Act$ is infinite. That is, for all $m,n \in$ \MonF, if $m \simeq n$, then $\mathcal{E}_{v}' \vdash m = n$.

\end{theorem}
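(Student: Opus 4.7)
The plan is to leverage ground completeness together with structural uniqueness for open reduced normal forms. Soundness of the new axiom $\mathbf{(O1)}$ is immediate since both of its sides have $L_a = L_r = \Act^*$ under every closed substitution; combined with Theorem \ref{thm:GroundSound}, this yields soundness of $\mathcal{E}_v'$. For completeness, Lemma \ref{lem:redNFOpen} allows us to assume $m$ and $n$ are in open reduced normal form,
$$
m = \sum_{a \in A} a.m_a + \sum_{x \in X} x + V_m, \qquad n = \sum_{b \in B} b.n_b + \sum_{y \in Y} y + V_n,
$$
with $V_m, V_n$ denoting the optional verdict summands. I would proceed by induction on $\depth(m) + \depth(n)$, aiming to show that $m \simeq n$ forces $V_m = V_n$, $X = Y$, $A = B$, and $m_a \simeq n_a$ for every $a \in A$; the induction hypothesis then gives $m_a \equiv n_a$ modulo $A1$--$A4$, and congruence closure concludes $m \equiv n$.

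The verdict summands coincide because the trivial substitution $\sigma_0$ sending every variable to $\vend$ yields closed monitors whose top-level verdicts are detected by membership of $\varepsilon$ in $L_a(\sigma_0(\cdot))$ and $L_r(\sigma_0(\cdot))$. If $V_m = V_n = \yes + \no$, the clauses in the definition of open reduced normal form force $m = \yes + \no = n$. Otherwise $V_m = V_n$ contains at most one conclusive verdict, and the top-level variable summands are characterized using the infinity of $\Act$: for each variable $x$ occurring in $m$ or $n$, pick a fresh action $c_x$ not appearing in either monitor and a verdict $w \in \{\yes, \no\}$ different from the one (if any) in $V_m$. Then the substitution $\sigma(x) = c_x.w$, $\sigma(z) = \vend$ for $z \neq x$, makes $c_x$ accepted or rejected by $\sigma(m)$ (depending on $w$) precisely when $x$ is a top-level summand of $m$, and similarly for $n$; hence $X = Y$.

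The main technical hurdle is the analysis of the action summands, because the verdict self-loops $v \xrightarrow{\alpha} v$ interfere with naive probes. The plan is to use probing substitutions of the form $\sigma(x) = c_x.\yes + c_x.\no$ with distinct fresh actions $c_x$. Since each $m_a$ is in reduced normal form and different from $\vend$, unfolding its structure reveals that it contains a verdict or a variable at some leaf, so $\sigma(m_a)$ reaches a verdict via a trace whose fresh-action signature cannot be mimicked by $V_m$ or by the top-level variable summands of $n$; this forces $a \in B$ for every $a \in A$, and symmetrically. For the equivalence $m_a \simeq n_a$, observe that for any closed substitution $\tau$, by extending $\tau$ with $\vend$ on variables outside $m_a$ and $n_a$ and with fresh probes on the remaining top-level variables, the residual rejection or acceptance set of $\tau(m)$ after the action $a$ reduces exactly to $L_r(\tau(m_a))$ or $L_a(\tau(m_a))$ respectively; together with $\tau(m) \simeq \tau(n)$, this yields $\tau(m_a) \simeq \tau(n_a)$ for every $\tau$, and the induction hypothesis finishes the proof.
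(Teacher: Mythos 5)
Your overall skeleton matches the paper's: reduce to open reduced normal forms via Lemma \ref{lem:redNFOpen}, establish that the verdict summands, the top-level variables and the initial action sets coincide, and then recurse on the $a$-derivatives. Your probes for the first three claims (fresh actions $c_x$, detection of verdict summands via $\varepsilon$) are sound and essentially the paper's \textbf{C1}--\textbf{C3}. The gap is in the last, and hardest, step. Your claim that ``the residual rejection or acceptance set of $\tau(m)$ after the action $a$ reduces exactly to $L_r(\tau(m_a))$ or $L_a(\tau(m_a))$'' is false precisely in the case that matters. If $\yes$ is a top-level summand of $m$ (and hence of $n$), then $\yes \xrightarrow{a} \yes$ and $L_a(\tau(m)) = L_a(\tau(n)) = \Act^*$ for \emph{every} closed substitution $\tau$, so the residual acceptance set after $a$ is all of $\Act^*$ and carries no information whatsoever about $L_a(\tau(m_a))$; yet $m_a$ and $n_a$, being $\yes$-free, may genuinely differ in the traces they accept through variables instantiated by $\tau$. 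A second, independent problem is that you quantify over all $\tau$: a top-level variable $x$ that also occurs inside $m_a$ must keep its value $\tau(x)$, and its top-level occurrence then contributes the uncontrolled set $\{s \mid as \in L_r(\tau(x))\}$ to the residual, so even the rejection-set equality $L_r(\tau(m_a)) = L_r(\tau(n_a))$ does not follow from $\tau(m)\simeq\tau(n)$ for arbitrary $\tau$.

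The paper avoids both issues by working throughout with a \emph{single} generic substitution $\sigma(x) = a_x.(\yes+\no)$, with the $a_x$ pairwise distinct and foreign to $m$ and $n$ (this is where the infinity of $\Act$ is really used), and by phrasing the inductive statement as ``$\sigma(m)\simeq\sigma(n)$ implies $\mathcal{E}_v'\vdash m=n$'' for reduced normal forms. Freshness of the $a_x$ makes the top-level variable contributions disjoint from the $a$-residual, which gives $L_r(\sigma(m_a)) = L_r(\sigma(n_a))$ cleanly; the acceptance sets are then handled by a separate minimal-counterexample argument: a shortest $s\in L_a(\sigma(m_a))\setminus L_a(\sigma(n_a))$ must end in some $a_x$, hence is also rejected by $\sigma(m_a)$ and therefore by $\sigma(n_a)$, and the only ways $\sigma(n_a)$ can reject it either force it to accept $s$ after all or force both derivatives into a residual $\no+\vend$ (by the open-reduced-normal-form clause), contradicting the acceptance of $s$. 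Your proposal is missing this entire mechanism, so as written the induction cannot be closed.
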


\begin{proof}
Assume $m \simeq n$. By Lemma \ref{lem:redNFOpen}, we may assume that $m$ and $n$ are in reduced normal form. 

Let $$m = \displaystyle\sum_{a \in A} a.m_a + \sum_{i \in I} x_i ~[+\yes] ~[+\no] $$ and $$n =  \displaystyle\sum_{b \in B} b.n_b + \sum_{j \in J} y_j~ [+\yes] ~[+\no]\text{.}$$
We will show that $\mathcal{E}_v' \vdash m =n$ by induction on the sum of the sizes of $m$ and $n$. To this end, we will establish a strong structural correspondence between $m$ and $n$. Consider a substitution $\sigma$ defined as follows: $\sigma(x) = a_x.(\yes +\no)$ where \begin{itemize}
    \item for all variables $x$ and $y,~ a_x = a_y$ implies $x=y$, and 
    \item $\{a_x \mid x \in \Var \}$ is disjoint from the set of actions occurring in $m$ or $n$. 
\end{itemize}

Note that such a substitution $\sigma$ exists because $\Act$ is infinite. By induction on the sizes of $m$ and $n$, we will prove that if $\sigma(m) \simeq \sigma(n)$ then: 
\begin{enumerate}[start=1,label={(\bfseries C\arabic*)}]
    \item $v$ is a summand of $m$ iff $v$ is a summand of $n$, for $v \in \{\yes,\no\}$,
    \item $\{ x_i \mid i \in I  \} = \{ y_j \mid j \in J  \} $,
    \item $A=B$ and 
    \item for each $a \in A,~\sigma(m_a) \simeq \sigma(n_a)$.
\end{enumerate}

In what follows, we first show that $\mathcal{E}_v'$ proves $m = n$ assuming claims $\textbf{(C1)-(C4)}$ and then we prove those claims. To prove that $\mathcal{E}_v'$ proves $m = n$ follows from $\sigma(m) \simeq \sigma(n)$ for reduced normal forms $m$ and $n$, we proceed by induction on the sum of the sizes of $m$ and $n$. By claim $ \mathbf{C4}$, we have that $\sigma(m_a) \simeq \sigma(n_a)$ and, from the induction hypothesis, $\mathcal{E}_{v}' \vdash m_a = n_a$. By $\textbf{C1-3}$ we also have that $\mathcal{E}_{v}' \vdash \displaystyle\sum_{i \in I} x_i = \sum_{j \in J} y_j$ and that $\displaystyle\sum_{a \in A} a.m_a = \sum_{b \in B} b.n_b$, which means that by using the equational law of closure under summation we also have that $\mathcal{E}_{v}' \vdash m = n$.

%We will provide the proofs for the claims \textbf{$C1-4$} after we present our inductive argument bellow. By accepting that the above claims follow form $\sigma(m) \simeq \sigma(n)$, we can apply the following inductive argument: From \textbf{$C4$} we know that $\forall a \in A,~ m_a = n_a$ and from induction hypothesis $\mathcal{E}_{v}' \vdash m_a = n_a$. By $\textbf{C1-3}$ we also have that $\mathcal{E}_{v}' \vdash \displaystyle\sum_{i \in I} x_i = \sum_{j \in J} y_j$ ($\mathbf{C2}$) and that $\displaystyle\sum_{a \in A} a.m_a = \sum_{b \in B} b.n_b$ ($\mathbf{C3}$), which means that by using the equational law of closure under summation we also have that $\mathcal{E}_{v}' \vdash m = n$.

We present now the proofs of $\textbf{(C1)-(C4)}$.

\textbf{C1:} Assume $\yes$ is a summand of $m$. Then $\varepsilon \in L_a(\sigma(m))$. Since $\sigma(m) \simeq \sigma(n)$, we have that  $\varepsilon \in L_a(\sigma(n))$. Note that $\varepsilon \not\in L_a(\sigma(x))$ for each $x$. Thus $\yes$ must be a summand of $n$. The case for $v = \no$ is similar. By symmetry the claim follows. 
%
%From the normalization lemma for open terms and Lemma \ref{lem:ReducedNF}, it must be the case that if the verdict $v \in \{\yes,\no\}$ is a summand of $m$ then $m$ will be in one of the following forms:
%\begin{itemize}
%    \item $m = \yes +\no$. 
%    
%    \item $ m = \yes + \displaystyle\sum _{a \in A} a.m_a + \sum_{i \in I} x_i$, where each $m_a$ is $``\yes$''-free and in reduced normal form.
%    \item $ m = \no + \displaystyle\sum _{a \in A} a.m_a + \sum_{i \in I} x_i$, where each $m_a$ is $``\no$''-free and in reduced normal form.
%\end{itemize}
%In the first case we have that $n = \yes +\no$ as no other reduced normal form is equal to $\yes +\no$.         
%
%
%The two following cases are symmetrical so we only discuss the first. We have that $n$ must be equal to a form $n = \yes + \displaystyle\sum _{b \in B} b.n_b + \sum_{j \in J} y_j$ (where the $n_b$'s are $``\yes$''-free) as both monitors need to accept the empty trace. 
%

\textbf{C2:} Assume that  $x \in \{x_i \mid i \in I\}$.  By the definition of $\sigma$, it follows that $\sigma(m)$ both accepts and rejects the trace $a_{x}$. Since $m \simeq n$, we have that $\sigma(n)$ also accepts and rejects the trace $a_{x}$. As $n$ does not contain any occurrence of $a_{x}$ and has  at most one of the verdicts $\yes$ and $\no$ as a summand,  it follows that $x \in \{ y_j \mid j \in J  \}$. Therefore, by symmetry, $ \{x_i \mid i \in I\} = \{ y_j \mid j \in J  \}$ and we are done. 

\textbf{C3:}
Assume, towards a contradiction, that $a \in A \setminus B$. Then $m$ cannot have both $\yes$ and $\no$ as summands, since $m$ is in reduced normal form.

If $m$ has none of the verdicts as a summand, we know that $m_a$ is
different from $\vend$ since $m$ is in reduced normal form. Therefore
$\sigma(m_a)$ will either accept or reject some trace $s$, which
implies that $\sigma(m) $ will also accept or reject $as$. However,
$\sigma(n)$ cannot do the same because $a \not\in B$,
$\sigma(x) \not\xRightarrow[]{a}$ for each $x$, and neither $\yes$ nor
$\no$ are summands of $n$. This contradicts our assumption that
$m\simeq n$.

Assume now, without loss of generality, that $m$ has only the verdict
$\yes$ as summand. Observe that $m_a$ is
$\yes$-free and different from $\vend$, since $m$ is in reduced normal
form. This means that $\sigma(m_a)$ can reject some trace $s$ and,
therefore, that $\sigma(m) $ will reject $a.s$. On the other hand,
$\sigma(n)$ cannot do the same because $a \not\in B$,
$\sigma(x) \not\xRightarrow[]{a}$ for each $x$ and $\no$ is not a
summand of $n$. Again, this contradicts our assumption that $m\simeq
n$.

The above analysis yields that $A \subseteq B$. By symmetry, $A=B$ follows.

\textbf{C4:} Our final claim (and the one with the most involved proof) is that  $\sigma(m_a) \simeq \sigma(n_a)$, for each $a \in A$.
%%Since $m$ and $n$ are in reduced normal form we have that if they
%%contain both non-$\vend$ verdicts as summands then they are both provably
%%equal to $ \yes + \no$ and we are done. 

 If the reduced normal forms of the monitors do not contain any verdict $v \in \{\yes,\no \}$ as a summand, then the argument is simplified significantly. Therefore, we limit ourselves to presenting here the most complicated case, where $m$ and $n$  both contain exactly one verdict $v \in \{yes,\no \}$ as a summand.  Without loss of generality, we assume that this verdict is $\yes$, namely that
 $$m = \yes + \displaystyle\sum_{a \in A} a.m_a + \sum_{ i \in I} x_i $$ and $$n = \yes + \displaystyle\sum_{b \in B} b.n_b + \sum_{ j \in J} y_i .$$

Since the claims $\textbf{C1-3}$ have already been proven, we know for $m$ and $n$ that:  

$$ m = \yes + \displaystyle\sum_{a \in A} a.m_a + \sum_{ i \in I} x_i \text{ and } n = \yes + \displaystyle\sum_{a \in A} a.n_a + \sum_{ i \in I} x_i ~.$$

We remind the reader that our purpose is to prove that $\sigma(m_a) \simeq \sigma(n_a)$, for each $a \in A$, so that we can apply our induction hypothesis to infer that $\mathcal{E}_v' \vdash m_a = n_a$.

We first prove that the rejection sets of $\sigma(m_a)$ and $\sigma(n_a)$ are equal. To this end, assume that  $s\in L_r(\sigma(m_a))$. It follows that $a.s \in L_r(\sigma(m)) = L_r(\sigma(n))$. By the form of $n$ and from the definition of $\sigma$, we conclude that $s \in L_r(\sigma(n_a))$. Therefore,  $L_r(\sigma(m_a)) \subseteq L_r(\sigma(n_a))$. By symmetry we have that $L_r(\sigma(m_a)) = L_r(\sigma(n_a))$ and we are done.

It remains to prove that the acceptance sets of $\sigma(m_a)$ and $\sigma(n_a)$ are also identical. 
%%%%%%%%%%%%%%%%%%%%%%
% By induction hypothesis $\mathcal{E}_v' \vdash m_a \simeq n_a$ for all $a \in A$  and therefore we have claim $\textbf{C4}$.
%%%%%%%%%%%%%%%%
%%% To that end, we present the following argument.
(It is important here to point out that, since both $m$ and $n$
contain a $\yes$ verdict as a summand, the acceptance sets of
$\sigma(m)$ and $\sigma(n)$ are both equal to $\Act^*$. However, for
our inductive argument to work, we need to be able to prove that
$L_a(\sigma(m_a)) = L_a(\sigma(n_a))$.) To that end and towards a contradiction, consider a shortest trace
$s$ that is accepted by monitor $\sigma(m_a)$, but not by $\sigma(n_a)$. Consequently, monitor
$\sigma(m)$ accepts the trace $a.s$.
% Due both the $\yes$ verdict at the top level but also though the $\sigma(m_a)$. 

Since monitors $m_a$ and $n_a$ are $\yes$-free, as a result of $m$ and $n$ being in reduced normal form, the acceptance of $s$ must be the result of a variable $x$ mapped to $a_x.(\yes +\no)$ through the substitution $\sigma$.
%%%%%%%
%In order to simplify the proof we assume that $s$ is the shorer trace we
%can find of this property, i.e. that it is accepted by $\sigma(m_a)$ but
%not $\sigma(n_a)$.
%%%%%%%%%%%%%%
Since $s$ is a shortest trace that is accepted by monitor $\sigma(m_a)$, but not by $\sigma(n_a)$, none of its prefixes is accepted by $\sigma(m_a)$ and therefore the last action that is in $s$ must be the action $a_x$ stemming from $\sigma(x)$. This means that monitor $m_a$ can perform the transition $m_a \xRightarrow[]{s'} m_a'$,  where $m_a'$ contains $x$ as a summand and $s=s'.a_x$. Therefore the monitor $\sigma(m_a)$ can perform the transitions: 
$$ \sigma(m_a) \xRightarrow[]{s'} \sigma(m_a') \xrightarrow[]{a_x} \yes + \no \xrightarrow[]{\tau} \yes .$$ 

%%We argue that $\sigma(n_a)$ can also accept $s$.
%%%%%%%%%%%
Since $s'.a_x$ is accepted by $\sigma(m_a)$, it must also be rejected by it because $a_x$ is an action that can only be observed after the substitution of the variable $x$ in $m_a$. We have already argued that the rejection sets of $\sigma(m_a)$ and $\sigma(n_a)$ are equal and therefore $\sigma(n_a)$ also rejects the trace $s'.a_x$. Since the action $a_x$ is a unique action corresponding to the variable $x$, there are only two ways in which $\sigma(n_a)$ could reject the trace $s'.a_x$. The first case is that $\sigma(n_a)$ can also perform the transitions

$$\sigma(n_a) \xRightarrow[]{s'} \sigma(n_a') \xrightarrow[]{a_x} \yes + \no$$ wfor some $n_a'$. However, this would guarantee that $\sigma(n_a)$ accepts $s$, whereas we assumed that it does not. 

The most complicated case is when $\sigma(n_a)$ can reject a prefix $s_0$ of $s'$. By the already proven equality of the rejection sets of the two sub-monitors, $\sigma(m_a)$ would also reject $s_0$. This can only happen if both $n_a$ and $m_a$ rejected that prefix independently of the substitution $\sigma$, since every action preceding $a_x$ along the trace $s'$ is not an action corresponding to the mapping of a variable through $\sigma$ as explained above. This means that both $m_a$ and $n_a$ can perform the transitions $m_a \xRightarrow[]{s_0} \no +m_a'$ and $n_a \xRightarrow[]{s_0} \no +n_a'$, for some $m_a'$ and $n_a'$. However, since $m$ and $n$ are in reduced normal form, this implies that $m_a'$ and $n_a'$ are equal to $\vend$. This leads us to a contradiction, as we assumed that $\sigma(m_a)$ accepts the trace $s$ which can no longer be the case if $m_a \xrightarrow[]{s_0} \no + \vend$ where $s_0$ is a prefix of $s$.

Therefore every trace accepted by $\sigma(m_a)$ is also accepted by $\sigma(n_a)$. By symmetry, we have that the acceptance sets of $m_a$ and $n_a$ are equal.

This means that $\sigma(m_a) \simeq \sigma(n_a)$, which completes the proof of \textbf{C4} and consequently of the whole theorem. \end{proof}
 
%The final case of the proof where $m.n$ do not accept or reject the empty trace (that is, when they don't have $\yes$ or $\no$ summands) follows the same structure for the completeness proof of closed terms by using the modification presented above. Namely by the use of axiom $D_a$ we construct two summands of the monitors, one $\yes$-free and one $\no$- free. 

\begin{corollary}
$\mathcal{E}_{v}'$ is complete for $\simeq_{\omega}$ over open monitors in \MonF~when $\Act$ is infinite. That is, for all $m,n \in$ \MonF, if $m \simeq n$, then $\mathcal{E}_{v}' \vdash m = n$.
\end{corollary}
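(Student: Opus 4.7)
The plan is to observe that this corollary is an immediate consequence of two results already established in the paper, so no new argument specific to $\simeq_\omega$ is needed. The key observation is that when $\Act$ is infinite, the two notions of equivalence collapse. First I would invoke the third bullet of Lemma~\ref{lem:1}, which states that $\simeq$ and $\simeq_\omega$ coincide over closed monitors whenever $\Act$ is infinite. I would then lift this coincidence to open terms: by the definition of the two relations on open monitors (both given pointwise via closed substitutions), if $\sigma(m) \simeq \sigma(n)$ and $\sigma(m) \simeq_\omega \sigma(n)$ hold for the same closed substitutions $\sigma$, then $m \simeq n$ iff $m \simeq_\omega n$ for open $m,n \in \MonF$ as well.

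Given this coincidence, the corollary reduces to Theorem~\ref{thm:CompOpenInf}. Concretely, suppose $m \simeq_\omega n$ for open monitors $m,n$ when $\Act$ is infinite. For every closed substitution $\sigma$, we have $\sigma(m) \simeq_\omega \sigma(n)$, and by Lemma~\ref{lem:1} this is equivalent to $\sigma(m) \simeq \sigma(n)$. Hence $m \simeq n$ by the definition of verdict equivalence over open monitors, and Theorem~\ref{thm:CompOpenInf} then yields $\mathcal{E}_v' \vdash m = n$, which is exactly what was to be proved.

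There is essentially no obstacle here; the corollary is a packaging result. The only minor point worth remarking on in the write-up is the extension of the coincidence $\simeq = \simeq_\omega$ from closed to open terms, which is immediate from the uniform pointwise definition of both relations on open monitors. No new axioms beyond those in $\mathcal{E}_v'$ are required, which is the content of the corollary.
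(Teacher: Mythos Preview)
Your proposal is correct and matches the paper's own proof, which simply states that the claim follows from Lemma~\ref{lem:1}. Your additional remark about lifting the coincidence $\simeq = \simeq_\omega$ from closed to open terms via the pointwise definition is a useful clarification that the paper leaves implicit.
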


\begin{proof}
The claim follows from Lemma \ref{lem:1}. \end{proof}

\subsection{Finite set of actions}\label{sect:openFin}
% !TeX root = axioms_Monitors_VeEq.tex
%% Last modified: apr 19 10:50:31 GMT 2021
%% Last spell checked: 

The study of the equational theory of $\simeq$ when $\Act$ is finite turns out to be more interesting and complicated. In this setting, we can identify equations whose validity depends on the cardinality of $\Act$, which is not the case for any of the axioms we used so far. To see this, consider the equation 
\[
(\mathbf{V_{1}}) ~~ x = x +a.x ,
\]
which is sound when $\Act = \{ a\}$ but cannot be derived by the equations in $\mathcal{E}_{v}'$, as it is not sound when $\Act \neq \{ a\}$. 
%Such equations were not a problem when the action set was infinite because we had a somewhat broader space for the substitutions we could define and therefore "less" equations were sound.

As a first step in our study of the equational theory of $\simeq$ when $\Act$ is finite, we characterize some properties of sound equations.

\begin{lemma}\label{lem:firstvar}

Let $m\simeq n$ be a sound equation, where  $m,n \in$ \MonF       ~and $m$ is in reduced normal form. Assume that
\begin{itemize}
\item $m \xrightarrow[\text{}]{s} x + m'$,  for some $s$ in $\Act^*$, variable $x$ and $m'$ in \MonF,
and
\item $m \not \xrightarrow[\text{}]{s_p} x+ m_{s_p}$, for each proper prefix $s_p$ of $s$ and $m_{s_p} \in$ \MonF.
\end{itemize}
Then, $n \xrightarrow[\text{}]{s} x + n'$ for some $n'$ in \MonF.

\end{lemma}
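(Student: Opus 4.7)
The plan is to reduce this structural claim about open monitors to verdict equivalence between their closed instantiations, using a closed substitution that makes the $x$-summand at trace $s$ observable. I would begin with the substitution $\sigma$ defined by $\sigma(x) = \yes + \no$ and $\sigma(y) = \vend$ for every variable $y \neq x$. From $m \xrightarrow[]{s} x + m'$, the operational semantics of Table \ref{tab:sos_rules} gives $\sigma(m) \xRightarrow[]{s} \yes$ and $\sigma(m) \xRightarrow[]{s} \no$, so $s \in L_a(\sigma(m)) \cap L_r(\sigma(m))$. Since $m \simeq n$ and $\simeq$ is preserved under closed substitutions (Lemma \ref{lem:1}), the same holds for $\sigma(n)$: the monitor $n$ must have derivatives via some prefixes of $s$ whose $\sigma$-images can reach both $\yes$ and $\no$.

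To force the $x$-summand of $n$ to appear at the trace $s$ itself rather than at some earlier prefix, I would sharpen the substitution by attaching a long signature tail to $x$. For an action $b \in \Act$ and an integer $N$ strictly larger than $\max(\depth(m),\depth(n))$, let $\sigma_{N,b}(x) = b^N.(\yes + \no)$ and $\sigma_{N,b}(y) = \vend$ for $y \neq x$. Then $\sigma_{N,b}(m)$ both accepts and rejects $s b^N$ through the $x$-summand at $s$, and so does $\sigma_{N,b}(n)$ by $m \simeq n$. Since $|sb^N| > \depth(n)$, any witness inside $n$ that delivers $\yes$ or $\no$ purely through verdict summands must commit to that verdict at a derivative of depth at most $\depth(n)$, after which verdict self-loops swallow the remaining suffix. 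Combining this with the minimality hypothesis on $s$ for $m$ (invoked under the auxiliary substitution $\tau(y) = \vend$ for all $y$, which equates $\tau(m)$ and $\tau(n)$ in $\simeq$ and hence pins down their $\yes$/$\no$ reachability on prefixes of $s$) and the reduced normal form constraints on how $\yes$ and $\no$ may coexist in $m$, I expect to rule out the purely-verdict witnesses for $n$ and force $n$ to reach an $x$-summand along trace $s$.

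The main obstacle is to ensure the $x$-summand is reached at trace $s$ exactly, not at a strict prefix $s_0 \sqsubsetneq s$: if $n$ reached $x$ at $s_0$, then $\sigma_{N,b}$ could still explain the accept/reject behavior, but only provided the tail $s[|s_0|{\ldots}|s|-1]$ consists entirely of $b$'s (so that $b^N.(\yes+\no)$ can consume $s[|s_0|{\ldots}] \cdot b^N$). I would dispose of this case by varying $b$ over $\Act$: unless $s$ is a constant word in a single letter, some choice of $b$ rules out every strict prefix $s_0$ of $s$ simultaneously. The residual degenerate situations (namely $|\Act|=1$ or $s$ being a constant word) will be treated by a direct structural induction on $|s|$, which is feasible because reduced normal forms over such a restricted alphabet have a very simple shape and the base substitution $\sigma(x) = \yes + \no$ suffices to separate the relevant cases.
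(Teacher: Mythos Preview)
Your opening move is exactly the paper's: take $\sigma(x) = \yes + \no$, $\sigma(y) = \vend$ for $y \neq x$, and observe that $\sigma(m)$ both accepts and rejects $s$, hence so does $\sigma(n)$. But from there you head off into an unnecessarily heavy argument involving the tagged substitution $\sigma_{N,b}(x) = b^N.(\yes+\no)$, depth bounds, and a case split on whether $b$ can be chosen so that no suffix of $s$ is a power of $b$. This creates genuine residual cases ($|\Act|=1$, constant words) that you then propose to handle by a separate induction you do not carry out.

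The paper avoids all of this. The point you are missing is that the \emph{basic} substitution $\sigma$ already suffices, once you exploit the reduced normal form of $m$ fully. There are only two ways $\sigma(n)$ can accept and reject $s$ without $n \xrightarrow{s} x + n'$: either (i) $n \xRightarrow{s'} \yes$ and $n \xRightarrow{s'} \no$ for some prefix $s'$ of $s$, or (ii) $n \xrightarrow{s'} x + n'$ for some \emph{proper} prefix $s'$. For (i), switch to the all-$\vend$ substitution: then $m$ too must accept and reject $s'$, which forces $m$ to reach both $\yes$ and $\no$ along prefixes of $s$ --- impossible for a reduced normal form that still reaches $x+m'$ at $s$. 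For (ii), stay with $\sigma$: now $\sigma(n)$ accepts and rejects $s'$ itself, hence so does $\sigma(m)$; but by the minimality hypothesis $m$ has no $x$-summand at $s'$, so again $m$ would have to reach both $\yes$ and $\no$ along prefixes of $s'$, the same contradiction with reduced normal form. No tagged substitution, no dependence on $|\Act|$, no residual cases. Your proposal is not wrong in spirit, but the $\sigma_{N,b}$ machinery is a detour around an obstacle that the reduced-normal-form hypothesis on $m$ dissolves directly.
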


\begin{proof}
Consider the substitution 
$$ \sigma(y) = 
\begin{cases}
\yes + \no, & \text{if} ~y = x \\
  \vend,  & \text{if} ~ y \neq x.
\end{cases} $$

Since $m \xrightarrow[\text{}]{s} x + m'$ by one of the assumptions of the lemma, we have that $\sigma(m)$ will both accept and reject $s$. Since $m \simeq n$ is sound we have that $\sigma(n)$ must do the same. If $n \not\xrightarrow[\text{}]{s} x + n'$ for every $n'$ then it is not hard to see that there are two ways in which $n$ could accept and reject $s$: 
\begin{enumerate}
\item $n \xRightarrow[\text{}]{s'} \yes$ and $n \xRightarrow[\text{}]{s'} \no$ where $s'$ is a prefix of $s$ (including $s$ itself), or
\item $n \xrightarrow[\text{}]{s'} x + n'$ where $s'$ is a prefix of $s$ (so that $\sigma(n)$ would accept and reject $s'$ and therefore $s$).
\end{enumerate}

In the first case, consider the substitution $\sigma_e$ that maps all variables to $\vend$. Since $n \xRightarrow[\text{}]{s'} \yes$ and $n \xRightarrow[\text{}]{s'} \no$, we have that $\sigma_e(n)$  accepts and rejects $s'$. From $m \simeq n$, we have that  $\sigma_e(m)$ also accepts and rejects $s'$. It is not hard to see that this means that $m \xRightarrow[\text{}]{s'} \yes$ and $m \xRightarrow[\text{}]{s'} \no$ However, this is impossible because $m$ is a reduced normal form and $m \xrightarrow[\text{}]{s} x + m'$ by the proviso of the lemma.

%In the first case we have that $n \xRightarrow[\text{}]{s'} \yes$ and $n \xRightarrow[\text{}]{s'} no$ which means that $n$ would accept and reject the trace $s$ even under the substitution $\sigma(x) = \vend~, \forall x$. By verdict equivalence of $m$ and $n$ the same should apply for $m$, i.e. $m \xRightarrow[\text{}]{s'} \yes$ and $m \xRightarrow[\text{}]{s'} \no$. However this is not allowed since $m \xRightarrow[\text{}]{s} x$ and it is in reduced normal form. Therefore it cannot be the case where  $n \xRightarrow[\text{}]{s'} \yes$ and $n \xRightarrow[\text{}]{s'} \yes$for any prefix $s'$ of $s$ including $s$ itself. 

In the second case, even though both monitors accept and reject $s$, we also have that $\sigma_e(n)$ also accepts and rejects $s'$. Again, since the two monitors are verdict equivalent, we know that $\sigma_e(m)$ must do the same. Since $m$ is in reduced normal form and $m \not \xRightarrow[\text{}]{s_p} x + m'$ for any prefix $s_p$ of $s$ (and therefore neither for $s'$) we have that $\sigma(m)$ can only accept and reject $s'$ by performing the transitions $ m \xRightarrow[\text{}]{s_1''} \yes$ and $ m \xRightarrow[\text{}]{s_2''} \no$, for $s_1''$ and $s_2''$ prefixes of $s'$. This however is not allowed since it contradicts the fact that $m$ is in reduced normal form and $m \xrightarrow[\text{}]{s} x + m'$.
Since both cases have led to a contradiction, we can infer that there is some $n'$ such that $n \xRightarrow[\text{}]{s} x + n'$, which was to be shown. 
 \end{proof}

\begin{corollary}\label{cor:firstVarConvinient}

Let $m\simeq n$ be a sound equation, where  $m,n \in$ \MonF~and $m$ is in reduced normal form. Assume that
\begin{itemize}
\item $m \xrightarrow[\text{}]{s} x + m'$,  for some $s$ in $\Act^*$, variable $x$ and $m'$ in \MonF,
and
\item $n \not \xrightarrow[\text{}]{s} x+ n'$, for any $n' \in$ \MonF.
\end{itemize} then we have that there exists an $s_p$ prefix of $s$ such that
\begin{itemize}
\item $m  \xrightarrow[\text{}]{s_p} x+ m_{s_p}$ and $n  \xrightarrow[\text{}]{s_p} x+ n_{s_p}$ for some $m_{s_p}$ and  $n_{s_p}$ in \MonF, and 
\item for any prefix $s_0$ of $s_p$ we have that $m \not\xRightarrow[\text{}]{s_0} x + m'$ and $n \not\xRightarrow[\text{}]{s_0} x + n'$ for any $m'$ and $n'$ in \MonF.
\end{itemize} 
\end{corollary}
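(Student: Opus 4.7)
The plan is to take $s_p$ to be a shortest prefix of $s$ along which $m$ reaches an $x$-continuation, invoke Lemma~\ref{lem:firstvar} to transfer this reach to $n$, and then, via a substitution argument combined with the reduced-normal-form structure of $m$, rule out any strictly earlier $x$-reach for $n$.

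More precisely, I will set $s_p$ to be a shortest prefix of $s$ such that $m \xrightarrow{s_p} x + m_{s_p}$ for some $m_{s_p}$; this is well defined because $s$ itself is a candidate. Minimality of $s_p$ then yields $m \not\xrightarrow{s_0} x + m'$ for every proper prefix $s_0$ of $s_p$ and every $m'$. Since a term of the form $x + m'$ contains the variable $x$ and is therefore never a verdict---hence never the target of a $\tau$-transition---strong and weak transitions to such terms coincide, so also $m \not\xRightarrow{s_0} x + m'$. Lemma~\ref{lem:firstvar} then yields $n \xrightarrow{s_p} x + n_{s_p}$ for some $n_{s_p}$, which supplies the first bullet of the corollary.

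For the second bullet, suppose towards a contradiction that $n \xrightarrow{s_0} x + n''$ for some proper prefix $s_0$ of $s_p$ and some $n''$. Using the substitution $\sigma$ defined by $\sigma(x) = \yes + \no$ and $\sigma(y) = \vend$ for $y \neq x$, the monitor $\sigma(n)$ both accepts and rejects $s_0$ via the $\sigma(x)$-summand, and by $m \simeq n$ so does $\sigma(m)$. Since $m$ has no $x$-reach at any prefix of length at most $|s_0|$, the $\sigma(x)$-substitution cannot contribute to these verdicts; they must therefore arise from $m$'s own structure, giving $t_1, t_2 \leq s_0$ with $m \xRightarrow{t_1} \yes$ and $m \xRightarrow{t_2} \no$.

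The hard part is then to derive a contradiction from these two verdict reaches, $m \xrightarrow{s_p} x + m_{s_p}$, and the reduced normal form of $m$. If both $\yes$ and $\no$ are summands of $m$, the reduced-normal-form clauses force $m = \yes + \no$, contradicting $m \xrightarrow{s_p} x + m_{s_p}$; hence at most one of $\yes, \no$ is a summand of $m$. I will spell out the subcase where $\yes$ is a summand and $\no$ is not (the symmetric subcase is handled the same way; the case where neither is a summand reduces to one of these two by descending along the canonical sub-monitor path of $m$ to the first state whose top level contains a verdict summand, which is itself in reduced normal form). The reduced-normal-form condition on $m$ then forces every $\no$-reach of $m$ to land on pure $\no$, so $m \xrightarrow{t_2'} \no$ for some non-empty $t_2' \leq t_2$. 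In a reduced normal form each action labels at most one sub-monitor summand, so the only extra non-determinism when taking an action comes from verdict self-loops; these are absorbing and so cannot lie on any path that reaches the non-verdict state $x + m_{s_p}$ or, in our $\yes$-summand-only subcase, the state $\no$. Hence the sub-monitor path of $m$ along the action sequence of $s_p$ must coincide, on its common initial segment of length $|t_2'| \leq |s_0| < |s_p|$, with the one witnessing $m \xrightarrow{t_2'} \no$; thus the $s_p$-path is forced through pure $\no$ at depth $|t_2'|$ and is stuck at $\no$ thereafter, contradicting $m \xrightarrow{s_p} x + m_{s_p}$.
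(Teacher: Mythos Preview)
Your proof is correct and follows the same skeleton as the paper's: take $s_p$ to be a shortest prefix of $s$ at which $m$ has an $x$-reach, invoke Lemma~\ref{lem:firstvar} to get the $n$-side $x$-reach at $s_p$, and then argue that neither side has an earlier $x$-reach. The difference lies in how you handle the second bullet for $n$. The paper dispatches this in one line by appealing to a ``symmetric analysis'' together with Lemma~\ref{lem:firstvar}; read literally, that would require $n$ to be in reduced normal form as well, which the corollary does not assume (though it happens to hold at every point where the corollary is later applied). You instead run the substitution $\sigma(x)=\yes+\no$, $\sigma(y)=\vend$ directly and push the contradiction back onto $m$: an earlier $x$-reach in $n$ forces $\sigma(m)$ to both accept and reject $s_0$, and since $m$ has no $x$-reach at or before $s_0$ this must come from $m$'s own verdicts, which you then show is incompatible with $m$ being in reduced normal form and reaching the non-verdict state $x+m_{s_p}$ along the deterministic sub-monitor chain. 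This is essentially the internal argument of Lemma~\ref{lem:firstvar} replayed with the contradiction landing on $m$ rather than on $n$, so it uses only the hypothesis actually stated. Your treatment of the ``hard part'' (the case split on which verdicts are top-level summands of $m$, and the descent along the canonical sub-monitor path when neither is) is also more explicit than what the paper spells out; it is sound, and your observation that in a normal form each action labels a unique sub-monitor summand is exactly what makes the path-collision argument go through.
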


\begin{proof}

Assume a sound equation $m \simeq n$ for witch we have $m \xrightarrow[\text{}]{s} x + m'$,  for some $s$ in $\Act^*$, variable $x$ and $m'$ in \MonF. If this is the first occurrence of $x$ along the trace $s$ in $m$ (i.e. $m \not \xrightarrow[\text{}]{s_p} x+ m_{s_p}$, for each proper prefix $s_p$ of $s$ and $m_{s_p} \in$ \MonF), then by Lemma \ref{lem:firstvar}, we would have that $n$ must be able to perform the transitions $n \xrightarrow[\text{}]{s} x+ n'$, for some $n'$ in \MonF. Since this cannot be the case as the proviso of the corollary forbids it we have that there must be a prefix $s_p$ of $s$ such that $m  \xrightarrow[\text{}]{s_p} x+ m_{s_p}$.

Without loss of generality we assume $s_p$ to be the shortest such trace, which means there are no other occurrences of the variable $x$ along the trace $s_p$. We can therefore see that now for the trace $s_p$, Lemma \ref{lem:firstvar} holds and therefore $n  \xrightarrow[\text{}]{s_p} x+ n_{s_p}$ for some $n_{s_p}$ in \MonF. Additionally since we assumed $s_p$ t be the shortest trace of the necessary property we already have that $m \not\xRightarrow[\text{}]{s_0} x + m'$ for any $m'$ in \MonF. 

It remains to show that the same must hold for $n$. This can be easily seen to be the case since if we assumed the opposite where for some prefix $s_0$ of $s_p$ we had $n  \xrightarrow[\text{}]{s_0} x+ n_0$ for some $n_0$ then by the symmetric analysis and by using the previous lemma and this corollary we would arrive at a contradiction of $s_p$ being the shortest prefix of $s$ for witch $m  \xrightarrow[\text{}]{s_p} x+ m_{s_p}$. 
\end{proof}
\begin{remark}
In what follows, when studying open equations, we will refer to occurrences of variables such as the one mentioned in the above corollary, where only one of the monitors involved in the equation can reach a term of the form $x + m_x$ after observing a trace $s$, as ``one-sided'' variable occurrences. 
\end{remark}
 
Intuitively Lemma \ref{lem:firstvar} states that on each sound equation (including axioms) of which at least one side is in reduced normal form, the first occurrence of each variable per distinct trace leading to the variable is common for both sides of the equation.This gives us some handy intuition on what restrictions an equation that is sound must satisfy.

The following example shows Lemma \ref{lem:firstvar} in action.  

\begin{example}\label{ex:onesidedvague}
The equation \[ 
x + a.(x + a.(\yes +\no) + b.(\yes +\no)) = x + a.(a.(\yes +\no) + b.(\yes +\no))~\]

 is sound over the set of actions $\Act = \{ a,b \}$, but $$x + a.(x + a.(\yes +\no) + b.(\yes +\no)) = a.(x + a.(\yes +\no) + b.(\yes +\no))$$ is not since the first occurrence of the variable $x$ in the second example happens after the prefix $\varepsilon$ on the left-hand side but after the prefix $a$ on the right. In the second equation, the earliest occurrence of the variable $x$ (after the prefix $\varepsilon$) is one-sided.

Also notice here the importance of the sub-term $a. \displaystyle\sum_{a \in \Act}a.(\yes +\no)$. We will see that this type of sub-term is crucial for the soundness of the open equations with one-sided variable occurrences  we encounter later on. 
\end{example} 
The following notation will be used in what follows to describe a family of sound equations that generalize the one given in Example \ref{ex:onesidedvague}.
\begin{definition} (\textbf{Notation}) Let $s \in \Act^*$.

\begin{enumerate}
\item We use $\pre(s)$ to denote the set of prefixes of $s$ (including $s$). 
\item We use $s^i$, $i \geq 1$, to denote the trace $s$ if $i = 1$ and $ss^{i-1}$ otherwise.  
\item We use $s.m$ to stand for a monitor that can perform exactly the actions along the finite trace $s$ and then become $m$.
\item We define $
  \overline{s}^{\leq}(m) = \displaystyle\sum_{\substack{|s'|\leq |s|,~ \\ s'\not\in \pre(s) }} s'.m  
$
.
The monitor $\overline{s}^{\leq}(m)$ is one that behaves like $m$ after having observed any trace of length at most $|s|$  that is not a prefix of $s$. 

\item The term $\overline{s}(m)$ is defined thus: $ \overline{s}^{\leq}(m) + s.\displaystyle\sum_{a \in \Act } a.m$~.

Intuitively $\overline{s}(\yes + \no)$ stands for the monitor that accepts and rejects all traces that \emph{do not} cause the acceptance or rejection of the string $s$. Those are exactly the traces that are shorter than $s$ but not its prefixes, and also the ones extending $s$.

\item With the term $\overline{s}^{(k)}(m)$, for $k \geq 1$, we will mean the summation: 

 $ \overline{s}(m)~ \text{if} ~k = 1~\text{and}~\displaystyle\sum_{1 \leq i < k-1} s^i.\overline{s}^{\leq}(m) + s^{k-1}.\overline{s}(m)$ if $k \geq 2$.

Intuitively $\overline{s}^{(k)}(\yes + \no)$ stands for a monitor that, after observing the fixed trace $s$,  accepts and rejects everything except the trace $s^k$ (and its prefixes).  
%\item $T(m)$ where $T$ is a set of traces stnds for the summation $\displaystyle\sum_{s \in T} s.m$. 
%\item $\left.{T}\right|_A(m)$  where $T$ and $A$ are sets of traces, stands for $\displaystyle\sum_{s \in T\setminus A} s.m$.
\end{enumerate}
\end{definition}

We now present an example of the usage of the above notation in order to help the reader understand the equations presented later involving these new notions.

\begin{example}\label{ex:onesidedexample}
For a set of actions $\Act = \{ a,b\}$, the monitor $m = \yes + \no$ and a trace $s = ab$ we have that: 

\begin{itemize}
\item $\mathit{pre}(s)  = \{\varepsilon,a,ab \}$

\item $ \overline{s}^{\leq}(m) = b.(\yes+\no) + a.a.(\yes +\no) + b.b.(\yes +\no) + b.a.(\yes+\no) $

\item $\overline{s}(m) = b.(\yes+\no) + a.a.(\yes +\no) + b.b.(\yes +\no) + b.a.(\yes+\no) + a.b.\displaystyle\sum_{c \in \Act} c.(\yes + \no)$

\item and for $k = 3$ we get $$\overline{s}^{(3)}(m) = s.\overline{s}^{\leq}(m) + s^2.\overline{s}(m) = $$ $$a.b.(b.(\yes+\no) + a.a.(\yes +\no)) +$$ $$ a.b.a.b.(b.(\yes+\no) + a.a.(\yes +\no) + b.b.(\yes +\no) +$$ $$  b.a.(\yes+\no)  + a.b.\displaystyle\sum_{c \in \Act} c.(\yes + \no))$$
\end{itemize}
\end{example}
This notation defined and presented above is very useful once one understands a very particular form equations among open monitors take when they involve one-sided variable occurrences. Consider, for instance,  the following sound equation (for a fixed constant $k$):

$$x + a^k.x + \overline{a^k}^{3}(\yes+\no) \simeq x + \overline{a^k}^{3}(\yes+\no) ~.$$

We will formally prove the soundness of (a more general form of) this equation later on. We can intuitively see from the examples above that when an equation contains a one-sided variable occurrence, then the rest of the terms involved in the equation must have some specific form as well so that the equation will stay sound under all possible substitutions. This means that certain traces must always be accepted and rejected by both sides independently of a substitution.

The following lemma formalizes this intuition.
\begin{lemma}\label{lem:sumAfterVar}
Assume $m \simeq n$, where $m,n$ are in reduced normal form. If $m \xrightarrow[\text{}]{s} x + m'$ for some $m'$ but $n \not\xrightarrow[\text{}]{s} x+ n'$ for any $n'$, then there exist $s',s''$ such that $s = s's''$ and, for all $s_b = ss_p$ where $s_p \not\in \pre(s'')$, either: 
\begin{itemize}
\item $m \xRightarrow[\text{}]{s_b} \yes$, $m \xRightarrow[\text{}]{s_b} \no$, $n \xRightarrow[\text{}]{s_b} \yes $ and $n \xRightarrow[\text{}]{s_b} \no$ or
\item $\exists s_0, m'', n''$ such that  $m \xrightarrow[\text{}]{s_0} x + m'' $ and $n \xrightarrow[\text{}]{s_0} x + n''$ and $s_0.s_b \in \pre(s.s_b)$.
\end{itemize}
\end{lemma}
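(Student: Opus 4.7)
The plan is to first extract the structural decomposition using Corollary \ref{cor:firstVarConvinient}, then analyze the behaviour of $\sigma(m)$ and $\sigma(n)$ on each extension $s_b$ under two carefully chosen closed substitutions.

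First, I would apply Corollary \ref{cor:firstVarConvinient} to obtain a prefix $s'$ of $s$ such that both $m \xrightarrow[]{s'} x + m_x$ and $n \xrightarrow[]{s'} x + n_x$ for some $m_x, n_x$ in \MonF, chosen so that no proper prefix of $s'$ admits such a transition in either monitor. Setting $s''$ to be the unique suffix with $s = s' s''$ then fixes the decomposition claimed by the lemma, and it remains to verify the disjunction for an arbitrary $s_b = s s_p$ with $s_p \notin \pre(s'')$.

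Fix such an $s_b$ and consider the closed substitution $\sigma_x$ with $\sigma_x(x) = \yes + \no$ and $\sigma_x(y) = \vend$ for $y \neq x$, together with the substitution $\sigma_{\vend}$ mapping every variable to $\vend$. From $m \simeq n$ and $m \xrightarrow[]{s'} x + m_x$ we obtain that $\sigma_x(m)$, and hence $\sigma_x(n)$, both accepts and rejects every extension of $s'$, including $s_b$. I would then classify how $\sigma_x(n)$ can realize this behaviour on $s_b$: either (i) $n$ itself reaches both $\yes$ and $\no$ along $s_b$ via action transitions, in which case $\sigma_{\vend}(m) \simeq \sigma_{\vend}(n)$ forces $m$ to do the same, yielding condition~1; or (ii) $n$ has some $x$-occurrence $n \xrightarrow[]{s_0} x + n_0$ with $s_0$ a prefix of $s_b$. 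Using the minimality clause of Corollary \ref{cor:firstVarConvinient} together with the hypothesis $n \not\xrightarrow[]{s} x + n'$, I would argue that such an $s_0$ must in fact be a prefix of $s$, because any strictly deeper $x$-occurrence in $n$, combined with the constraint $s_p \notin \pre(s'')$, would contradict one of those facts. A symmetric argument applied to $m$ (using the same $\sigma_x$-driven acceptance/rejection of $s_b$) then produces the matching transition $m \xrightarrow[]{s_0} x + m''$, delivering condition~2.

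The most delicate part of the argument will be ruling out intermediate scenarios, in particular showing that a partial direct-verdict reachability of $n$ along $s_b$ (e.g.\ $n \xRightarrow[]{s_0} \yes$ but not to $\no$ for some prefix $s_0$ of $s_b$) cannot occur in isolation when $m$ and $n$ are in reduced normal form: the reduced-normal-form structure ensures that whenever one verdict becomes reachable along a prefix, the complementary verdict is forbidden from appearing further down the same path, which is precisely what lets the case analysis collapse to the two alternatives stated. The constraint $s_p \notin \pre(s'')$ is what prevents $s_b$ from merely following $s$ itself and thereby bypassing the classification, and keeping track of this exclusion through the nested prefix arguments, while interleaving the information obtained from $\sigma_x$ and $\sigma_{\vend}$ without circularity, is the main source of technical care required.
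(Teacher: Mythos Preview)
Your overall outline mirrors the paper's---extract $s'$ via Corollary~\ref{cor:firstVarConvinient}, set $s=s's''$, then argue a dichotomy for each $s_b=ss_p$---but the central substitution step has a genuine gap.

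You propose to use $\sigma_x(x)=\yes+\no$ and then ``classify how $\sigma_x(n)$ can realize'' acceptance and rejection of $s_b$. The problem is that this classification never gets off the ground: by Corollary~\ref{cor:firstVarConvinient} you already have $n\xrightarrow{s'} x+n_x$, and since $s'$ is a prefix of $s_b$, the term $\sigma_x(n)$ \emph{automatically} accepts and rejects $s_b$ via that very occurrence. So $\sigma_x$ gives no leverage to separate the two alternatives in the conclusion; every $s_b$ falls into the ``$x$ at $s'$'' explanation, which is not one of the two bullets you need to establish. The same objection applies symmetrically to $m$, so nothing forces either condition~1 or condition~2.

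What the paper does instead is exactly calibrated to avoid this collapse: for a fixed $s_p\notin\pre(s'')$ it takes a substitution with $\sigma(x)\xRightarrow{s_p}\yes$ (or $\no$), i.e.\ one that reaches a verdict only after reading precisely $s_p$. Then $\sigma(m)$ accepts $s.s_p$ via the one-sided occurrence of $x$ at $s$, while the common occurrence of $x$ at $s'$ only guarantees that $\sigma(n)$ accepts $s'.s_p$; crucially, $s'.s_p$ is \emph{not} a prefix of $s.s_p$ because $s_p\notin\pre(s'')$. This is what forces $\sigma(n)$ to accept $s.s_p$ by some other mechanism, and only then does the genuine dichotomy (direct verdicts versus a further two-sided $x$-occurrence at some $s_0$) emerge. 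Your $\sigma_{\vend}$ is fine for transferring direct verdicts between $m$ and $n$ once case~(i) is isolated, but you need the targeted substitution, not $\yes+\no$, to isolate it in the first place.
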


%For any sound equation $e:~ m \simeq n$ if $s.x \in m$ and $s.x \not\in n$ then there exists a prefix of $s^{i}$ for some $i$, $s'$ ($s^{i} = s'.s''$),such that for all $s_b \in s^{i}.\overline{s'}$, either $s_b.(\yes+\no) \in m,n$ or $\exists s_0'$ s.t. $s_0'.x \in m,n$ and $s_0'.s_b$ is a prefix of $s.s_b$. 

\begin{proof}
We have an equation $m \simeq n$, with $m$ and $n$ in reduced normal form, for which we assume that: $m \xrightarrow[\text{}]{s} x + m'$ but $n \not\xrightarrow[\text{}]{s} x+ n'$ for any $n'$. Let $s$ be the shortest trace meeting the proviso of the lemma. It is not hard to see that $s \neq \varepsilon$ because $m \simeq n$ and $m$ and $n$ are in reduced normal form. This means that indeed in the monitors $m,n$ all other earlier occurrences of $x$ happen at both sides. 
By Corollary \ref{cor:firstVarConvinient} we know that there is a prefix of $s$ called $s'$ ($s = s'.s''$) such that both $m$ and $n$ can perform the transitions $m \xrightarrow[\text{}]{s'} x + m_0'$ and $n \xrightarrow[\text{}]{s'} x + n_0'$, and in addition for every prefix of $s'$ we have that  $n \not\xrightarrow[\text{}]{s'} x+ n'$ and $m \not\xrightarrow[\text{}]{s'} x+ m'$ for every $m'$ and $n'$.

This means that there are no other one-sided occurrences of the variable $x$ ``between'' $s'$ and $s''$ (otherwise $s$ would not be the shortest trace). Since $m \simeq n$ is sound, we know that under any substitution the resulting monitors are verdict equivalent. 

Consider the set of traces  $$A = \{ t  \mid (|t| \leq |s''|\wedge t \not\in \pre(s''))\vee t = s''.t',~ t' \in \Act^+ \}~. $$

We now associate with this set of traces the class $\mathcal{S}_A$ of substitutions $\sigma$ as the ones that for at least one trace $s_p \in A $ we have that $\sigma(x) \xRightarrow[\text{}]{s_p} \yes$ or $\sigma(x) \xRightarrow[\text{}]{s_p} \no$. Note that the class of substitution $\mathcal{S}_A$ contains many substitution for each trace $s_p$ and, additionally, since the set $A$ is infinite, $\mathcal{S}_A$ is infinite as well. 

Fix now a $s_p$ and a substitution $\sigma \in \mathcal{S}_A$ such that $\sigma(x) \xRightarrow[\text{}]{s_p} \yes$. We have therefore that $\sigma(m) \xRightarrow[]{s's_p} \yes$, $\sigma(n) \xRightarrow[]{s's_p} \yes$  and $\sigma(m) \xRightarrow[\text{}]{ss_p} \yes$. By the construction of $A$, $s's_p$ is not a prefix of $ss_p$ and therefore it is not necessary that $\sigma(n) \xRightarrow[\text{}]{ss_p} \yes$. However, since $m \simeq n$ is sound we have that $\sigma(n)$ must also be able to accept $s.s_p$. One way this could happen is if both monitors, $m$ and $n$ accept and reject the trace $s_b = s.s_p$ where $s_p \in A$ independently of a substitution, i.e. $m \xRightarrow[\text{}]{s_b} \yes$, $m \xRightarrow[\text{}]{s_b} \no$, $n \xRightarrow[\text{}]{s_b} \yes $ and $n \xRightarrow[\text{}]{s_b} \no$. Note that if one monitor can perform these transitions independently of a substitution then the other one must do so as well since they are verdict equivalent. If this is the case then for the traces $s',s''$ with $s = s's''$ an for all $s_b = ss_p$ where $s_p \not\in \pre(s'')$ the first bullet of the lemma holds.

If this is not the case however we have that for a trace $s_p$ and a substitution $\sigma \in \mathcal{S}_A$ such that $\sigma(x) \xRightarrow[\text{}]{s_p} \yes$ the monitor $n$ must somehow accept the trace $ss_p$ and this is not done because $n \xRightarrow[\text{}]{s_b} \yes$.

We remind to the reader here that $s$ is the shortest we could find that satisfied the proviso of the lemma. Therefore there are no other one-sided variable occurrences along the trace $s$. 

This means that the only way than $n$ could accept $s_b$ is another variable occurrence (not one-sided as $s$ is the shortest trace satisfying the proviso of the the lemma) happening after some other prefix $s_0$ of $s$. I.e. $n \xrightarrow[\text{}]{s_0} x + n_1$,  $m \xrightarrow[\text{}]{s_0} x + m_1$ for some monitors $n_1$ and $m_1$ and trace $s_0.s_p$ is a prefix of $s'.s''.s_p  = s.s_p$. Note here that by Corollary \ref{cor:firstVarConvinient} we know that $s'$ is the shortest trace after which the variable $x$ occurs. Therefore our only options for the trace $s_0$ would be the trace $s'$ and its extensions which falls in the second case of the lemma as $s_0.s_p$ is a prefix of $s.s_p$.  

This concludes the case analysis for the shortest $s$ leading to a one-sided variable occurrence of a variable. We continue with a trace $s_1$ as the immediately longer than $s$. For this $s_1$ with $|s_1| \geq |s|$ we can generalize the result as follows: 

If $s \in \pre(s_1)$ then the trace $s'$ we identified with the case analysis $s$ is also a prefix of $s_1$ (i.e. $s_1 = s'.s_1''$) and the same transitions we proved for the traces $s_b$ are also enough for the result to hold for the trace $s_1$.  Assume now that $s \not\in \pre(s_1)$. Then Corollary \ref{cor:firstVarConvinient} still holds and the one-sided variable occurrence after the trace $s_1$ also does not have any other one-sided variable occurrences between itself and the prefix guaranteed by the corollary which means we can apply the same analysis. \end{proof} 

\subsubsection{Completeness of verdict equivalence}\label{sect:ComplOpenFin}

%\paragraph{Verdict equivalence for open terms}

In this section we will present our axiom system for open monitors over a finite number of actions. We start by providing an axiom set, which we prove to be sound and complete for verdict equivalence over \MonF. In order to do so, we first use these axioms to further reduce a normal form of a term. Then, by utilizing this new reduced normal form we use structural induction to prove the completeness of our axiom set. The axiom set we provide is infinite. It is therefore natural to ask whether $\simeq$ is finitely axiomatizable over \MonF. We answer this question negatively by proving that no complete finite axiom set exists for this algebra. This final part follows a different type of argument which we will present in Section \ref{sec:neg}.

% We first identify some useful quantities of any equation between monitors and we use the infinite family of equations provided earlier to show that there are sound equations of arbitrarily large such quantities. We then use induction on the length of the proof to show that any sound equation derived from a finite set of axioms will have an upper bound on these quantities associated with it. Since we know that valid equations of arbitrarily large such quantities exist we conclude that no finite family of axioms can be complete for verdict equivalence in \MonF.  

When studying open equations over a finite set of actions one would hope that one of the axiom systems presented already would be complete. However, we can guarantee that the equations provided in $\mathcal{E}_v'$ are definitely unable to prove every sound open equation. To see this consider the equation used in Example \ref{ex:onesidedexample} (where $k$ is a constant): $$x + a^k.x + \overline{a^k}^{(3)}(\yes+\no) \simeq x + \overline{a^k}^{(3)}(\yes+\no) ~.$$

We can clearly see that one of the sides of this equations contains a one-sided variable occurrence (remember that we are considering terms up to $A1-A4$). The only axiom which has a similar behavior is $O1$. However for axiom $O1$ to be applied it must be the case that a variable is occurring simultaneously with a $\yes$ \textit{{and}} a $\no$ verdict. Since this does not apply for the equation we are examining it is easy to see that no proof involving only the axioms of $\mathcal{E}_v'$ could prove it. 
 
Towards proving this kind of equations and when $\Act$ is finite, we consider the family of axioms

 %With the term $\overline{s}$ we will denote the set of traces $\mathcal{S} = \{w \mid w~is~not~a~prefix~of~s \} $

\label{AxiomsOpen}

 $$\mathcal{O} = \{ O2_{s,k} \mid s \in \Act^*, k \geq 0 \}$$ where 

$${(\mathbf{O2_{s,k}})} ~~ x + s.x + \overline{s}^{(k)}(\yes + \no)  = x + \overline{s}^{(k)}(\yes + \no) ~.$$

We extend our finite axiom set $\mathcal{E}_v'$ for open terms to the infinite $\mathcal{E}_v' \cup \mathcal{O}$, which we will call $\mathcal{E}_{\textit{v,f}}'$. The subscript $f$ in the naming scheme states that the action set for which the axiom system is complete is finite. When the action set is a singleton, we will replace it with the subscript $1$. If the cardinality of the action set is not important, or if it is infinite, then we use no subscript. Based on the naming scheme we have defined, the name of the axiom set $\mathcal{E}_{\textit{v,f}}'$ denotes that we are studying verdict equivalence ($_v$), over open terms ($'$) and for a finite set of actions ($_f$).

\begin{lemma}
$\mathcal{E}_{\textit{v,f}}'$ is sound. That is, if $\mathcal{E}_{\textit{v,f}}' \vdash m = n$ then $m \simeq n$, for all $m,n \in $\MonF.

\end{lemma}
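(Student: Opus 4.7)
The plan is to verify that every axiom of $\mathcal{E}_{v,f}'$ is sound modulo $\simeq$; since $\simeq$ is a congruence by Lemma~\ref{lem:1}, soundness of the axioms together with the rules of equational logic yields the soundness of $\vdash$. The axioms of $\mathcal{E}_v$ are sound by Theorem~\ref{thm:GroundSound} combined with substitutivity, and $O_1$ is trivially sound because both of its sides accept and reject every trace under every closed substitution. The real work lies in showing the soundness of each scheme $O2_{s,k}$ from $\mathcal{O}$.

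Fix $s \in \Act^*$, $k \geq 0$, and a closed substitution $\sigma$ with $\sigma(x) = p$; write $v = \overline{s}^{(k)}(\yes+\no)$. By Lemma~\ref{lemma:axiomD}, for each verdict $w \in \{\yes,\no\}$ we have $L_w(p+s.p+v) = L_w(p)\cup L_w(s.p)\cup L_w(v)$ and $L_w(p+v) = L_w(p)\cup L_w(v)$. The inclusion $L_w(p+v) \subseteq L_w(p+s.p+v)$ is immediate, so it suffices to show $L_w(s.p) \subseteq L_w(p) \cup L_w(v)$ for both $w = \yes$ and $w = \no$.

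The next step is a characterisation of the traces starting with $s$ that are accepted and rejected by $v$. Inspecting the definition of $\overline{s}^{(k)}$, a short case analysis on its summands $s^i.\overline{s}^{\leq}(\yes+\no)$ and $s^{k-1}.\overline{s}(\yes+\no)$ yields: for $t \in s \cdot \Act^*$, $t \in L_{\yes}(v) = L_{\no}(v)$ iff $t \notin \pre(s^k)$. This is all I need about $v$, since $L_w(s.p) \subseteq s \cdot \Act^*$. Now take $s.t' \in L_w(s.p)$, meaning $p \xRightarrow{t'} w$. If $s.t' \notin \pre(s^k)$ then $s.t' \in L_w(v)$ and we are done. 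Otherwise $s.t' \in \pre(s^k)$, so $t' \in \pre(s^{k-1})$, and we may write $t' = s^j.s_0$ with $0 \leq j \leq k-1$ and $s_0 \in \pre(s)$. Using $s_0 \in \pre(s)$, a routine check verifies that $t' = s^j.s_0$ is itself a prefix of $s.t' = s^{j+1}.s_0$, so $s.t' = t' \cdot u$ for some $u \in \Act^*$. Because verdicts self-loop on every action by the rules in Table~\ref{tab:sos_rules}, $w \xRightarrow{u} w$, and this combined with $p \xRightarrow{t'} w$ gives $p \xRightarrow{s.t'} w$, i.e.\ $s.t' \in L_w(p)$.

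The step I expect to be the main obstacle is the combinatorial observation that $t' \in \pre(s^{k-1})$ forces $t'$ to be a prefix of $s.t'$; this is precisely what justifies the particular shape of $\overline{s}^{(k)}$ that appears in $O2_{s,k}$. Once this is pinned down, the absorbing behaviour of verdict states ensures that acceptance and rejection at $t'$ propagate to $s.t'$, and the remainder of the argument is routine bookkeeping via Lemma~\ref{lemma:axiomD}.
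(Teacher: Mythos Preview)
Your proposal is correct and follows essentially the same approach as the paper's proof: reduce to the soundness of the $\mathcal{O}$ schemes, split on whether the trace $s.t'$ lies in $\pre(s^k)$, handle the ``outside'' case via the $\overline{s}^{(k)}(\yes+\no)$ summand, and handle the ``inside'' case via the observation that $t'\in\pre(s^{k-1})$ forces $t'$ to be a prefix of $s.t'$, so the verdict self-loops propagate acceptance/rejection from $t'$ to $s.t'$. The paper makes the same case split (on $s''\in\pre(s^k)$ rather than $s.s''\in\pre(s^k)$, a cosmetic difference) and leaves both the characterisation of $L_w(\overline{s}^{(k)}(\yes+\no))$ and the prefix argument implicit (``it is not hard to see''); you have simply spelled these out.
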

\begin{proof}
We have to prove soundness only for the new family of equations $\mathcal{O}$ as the other equations are sound by Theorem \ref{thm:GroundSound}. 

First of all, note that $\sigma(x  + s.x + \overline{s}^{(k)}(\yes + \no))$ accepts every trace accepted by $\sigma(x + \overline{s}^{(k)}(\yes + \no))$, and rejects every trace rejected by $\sigma(x + \overline{s}^{(k)}(\yes + \no))$. We are therefore left to show that
\begin{itemize}
\item if $ \sigma(s.x)$ accepts some trace then so does $\sigma(x + \overline{s}^{(k)}(\yes + \no))$, and
\item if $ \sigma(s.x)$  rejects some trace then so does $\sigma(x + \overline{s}^{(k)}(\yes + \no))$.
\end{itemize}
We only detail the proof for the latter claim, as that of the former one is similar. To this end, assume that $ \sigma(s.x)$  rejects some trace $s'$. Then $s'= ss'' $ for some $s''$ that is rejected by $ \sigma(s.x)$. If $s''$ is a prefix of $s^k$, then it is not hard to see that $\sigma(x)$ rejects $s'$ too, and thus so does $\sigma(x + \overline{s}^{(k)}(\yes + \no))$. On the other hand, if $s''$ is not a prefix of $s^k$, then  $s'=ss''$ is not a prefix of $s^k$ either. Therefore, $\sigma(\overline{s}^{(k)}(\yes + \no))$ rejects $s'$. It follows that $\sigma(x + \overline{s}^{(k)}(\yes + \no))$ rejects $s'$, and we are done. \end{proof}

%First of all we observe that for any substitution $\sigma$, we have that $\sigma(x + \overline{s}^{(k)}(\yes + \no))$ will be a summand of both sides of the equation and therefore all traces accepted and rejected by it under any $\sigma$ will also be common for both sides. The only part of the equation that is not trivially a summand of both sides is $\sigma(s.x)$. 

% 
%Fix a substitution $\sigma_0$ and a trace$s_0$ that the summand $\sigma_0(s.x)$ rejects. We prove that $s.s_0$ is rejected by the common summands of the equation. The case of a trace $s_0$ that is accepted $\sigma_0(s_0)$ is similar. Therefore, both sides of the equation accept and reject the same trace which suffices the complete the proof. Also if $\sigma_0(x)$ contains a $no$ summand we would have the trivial rejection from both sides of the equation so, we can assume that is not the case. We have now that if $\sigma_0$ maps $x$ to a term $m_x$ such that $ s_0 \in L_r(m_x)$ with $s_0$ a prefix of $s$, then both sides reject $s_0$ and consequently $s.s_0$. On the other hand if $s_0$ is not a prefix of $s$ then the trace $s.s_0$ is also not a prefix of $s.s$ and therefore the summand $\overline{s}^{(k)}.(\yes +no)$ for any $k$ accepts and rejects the trace $s.s_0$ and therefore the summand $\sigma_0(s.m_x)$. 

We provide here some examples of how to use the above to derive some simpler and more intuitive sound equations. 
\begin{lemma}
The following equations are derivable from $\mathcal{O}$ for each $s,s_1 \in \Act^*$: 
\begin{enumerate}
	\item $x + s.x + s.\overline{s_0}(\yes +\no) = x +s.\overline{s_0}(\yes +\no)$, with $s_0$ a prefix of $s$,

	\item $\yes + x + s_1.\overline{s_2}(\no)  = \yes + x + s_1.\overline{s_2}(\no) + s_1.x$, where $s_2$ is any prefix of $s_1$, 
    \item $\no + x + s_1.\overline{s_2}(\yes)  = \no + x + s_1.\overline{s_2}(\yes) + s_1.x$, where $s_2$ is any prefix of $s_1$,   
	\item $x + s.\displaystyle\sum_{a \in \Act} a.(\no + \yes ) =x + s.(x +         \sum_{a \in \Act} a.(\no + \yes ))$.
\end{enumerate}

\end{lemma}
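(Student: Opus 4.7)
My plan is to derive each of the four equations by combining a single instance of the axiom schema $O2_{s,k}$ with manipulations using the basic axioms ($A1$--$A4$, $Y_a$, $N_a$, $D_a$). Throughout, I would exploit the ground-completeness of $\mathcal{E}_v$ (Theorem~\ref{thm:GrCompFin}) to absorb one ``$\overline{\cdot}$''-expression into another whenever the relevant sub-terms are closed and their accept/reject sets are related by inclusion.

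For item~1 with $s_0 = s$, the equation is literally $O2_{s,2}$ after unpacking the definition $\overline{s}^{(2)}(\yes+\no) = s.\overline{s}(\yes+\no)$. For a general prefix $s_0$ of $s$, I would start from $O2_{s,2}$, add $s.\overline{s_0}(\yes+\no)$ to both sides, apply $D_a$ to factor $s.(\overline{s}(\yes+\no) + \overline{s_0}(\yes+\no))$, and then invoke ground-completeness to conclude $\overline{s}(\yes+\no) + \overline{s_0}(\yes+\no) = \overline{s_0}(\yes+\no)$. This last identity holds because $\pre(s_0) \subseteq \pre(s)$ forces the closed term $\overline{s_0}(\yes+\no)$ to accept and reject a superset of what $\overline{s}(\yes+\no)$ does. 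Item~4 is then the special case $s_0 = \varepsilon$ of item~1, after using $D_a$ to rewrite $s.(x + \sum_{a \in \Act} a.(\no+\yes))$ as $s.x + s.\sum_{a \in \Act} a.(\no+\yes)$ and noting that $\overline{\varepsilon}(\yes+\no) = \sum_{a \in \Act} a.(\yes+\no)$ up to $A1$--$A4$.

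For items~2 and~3, which are symmetric, I would first handle the sub-case $s_2 = s_1$. Starting from $O2_{s_1,2}$ and adding $\yes$ on both sides, the key intermediate step is to prove $\yes + s_1.\overline{s_1}(\yes+\no) = \yes + s_1.\overline{s_1}(\no)$: iterated applications of $Y_a$ give $\yes = \yes + s_1.\yes$, which together with $D_a$ lets me pull a $s_1.\yes$-summand inside the $s_1.(\cdots)$ context; a further iterated use of $Y_a$ gives the closed-term identity $\yes + \overline{s_1}(\yes) = \yes$, which absorbs the $\yes$-part of $\overline{s_1}(\yes+\no) = \overline{s_1}(\yes) + \overline{s_1}(\no)$, leaving only $\overline{s_1}(\no)$; then I fold $s_1.\yes$ back out using $\yes = \yes + s_1.\yes$ again. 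This yields $\yes + x + s_1.x + s_1.\overline{s_1}(\no) = \yes + x + s_1.\overline{s_1}(\no)$. To pass to an arbitrary prefix $s_2$ of $s_1$, I add $s_1.\overline{s_2}(\no)$ to both sides, factor via $D_a$, and absorb $\overline{s_1}(\no)$ into $\overline{s_2}(\no)$ using the closed-term identity $\overline{s_1}(\no) + \overline{s_2}(\no) = \overline{s_2}(\no)$, which is valid because $\pre(s_2) \subseteq \pre(s_1)$ and hence derivable by Theorem~\ref{thm:GrCompFin}. Item~3 is obtained by the mirror argument, swapping the roles of $Y_a$ and $N_a$ and of $\yes$ and $\no$.

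The main obstacle I anticipate is the bookkeeping required for the expansion steps: the terms $\overline{s}^{(k)}$ and $\overline{s_0}$ are large finite sums, and applying $D_a$ to distribute a common prefix across such sums and then re-folding needs to be done carefully. A subtler point is ensuring that each ``$+$-absorption'' identity between closed $\overline{\cdot}$-terms appealed to above is actually valid modulo $\simeq$ (so that Theorem~\ref{thm:GrCompFin} applies), which amounts to verifying the prefix-inclusion on the corresponding rejection/acceptance sets.
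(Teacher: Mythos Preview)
Your proposal is correct and follows essentially the same approach as the paper: both start from the instance $O2_{s,2}$ (the paper labels it $O2_{s,1}$, but the expansion it writes out is $s.\overline{s}^{\leq}(\yes+\no)+s.s.\sum_{a}a.(\yes+\no)=\overline{s}^{(2)}(\yes+\no)$), add closed summands to both sides via congruence for $+$, and then absorb one $\overline{\cdot}$-term into another by appealing to ground-completeness (Theorem~\ref{thm:GrCompFin}) for the resulting closed identities. The only organisational difference is that the paper derives items~2 and~3 directly from item~1 (adding $\yes=\yes$, respectively $\no=\no$, and simplifying), whereas you re-run the argument from $O2_{s_1,2}$; this is a harmless reordering, and your reduction of item~4 to item~1 with $s_0=\varepsilon$ is exactly what the paper does.
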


\begin{proof}
We first show how to derive the first equation and then we derive the rest from it. 
We start by picking the equation $O2_{s,1}$ i.e. 
$$x + s.x + s.\overline{s}^{\leq}(\yes +\no) + s.s.\displaystyle\sum_{a \in \Act} a.(\yes +\no) = $$
$$ x + s.\overline{s}^{\leq}(\yes +\no) + s.s.\displaystyle\sum_{a \in \Act} a.(\yes +\no)~.
$$ 

In addition we have the tautology $$s.s_0.\displaystyle\sum_{a \in \Act} a.(\yes+\no) = s.s_0.\displaystyle\sum_{a \in \Act} a.(\yes+\no)~ ,$$ for the specific prefix $s_0$ of $s$.  On the two valid above equations we apply the congruence rule for $+$ and have:

$$x + s.x + s.\overline{s}^{\leq}(\yes +\no) + s.s.\displaystyle\sum_{a \in \Act} a.(\yes +\no) +s.s_0.\displaystyle\sum_{a \in \Act} a.(\yes +\no)$$
$$ = x + s.\overline{s}^{\leq}(\yes +\no) + s.s.\displaystyle\sum_{a \in \Act} a.(\yes +\no) +s.s_0.\displaystyle\sum_{a \in \Act} a.(\yes +\no)~.
$$ 

The first simplification that we perform now is by observing that the summand $s.s_0.\displaystyle\sum_{a \in \Act} a.(\yes +\no)$ accepts and rejects a prefix of the whole summand $s.s.\displaystyle\sum_{a \in \Act} a.(\yes +\no)$ and therefore we can eliminate the latter from the summation: $$x + s.x + s.\overline{s}^{\leq}(\yes +\no)+s.s_0.\displaystyle\sum_{a \in \Act} a.(\yes +\no)$$
$$ = x + s.\overline{s}^{\leq}(\yes +\no) + s.s_0.\displaystyle\sum_{a \in \Act} a.(\yes +\no)~.$$
In addition the term $s.\overline{s}^{\leq}$ can be rewritten as $s.\overline{s_0}^{\leq}(\yes +\no)+ s.s_0.\overline{s_1}(\yes +\no)$ with $s = s_0.s_1$. To see this, consider that the traces up to length $|s|$ that do not cause a rejection of the trace $s$ are the ones  that do not cause a rejection of its prefix $s_0$ and the ones that start with $s_0$ but do not cause the rejection of its continuation $s_1$. Thus we have: 

$$x + s.x + s.\overline{s_0}^{\leq}(\yes +\no)+s.s_0.\displaystyle\sum_{a \in \Act} a.(\yes +\no) + s.s_0.\overline{s_1}(\yes +\no)$$
$$ = x + s.\overline{s_0}^{\leq}(\yes +\no)+s.s_0.\displaystyle\sum_{a \in \Act} a.(\yes +\no) + s.s_0.\overline{s_1}(\yes +\no)~.$$ 

Now we have again that the summand $s.s_0.\displaystyle\sum_{a \in \Act} a.(\yes +\no)$ accepts and rejects a prefix of the whole summand $s.s_0.\overline{s_1}(\yes +\no)$ and therefore we can omit the latter. This gives us the equation: 

$$x + s.x + s.\overline{s_0}^{\leq}(\yes +\no)+s.s_0.\displaystyle\sum_{a \in \Act} a.(\yes +\no)= $$ 
$$x + s.\overline{s_0}^{\leq}(\yes +\no)+s.s_0.\displaystyle\sum_{a \in \Act} a.(\yes +\no) ~,$$ which can be rewritten using our notation as $$x + s.x + s.\overline{s_0}(\yes +\no)= x + s.\overline{s_0}(\yes +\no)~,$$ giving us the target equation. 

Having presented the proof for the first family of equations in detail we give a short description for the rest. For the equations $(2)$ and $(3)$
it suffices to use the congruence rule for $+$ with the equations $\yes  = \yes$ and $\no = \no$ respectively and then simplify the equations by using the distribution axiom for $+$. For the latter equation $(4)$ it is enough to instantiate the prefix $s_0$ in the the family of equations $(1)$ as the empty string $\varepsilon$. This is, of course, allowed since the empty string is a prefix of any string. \end{proof}
%%%%%%%%%%

Now that we have discussed the family of axioms $\mathcal{O}$, we proceed to use them in defining a notion of reduced normal form that is suitable for monitors over a finite action set.
\begin{definition}\label{def:RNFopenfin}
A \textbf{\textit{finite-action-set reduced normal form}} is a term $$m = \displaystyle\sum_{a \in A} a. m_a + \sum_{i \in I} x_i~ [+\yes]~ [+\no]$$
where  each $m_a$ is different from $\vend$ and if $v \in \{ \yes,\no\}$ is a summand of m then each $m_a$ is $v$-free, and in reduced normal form. If both $\yes$ and $\no$ are summands of $m$ then $m$ is equal to $\yes + \no$. In addition for every trace $s$, if there exists a $k$ such that for all the traces $s_0$ $$\overline{s}^{(k)}(\yes+\no) \xrightarrow[\text{}]{s_0} \yes +\no,~\text{implies:}~ m \xRightarrow[\text{}]{s_0} \yes ~\text{and}~ m \xRightarrow[\text{}]{s_0} \no$$ then $m \not \xrightarrow[\text{}]{s} x_i +m'$ for all $i \in I $ and $m'$.

\end{definition}

In order to use the above form of the monitors in \MonF~ we need to prove that any term  can be rewritten in a reduced normal form using the axioms in $\mathcal{E}_{\textit{v,f}}'$. Before doing so we will prove the following useful lemma, which only uses axioms form $\mathcal{E}_v$. 
\begin{lemma}\label{lem:seperationOfTraces}
For a monitor $m \in$ \MonF:
\begin{itemize}
\item if $m \xRightarrow[\text{}]{s} \yes $  then $\mathcal{E}_v \vdash m = m + s.\yes$ and
\item if $m \xRightarrow[\text{}]{s} \no$ then $\mathcal{E}_v \vdash m = m + s.\no$.
\end{itemize} 
\end{lemma}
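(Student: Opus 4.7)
The plan is to prove the first bullet only, since the second follows by a completely symmetric argument in which every appearance of $\yes$ and $Y_a$ is replaced by $\no$ and $N_a$. I would proceed by structural induction on $m \in $\MonF, using Lemma~\ref{lemma:axiomD} to split the sum case and a key sub-induction for the verdict case.

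First, the trivial cases. If $m = x$, then $m$ has no outgoing transitions in Table~\ref{tab:sos_rules}, so the premise $m \xRightarrow{s} \yes$ never holds and the statement is vacuous. Similarly for $m \in \{\vend, \no\}$, the only verdict reachable from $m$ is itself, so again the premise fails. The only non-vacuous verdict case is $m = \yes$, where $\yes \xRightarrow{s} \yes$ holds for every $s \in \Act^*$ via the self-loops $\yes \xrightarrow{\alpha} \yes$. Here I would use a sub-induction on $|s|$: for $s = \varepsilon$, the equation $\yes = \yes + \yes$ is axiom $A3$; for $s = a.s'$, the induction hypothesis gives $\mathcal{E}_v \vdash \yes = \yes + s'.\yes$, and then
$$\yes = \yes + a.\yes = \yes + a.(\yes + s'.\yes) = \yes + a.\yes + a.s'.\yes = \yes + s.\yes,$$
using $Y_a$, the congruence rule, $D_a$, and $A3$ to collapse the duplicated $a.\yes$ summand.

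For the inductive cases, if $m = a.m'$, observe that $a.m'$ cannot perform a $\tau$-transition (the only $\tau$-rule is $v \xrightarrow{\tau} v$ on verdicts and it lifts through $+$, not through action prefix), so $m \xRightarrow{s} \yes$ forces $s = a.s'$ with $m' \xRightarrow{s'} \yes$. The induction hypothesis applied to $m'$ yields $\mathcal{E}_v \vdash m' = m' + s'.\yes$, and then congruence under $a.\_$ together with $D_a$ gives
$$a.m' = a.(m' + s'.\yes) = a.m' + a.s'.\yes = a.m' + s.\yes.$$
If $m = m_1 + m_2$, then by Lemma~\ref{lemma:axiomD}, $m \xRightarrow{s} \yes$ implies $m_i \xRightarrow{s} \yes$ for some $i \in \{1,2\}$. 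By the induction hypothesis on $m_i$, $\mathcal{E}_v \vdash m_i = m_i + s.\yes$; adding $m_{3-i}$ to both sides via the congruence rule for $+$ and rearranging with $A1$--$A3$ yields $\mathcal{E}_v \vdash m_1 + m_2 = m_1 + m_2 + s.\yes$.

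The only step that is even mildly non-routine is the verdict sub-induction on $|s|$, which is where the axioms $Y_a$ and $D_a$ must interact in the right order; every other case reduces immediately to the induction hypothesis and the fact, quoted from the excerpt, that $A1$--$A4, E_a, Y_a, N_a$ already suffice to absorb single strong transitions as summands. The main conceptual obstacle is just keeping track of when the weak transition $\xRightarrow{s}$ genuinely differs from the strong transition $\xrightarrow{s}$: this difference only matters at verdicts, which is precisely why the $m = \yes$ case carries all the substance.
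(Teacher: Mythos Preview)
Your proof is correct, but it follows a different route from the paper. The paper proceeds by induction on the length of $s$ rather than on the structure of $m$: in the base case $s=\varepsilon$, the premise $m \xRightarrow{\varepsilon} \yes$ forces $\yes$ to be a syntactic summand of $m$, whence $A3$ suffices; in the inductive step $s = a.s'$, the paper invokes the previously stated fact that $m \xrightarrow{a} m_a$ implies $\mathcal{E}_v \vdash m = m + a.m_a$, applies the induction hypothesis to $m_a$ and $s'$, and then distributes via $D_a$. Your structural induction instead pushes all the work into the verdict case $m=\yes$, where you run a length-of-$s$ sub-induction; the prefix and sum cases then become purely compositional. Both decompositions are clean; the paper's is shorter because the absorption fact for single strong transitions already encapsulates your structural cases, whereas your version is more self-contained and makes the role of $Y_a$ and $D_a$ explicit. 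One tiny slip: in your sub-induction, the final step $\yes + a.\yes + a.s'.\yes = \yes + a.s'.\yes$ uses $Y_a$ (read right to left), not $A3$---there is no duplicated $a.\yes$ summand to collapse.
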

\begin{proof}
We prove both statements by induction on the length of the trace $s$ and limit ourselves to presenting the proof for the first one. 
\begin{itemize}
\item If $s$ is the empty trace, then $m$ accepts the empty trace. Therefore it must contain a $\yes$ syntactic summand and we are done. 
% by Antonis: And the lemma follows form reperitive applications of the axio $Y_a$. 
\item Assume now that $s = a.s'$. Then $m \xrightarrow[]{a} m_a \xrightarrow[]{s'} \yes$ for some $m_a$. By induction $ \mathcal{E}_{\textit{v,f}}' \vdash m_a = m_a + s'.\yes$.

Now,  
%$$
\begin{align*}
\mathcal{E}_{\textit{v,f}}' \vdash m &= m + a.m_a= m + a.(m_a + s'.\yes)
%$$ $$ 
= m + a.m_a + a.s'.\yes\\ &= m +a.s'.\yes
\tag*{\qedhere}
\end{align*}
%$$

% we have proved that we can create a syntactic summand in $m$ for all of the traces up to length $\ell$. 
%\item Take now a trace $s_0$ of length $\ell + 1$.
%Since $m$ must accept and reject the trace $s_0$ it is necessary that $m \xrightarrow[\text{}]{s_0} \yes$. If $m \xrightarrow[\text{}]{s_{0[-1]}} \yes$ with $s_{0[-1]}$ being the trace of all but the last action of $s_0$ then we can use the induction hypothesis and say that $\mathcal{E}_{fin} \vdash  m= s_{0[-1]}.\yes + m $. Then we use axiom $Y_a$ for $a$ being the last action of $s_0$ to transform said equation to $\mathcal{E}_{fin} \vdash  m= s_{0[-1]}.(\yes + a.\yes) + m$ and then by the distribution axiom $D_a$ we have the conclusion. If $m \not \xrightarrow[\text{}]{s_{0[-1]}} \yes$ then it must be the case where $m$ can perform all actions of $s$ and arrive at a $\yes$. By applying the distribution axiom $D_a$ for each action of $s_0$ we have the necessary syntactic summand. 
\end{itemize} \end{proof}

We will also need a similar result, this time involving syntactic summands that contain occurrences of variables. 
\begin{lemma}\label{lem:seperationOfVariables}
For a monitor $m \in$ \MonF, where $m$ is in normal form for open terms, if $m \xrightarrow[\text{}]{s} x + m_s$ then $\mathcal{E}_{v} \vdash m = m' + s.x$ where $m' \not \xrightarrow[\text{}]{s} x + m''$ for every $m''$. 

\end{lemma}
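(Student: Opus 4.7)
The plan is to proceed by induction on $|s|$, leveraging the fact that in an open normal form $\sum_{a \in A} a.m_a + \sum_{i \in I} x_i~[+\yes]~[+\no]$, each $a \in A$ indexes a unique summand $a.m_a$, so the non-verdict $a$-successors of $m$ are pinned down by the syntax.

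For the base case $s = \varepsilon$, the transition $m \xrightarrow[]{\varepsilon} x + m_s$ forces $m = x + m_s$. Since in a normal form the set $\{x_i \mid i \in I\}$ has distinct elements, $x$ is exactly one of the $x_i$, and pulling that single summand out yields $m = m' + x$, where $m'$ no longer has $x$ as a top-level summand. In particular, $m' \not\xrightarrow[]{\varepsilon} x + m''$ for any $m''$.

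For the inductive step $s = a \cdot s'$, the key observation is that any $s$-labelled transition $m \xrightarrow[]{s} x + m_s$ must factor as $m \xrightarrow[]{a} m_a \xrightarrow[]{s'} x + m_s$, because the only other $a$-successors of $m$ come from verdict summands $\yes$ or $\no$, and verdicts can only transition to themselves and so can never reach a term of the form $x + m_s$. Hence $a \in A$, and since $m_a$ is itself in open normal form by Definition \ref{def:NormalFormOpen}, the induction hypothesis supplies an $m_a^*$ with $\mathcal{E}_v \vdash m_a = m_a^* + s'.x$ and $m_a^* \not\xrightarrow[]{s'} x + n$ for any $n$. Congruence together with axiom $D_a$ lets me rewrite $a.(m_a^* + s'.x)$ as $a.m_a^* + a.s'.x = a.m_a^* + s.x$, so $\mathcal{E}_v \vdash m = m' + s.x$, where $m'$ is obtained from $m$ by replacing the summand $a.m_a$ with $a.m_a^*$.

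To verify the side condition $m' \not\xrightarrow[]{s} x + m''$, I would note that any $a$-successor of $m'$ is either $m_a^*$ (coming from the $a.m_a^*$ summand) or one of the verdicts $\yes$, $\no$ present as summands; the verdicts only reach themselves under $\xrightarrow[]{s'}$, and $m_a^* \not\xrightarrow[]{s'} x + n$ by the inductive hypothesis. The main point that needs care is not a deep obstacle but a bookkeeping one: verifying that each action $a \in A$ in the normal form corresponds to a \emph{unique} non-verdict continuation $m_a$, so that ``the'' monitor to which the induction hypothesis is applied is unambiguous. This uniqueness is built into Definition \ref{def:NormalFormOpen}, and once it is invoked, the rest of the argument reduces to routine equational manipulation with $A1$--$A4$ and $D_a$.
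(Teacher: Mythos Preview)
Your proposal is correct and follows essentially the same approach as the paper: induction on $|s|$, extracting the unique summand $a.m_a$ in the inductive step, applying the induction hypothesis to $m_a$, and using $D_a$ to split off $s.x$. If anything, you are slightly more careful than the paper in noting that the verdict summands $\yes$ and $\no$ can take $a$-transitions but only to themselves, so they cannot interfere with the side condition $m' \not\xrightarrow[]{s} x + m''$; the paper's proof glosses over this by writing ``$m' \not\xrightarrow[]{a}$'' for the remaining part, which is not literally true if a verdict is present.
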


\begin{proof}
We prove the claim by induction on the length of the trace $s$. 
\begin{itemize}
\item If $s$ is the empty trace then $m \xrightarrow[\text{}]{\varepsilon} x + m_s = m$. This means that $x$ is a summand of $m$. Since $m$ is in normal form, $m_s$ does not have $x$ as a summand and we are done.

\item Assume now that $s = a.s'$. Since m is in normal form and $m \xrightarrow[]{a.s'} s + m'$, we have that $m = m' + a.m_a$ for some $m' \not\xrightarrow[]{a}$ and $m_a$ in formal form such that $m_a  \xrightarrow[]{s'} x+m'$. By the induction hypothesis, $ \mathcal{E}_{v} \vdash m_a = m_a' + s'.x$ where $m_a' \not\xrightarrow[]{s'}x + m_{s'}'$ for every $m_{s'}'$.

Therefore we have:

$$m = m' + a.m_a = m + a.(m_a' + s'.x) = m' + a.m_a' +a.s'.x~,~~$$
and since $m_a' \not\xrightarrow[]{s'} x+ m_{s'}'$ for every $m_{s'}'$ and $m' \not\xrightarrow[]{a}$ we have that $ m = s.x + m_{rest}$ with $m_{rest} \not\xrightarrow[]{s} x + m''$ for every $m''$ and we are done. 
\qedhere
\end{itemize} \end{proof}

\begin{lemma}\label{lem:RNFOpenFin}
Each open monitor $m \in$ \MonF, is provably equal to a reduced normal form using $\mathcal{E}_{\textit{v,f}}'$.
\end{lemma}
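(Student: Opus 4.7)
The plan is to start from the open reduced normal form delivered by Lemma~\ref{lem:redNFOpen} and then iteratively eliminate every violation of the extra clause of Definition~\ref{def:RNFopenfin} using the axioms in the family $\mathcal{O}$. I would proceed by strong induction on the number $\nu(m)$ of variable occurrences in $m$, with a subsidiary induction on $\depth(m)$ to handle the recursive descent into the subterms $m_a$. The base case $\nu(m)=0$ reduces to Lemma~\ref{lem:realReducedNF_Closed}, since the extra clause is vacuous when $I=\emptyset$. For the inductive step, I first invoke Lemma~\ref{lem:redNFOpen} to obtain an open reduced normal form at the top level, and then apply the inner induction to each $m_a$ (which has strictly smaller depth) so that each of them is already a finite-action-set reduced normal form.

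Now I would check whether the extra clause holds for $m$ itself. If it does, we are done. Otherwise there are $s \in \Act^*$, $k \ge 1$ and $i \in I$ such that $\overline{s}^{(k)}(\yes+\no)$ is dominated in verdicts by $m$, yet $m \xrightarrow[]{s} x_i + m'$ for some $m'$. I would eliminate this offending occurrence in three steps. First, Lemma~\ref{lem:seperationOfVariables} lets me derive $\mathcal{E}_{v,f}' \vdash m = \bar{m} + s.x_i$ with $\bar{m}$ not reaching $x_i$ via $s$; since $x_i$ is a top-level summand of $m$ (because $i \in I$), it is reachable by the empty trace and is therefore preserved in $\bar{m}$, so I can split $\bar{m}=\hat{m}+x_i$ and write $m = \hat{m} + x_i + s.x_i$. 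Second, for each of the finitely many traces $s_0$ with $\overline{s}^{(k)}(\yes+\no) \xrightarrow[]{s_0} \yes+\no$ I use Lemma~\ref{lem:seperationOfTraces} twice, and then combine the resulting summands by means of $(D_a)$ to derive $\mathcal{E}_{v,f}' \vdash m = m + \overline{s}^{(k)}(\yes+\no)$. Third, putting the two derivations together gives $m = \hat{m} + x_i + s.x_i + \overline{s}^{(k)}(\yes+\no)$, and an application of axiom $O2_{s,k}$ (instantiated with $x \mapsto x_i$) yields $m = \hat{m} + x_i + \overline{s}^{(k)}(\yes+\no)$. The right-hand side contains strictly fewer variable occurrences than $m$, because the inner $s.x_i$ has been absorbed and $\overline{s}^{(k)}(\yes+\no)$ is closed; the outer inductive hypothesis then finishes the argument.

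The main obstacle I anticipate is controlling the termination measure across the interleaving of top-level applications of $\mathcal{O}$ with the recursive normalization calls. Adding the closed summand $\overline{s}^{(k)}(\yes+\no)$ enlarges the depth of $m$, which means that after the step I may need to re-invoke Lemma~\ref{lem:redNFOpen} and possibly re-descend into freshly enlarged $m_a$'s; verifying that $\nu$ nonetheless strictly decreases at each $\mathcal{O}$-step, and that the renormalization carried out via axioms in $\mathcal{E}_v'$ preserves the multiset of variable occurrences, is the delicate bookkeeping to be done. A second smaller technical point is the explicit derivation in step two, where distributing $(D_a)$ and idempotence $(A3)$ must be carefully combined with the inductive definition of $\overline{s}^{(k)}$ to rewrite the finite sum $\sum_{s_0} s_0.\yes + s_0.\no$ into the single symbolic term $\overline{s}^{(k)}(\yes+\no)$.
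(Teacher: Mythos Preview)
Your proposal is correct and follows essentially the same elimination strategy as the paper: isolate the offending summand $s.x_i$ via Lemma~\ref{lem:seperationOfVariables}, exhibit $\overline{s}^{(k)}(\yes+\no)$ as a redundant summand of $m$ via Lemma~\ref{lem:seperationOfTraces}, and then fire $O2_{s,k}$. Two remarks on the differences.

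First, in your second step the paper avoids the manual $(D_a)$ bookkeeping you flag as delicate: once one has derived $\mathcal{E}_{v,f}' \vdash m = m + \sum_{s_0 \in \mathcal{S}} s_0.(\yes+\no)$, the paper simply observes that $\sum_{s_0 \in \mathcal{S}} s_0.(\yes+\no)$ and $\overline{s}^{(k)}(\yes+\no)$ are two closed, verdict-equivalent terms and invokes the ground completeness of $\mathcal{E}_v$ (Theorem~\ref{thm:GrCompFin}) to identify them. This shortcut spares you from recreating the nested shape of $\overline{s}^{(k)}$ by hand.

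Second, you are actually more careful than the paper about termination: the paper's proof exhibits one elimination step and leaves the iteration implicit. Your worry that the measure $\nu(m)$ may not be robust under renormalisation is justified, since $(D_a)$ applied left-to-right can duplicate variable occurrences. A safer measure is the (finite) set of pairs $(s,x)$ with $m \xrightarrow{s} x+m'$, computed after bringing $m$ to open normal form; this set is invariant under $\mathcal{E}_v'$-renormalisation of a term already in normal form and strictly shrinks at each $O2_{s,k}$-step, which suffices for the outer induction.
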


\begin{proof}
From Lemma \ref{lem:NormalFormOpen} we can start from a monitor $m$ already in open normal form, as given in Definition \ref{def:NormalFormOpen}. Therefore we have the following cases: 

\begin{itemize}
\item $m = \yes + \no$.
\item $m = \yes + \displaystyle\sum_{a \in A} a.m_a + \sum_{i \in I} x_i$, where each $m_a$ is $\yes$-free.
\item $m = \no + \displaystyle\sum_{a \in A} a.m_a + \sum_{i \in I} x_i$, where each $m_a$ is $\no$-free.
\item $m = \displaystyle\sum_{a \in A} a.m_a + \sum_{i \in I} x_i$.
\end{itemize}
We begin our analysis from the second case. A similar analysis can be applied to the third one and the fourth one follows by a simpler version of the same inductive argument.
We have therefore a monitor $m  = \yes +  \displaystyle\sum_{a \in A} a.m_a + \sum_{i \in I} x_i$. The extra claim for these reduced normal forms is that if for some trace $s$, and a $k_0$, for all traces $s_0$, $$\overline{s}^{(k_0)}(\yes+\no) \xrightarrow[\text{}]{s_0} \yes +\no,~\text{implies:}~ m \xRightarrow[\text{}]{s_0} \yes ~\text{and} ~ m \xRightarrow[\text{}]{s_0} \no$$
then $m \not \xrightarrow[\text{}]{s} x_i +m_x$ for all $i \in I $ and $m_x$. In order to prove this extra constraint we assume the premise is true. We will show that we can reduce $m$ to $m_{red}$ with $\mathcal{E}_{fin } \vdash m = m_{red}$ and $m_{red} \not\xrightarrow[\text{}]{s} x_i + m_x $ for every $i \in I$ and every $m_x$.

Since $m$ accepts and rejects all the traces that $\overline{s}^{(k_0)}(\yes+\no)$ accepts and rejects, we have that $m \simeq m + m'$ and that $m' \xrightarrow[\text{}]{s_0} \yes + \no$ for all of the traces $s_0$ that $\overline{s}^{(k_0)}(\yes+\no) \xrightarrow[\text{}]{s_0} \yes +\no$. We call this set of traces $\mathcal{S}$ which is finite since $k_0$ is fixed. Therefore by Lemma \ref{lem:seperationOfTraces} we have that $\mathcal{E}_{\textit{v,f}}' \vdash m = m + \displaystyle\sum_{s_0 \in \mathcal{S}} s_0.(\yes +\no)$. Since the term $ \displaystyle\sum_{s_0 \in \mathcal{S}} s_0.(\yes +\no)$ is verdict equivalent to $\overline{s}^{(k_0)}(\yes+\no)$ and both terms are closed, we have that by Theorem \ref{thm:GrCompFin}, $\mathcal{E}_v \vdash \displaystyle\sum_{s_0 \in \mathcal{S}} s_0.(\yes +\no) = \overline{s}^{(k_0)}(\yes+\no)$. Therefore $\mathcal{E}_{\textit{v,f}}' \vdash m = m + \overline{s}^{(k_0)}(\yes+\no)$, which means
 $$\mathcal{E}_{\textit{v,f}}' \vdash m = \yes +  \displaystyle\sum_{a \in A} a.m_a + \sum_{i\in I} x_i + \overline{s}^{(k_0)}(\yes+\no).$$ 

For the same monitor $m$ we now want to argue that if $m\xrightarrow[\text{}]{s} x_i$ for one of the variables in $\{ x_i \mid i \in I\}$ then we can eliminate this occurrence.

Since $m$ is in reduced normal form we have by Lemma \ref{lem:seperationOfVariables} that $m = m' + s.x_i$ where $m' \not\xrightarrow[]{s} x_i$. Additionally we have shown that $m = m + \overline{s}^{(k_0)}(\yes+\no)$ which implies $m = m' + s.x_i + \overline{s}^{(k_0)}(\yes+\no)$ with $m' \not\xrightarrow[\text{}]{s} x_i $. Since $x_i$ is one of the variables that appear as summands of $m$ we can successfully apply the axiom $O2_{s,k_0}$ for each variable and we have the that indeed $m$ reduces to a monitor $m_{red}$ such that $m_{red} \not\xrightarrow[\text{}]{s} x_i + m_{x_i}$ for every $i \in I$ and every $m_{x_i}$. \end{proof}

\begin{lemma}\label{lem:ExistsTrace}
If monitor $m \in$ \MonF, with $|\Act| \geq 2$ is in reduced normal form and contains an $x$ summand and $m \xrightarrow[\text{}]{s} x + m'$ for some $m'$ then there is at least one trace $s_{bad}$ such that  for every $k$, $$ \overline{s}^{(k)}(\yes + \no)\xRightarrow[\text{}]{s_{bad}}\yes~\text{and}~\overline{s}^{(k)}(\yes + \no)\xRightarrow[\text{}]{s_{bad}} \no$$ but $$m \not \xRightarrow[\text{}]{s_{bad}} \yes~\text{or}~m \not \xRightarrow[\text{}]{s_{bad}} \no.$$
\end{lemma}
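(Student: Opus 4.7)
The strategy is to exploit the contrapositive of the finite-action-set reduced-normal-form condition in Definition \ref{def:RNFopenfin}: since $m \xrightarrow{s} x + m'$, for every $k \geq 1$ there must exist a trace $t_k$ that is both accepted and rejected by $\overline{s}^{(k)}(\yes + \no)$ yet is not both accepted and rejected by $m$. Writing $A_k$ for the set of traces accepted and rejected by $\overline{s}^{(k)}(\yes + \no)$, the set $T := \bigcap_{k \geq 1} A_k$ is exactly the collection of candidates for $s_{bad}$. The whole task is therefore to show that at least one such witness can be chosen inside $T$, or else to construct one directly.

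The key technical ingredient is a structural observation about the sets $A_k$: for every $k \geq 2$, the difference $A_k \setminus T$ consists precisely of the prefixes of $s^{\omega}$ of length strictly greater than $k|s|$. Indeed, unwinding the summation that defines $\overline{s}^{(k)}(\yes + \no)$ one sees that any trace in $A_k$ must start with $s$, and that a trace which is itself a prefix of $s^{\omega}$ can only reach the verdict-offering state $\yes + \no$ after strictly extending the $s^k$ mark, because each summand $s^i . \overline{s}^{\leq}(\yes + \no)$ or $s^{k-1} . \overline{s}(\yes + \no)$ accepts or rejects only suffixes that contain a letter breaking the $s$-pattern. Conversely, any trace in $A_k$ that is not a prefix of $s^{\omega}$ deviates from $s^{\omega}$ at some position strictly inside an $s$-block; such a trace is then accepted and rejected by every $\overline{s}^{(j)}(\yes + \no)$ with $j \geq 1$, and hence lies in $T$.

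From here the proof proceeds by a dichotomy. If for some $k$ the witness $t_k$ already lies in $T$, take $s_{bad} = t_k$ and we are done. Otherwise, every $t_k$ is a prefix of $s^{\omega}$ of length $> k|s|$. Let $d = \depth(m)$ and choose $k$ so large that $k|s| > d$. Because $m$ is recursion-free of depth $d$, any acceptance path $m \xRightarrow{t} \yes$ must already be witnessed by a prefix of $t$ of length at most $d$ (after which $m$ sits at a verdict state and self-loops), and similarly for rejection. For our particular $t_k$, every prefix of length $\leq d$ is a prefix of $s^{\omega}$; hence the condition $t_k \notin L_a(m) \cap L_r(m)$ forces either that $m$ fails to accept every prefix of $s^{\omega}$ of length $\leq d$, or that $m$ fails to reject every such prefix.

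It remains to build $s_{bad}$ in this second branch. The degenerate case $s = \varepsilon$ cannot arise here, since then $A_k = \Act^+ = T$ for every $k \geq 1$ and the first branch of the dichotomy already applies. Using $|\Act| \geq 2$, pick any $b \in \Act$ different from the first letter of $s$ and any $N$ with $N|s| > d$, and set $s_{bad} = s^N . b$. Because $b$ differs from the first letter of $s$, $s_{bad}$ is not a prefix of $s^{\omega}$, so by the structural observation $s_{bad} \in T$. Moreover, every prefix of $s_{bad}$ of length $\leq d$ is a prefix of $s^{\omega}$, so by the previous paragraph $m$ cannot both accept and reject $s_{bad}$, completing the proof. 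The main obstacle is the structural lemma in the second paragraph, on which both branches of the case analysis rely.
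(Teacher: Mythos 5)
Your proof is correct and follows essentially the same strategy as the paper's: extract a per-$k$ witness from the contrapositive of the reduced-normal-form condition, observe that the only traces lying in some $A_k$ but not in all of them are the long prefixes of $s^{\omega}$, and in that case use the depth bound on $m$ together with a letter differing from the pattern of $s$ to build a single uniform witness. Your explicit computation of $T=\bigcap_k A_k$ and the clean two-branch dichotomy make the argument tighter than the paper's iterative search for $s_{bad}$, but the underlying ideas coincide.
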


\begin{proof}

We can easily show that for each $k$ there exists an $s_k$ such that  $\overline{s}^{(k)}(\yes + \no)\xrightarrow[\text{}]{s_k} \yes +\no $ but $m \not \xRightarrow[\text{}]{s_k} \yes$ or $m \not \xRightarrow[\text{}]{s_k} \no$. This follows since if this were not the case then for some $k_0$, no such trace $s_{k_0}$ exists. Thus the monitor would contain a summand $m' \simeq \overline{s}^{(k_0)}(\yes + \no)$ for this $k_0$ and still it would be able to perform the transition $m \xrightarrow[\text{}]{s} x +m'$ which contradicts the assumption that $m$ is in reduced normal form.

We will now show that one trace $s_{bad}$ suffices for all $k$. To that end, consider the term $\overline{s}^{(1)}(\yes + \no)$. If there is an $s_1$, which is not a prefix of $ss$ and $m \not \xRightarrow[\text{}]{s_{1}} \yes$ or $m \not \xRightarrow[\text{}]{s_{1}} \no$ then for $s_{bad} = s_1$ we have that for all $k$, $\overline{s}^{(k)}(\yes + \no)\xRightarrow[\text{}]{s_{bad}}\yes~\text{and}~\overline{s}^{(k)}(\yes + \no)\xRightarrow[\text{}]{s_{bad}} \no $ and we are done. If this is not the case and since the trace $s_1$ is guaranteed to exist (by the previous paragraph) then it must be an extension of $ss$. Again if $s_1$ is not prefix of $sss$ then again for $s_{bad} =s_1 $ we have the necessary conclusion.

 Otherwise $m \not \xRightarrow[\text{}]{s_{1}} \yes$ or $m \not \xRightarrow[\text{}]{s_{1}} \no$ for the trace $s_1 = ssa$, where $a$ is the first action of $s$. Therefore by the definition of $\overline{s}^{(k)}(\yes + \no)$ we have that for all $s_b$ such that $\overline{s}^{(k)}(\yes + \no)\xrightarrow[\text{}]{s_b}\yes +\no$ and $k > 1$ we have that $m \not \xRightarrow[\text{}]{s_b} \yes$ or $m \not \xRightarrow[\text{}]{s_b} \no$. This allows us to look for an $s_{bad}$ which will also cover the case $k=1$ in larger terms. 

We then apply the same reasoning for $k =2 , \ldots $ up to a certain $k_b$. If at any point in the process we encounter a trace $s_i$ which fulfills our premise then we can stop. We are just left to show that this process will eventually terminate.

 This can be shown as follows. Recall that every monitor $m$ has a finite depth $\depth(m)$ (see Def.~\ref{def:depth}). We now take a $k_{b}$ large enough so that $s^{k_{b}} > \depth (m)$. If the iterative procedure described above reaches this $k_b$ we have that $m \not \xRightarrow[\text{}]{s_{bad}} \yes$ or $m \not \xRightarrow[\text{}]{s_{bad}} \no$ for the trace $s^{k_b +1}a$ where $a$ is the first action of $s$. However since the depth of the monitor $m$ is smaller that the length of this trace we also have that the monitor cannot accept or reject any of its extensions.

Therefore for the extension $s_{bad} = s^{k_b +1}ac$ where $c$ is not the second action of $s$ we have that for all $k > k_b$, $\overline{s}^{(k)}(\yes + \no)\xRightarrow[\text{}]{s_{bad}}\yes~\text{and}~\overline{s}^{(k)}(\yes + \no)\xRightarrow[\text{}]{s_{bad}} \no$ while $m \\not \xRightarrow[\text{}]{s_{bad}} \yes$ or $m \not \xRightarrow[\text{}]{s_{bad}} \no$. Additionally since the iterative procedure we described above reached this $k_b$ we have that for all $ i \leq j \leq k_b$, it is true that $\overline{s}^{(k)}(\yes + \no)\xRightarrow[\text{}]{s_{j}}\yes~\text{and}~\overline{s}^{(k)}(\yes + \no)\xRightarrow[\text{}]{s_{j}} \no$, which concludes the proof. \end{proof}

%%%%%%%%%%%%%%%%%%%%%%%%

The two lemmata above play a key role in the completeness proof we
will present now.

 We distinguish two cases separately, namely when $|\Act| \geq 2$ and when $\Act$ is a singleton. This is necessary because equations such as $x = x + a.x$ are only sound when $\Act = \{ a\} $. For the proof when $|\Act| \geq 2$ it is necessary to utilize at least two actions $a,b \in \Act$, which is the reason why when only one action is available new cases arise.
\paragraph{\textbf{Action set with at least two actions}}

We have already shown the soundness of the axiom system $\mathcal{E}_{\textit{v,f}}'$. We now proceed to show completeness. 

For each such completeness theorem we follow a similar general strategy in order to prove that two arbitrary verdict equivalent monitors have identical reduced normal forms.  To that end, we prove that they have identical variables as summands, that the sets of initial actions that each one can perform are equal and that after a common action they reach monitors that are also verdict equivalent. Unfortunately, for a finite set of actions, we were not able to define a substitution that would cover all the three above-mentioned steps like we did when the set of actions was infinite. We therefore adopted a proof strategy that focuses on each part of the proof separately. 
 
\begin{theorem}\label{thm:CompOpenFin}
$\mathcal{E}_{\textit{v,f}}'$ is complete for open terms for finite Act with $|\Act| \geq 2$. That is, if $m \simeq n$ then $\mathcal{E}_{\textit{v,f}}' \vdash m = n$.
\end{theorem}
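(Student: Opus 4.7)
The plan is to mirror the overall structure of the completeness proof in the infinite-action case (Theorem~\ref{thm:CompOpenInf}), replacing the fresh-action substitution $\sigma(x) = a_x.(\yes+\no)$ — which is unavailable when $\Act$ is finite — by a structural argument driven by Lemma~\ref{lem:firstvar}, Corollary~\ref{cor:firstVarConvinient}, Lemma~\ref{lem:sumAfterVar} and Lemma~\ref{lem:ExistsTrace}. Using Lemma~\ref{lem:RNFOpenFin}, I would first reduce both $m$ and $n$ to reduced normal form (in the sense of Definition~\ref{def:RNFopenfin}). Then I would proceed by induction on the sum of the sizes of $m$ and $n$ to show that the two normal forms are identical modulo $A1$--$A4$, by establishing the same four structural claims as in the infinite-action setting: (\textbf{C1}) the top-level verdict summands of $m$ and $n$ coincide; (\textbf{C2}) the top-level variable summands coincide; (\textbf{C3}) the initial action sets $A$ and $B$ coincide; and (\textbf{C4}) for every $a \in A = B$, the sub-monitors $m_a$ and $n_a$ are verdict equivalent, so the inductive hypothesis yields $\mathcal{E}_{v,f}' \vdash m_a = n_a$ and closure under $+$ delivers $m = n$.

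Claim \textbf{C1} follows as in the infinite-action proof, by substituting $\vend$ for every variable and comparing acceptance and rejection of $\varepsilon$. The critical step is \textbf{C2}. Suppose, for a contradiction, that $x$ is a top-level summand of $m$ but not of $n$, so that $m \xrightarrow{\varepsilon} x + m'$ is a one-sided occurrence. Applying Lemma~\ref{lem:sumAfterVar} with $s = \varepsilon$ forces, for every non-empty trace $s_b$, one of two alternatives: either both $m$ and $n$ accept and reject $s_b$ independently of any substitution, or there exist $m'', n''$ with $m \xrightarrow{\varepsilon} x + m''$ and $n \xrightarrow{\varepsilon} x + n''$. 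The second alternative is immediately contradictory, as it puts $x$ as a top-level summand of $n$. The first alternative means in particular that, for every $k$, $m$ accepts and rejects every trace that $\overline{\varepsilon}^{(k)}(\yes+\no)$ accepts and rejects — and this is precisely what Lemma~\ref{lem:ExistsTrace} rules out, since $m$ is in reduced normal form and has $x$ as a summand reachable after $\varepsilon$.

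For \textbf{C3}, assume $a \in A \setminus B$. Then $m_a \neq \vend$ is in reduced normal form. Using a carefully chosen substitution (the default is $\sigma_{\vend}$, refined to map specific variables to $\no$ or $\yes$ when top-level verdict summands or shared top-level variables would otherwise mask the distinction), one exhibits a trace $as$ that is accepted or rejected by $\sigma(m)$ but not by $\sigma(n)$, since $\sigma(n)$ has no initial $a$-transition other than through a top-level variable summand, and \textbf{C2} together with Lemma~\ref{lem:ExistsTrace} prevent such a variable from uniformly simulating $a.m_a$ across the finitely many traces involved. Claim \textbf{C4} is then straightforward: for each $a \in A = B$ and every closed substitution $\sigma$, the acceptance and rejection sets of $\sigma(m_a)$ and $\sigma(n_a)$ are exactly the $a$-residuals of those of $\sigma(m)$ and $\sigma(n)$, modulo the already-matched top-level verdicts; hence $m_a \simeq n_a$ and the induction hypothesis concludes.

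The main obstacle will be executing \textbf{C2} and the variable-hiding subcase of \textbf{C3} rigorously. In the infinite-action case a single substitution with a fresh action per variable separates all structural features at once, whereas here every attempt to distinguish $m$ from $n$ by substitution must contend with the fact that both sides share the same finite alphabet, and that a variable on one side can in principle mimic an action on the other. The key leverage is the new family of axioms $\{O2_{s,k}\}_{s,k}$, which is built into the definition of reduced normal form and guarantees, via Lemma~\ref{lem:ExistsTrace}, that a variable summand reachable after a trace $s$ is incompatible with uniform acceptance and rejection of all traces covered by $\overline{s}^{(k)}(\yes+\no)$ for every $k$. This rigidity is the single tool that replaces the fresh-action trick and allows each one-sided-occurrence case to close by contradiction.
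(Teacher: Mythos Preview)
Your overall strategy matches the paper's: reduce to finite-action-set reduced normal form via Lemma~\ref{lem:RNFOpenFin}, then prove structural identity by induction through claims \textbf{C1}--\textbf{C4}. However, you have located the difficulty in the wrong claim.

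Claims \textbf{C2} and \textbf{C3} admit short direct proofs by well-chosen substitutions, and this is what the paper does. For \textbf{C2} (in the representative case where $\yes$ is the only top-level verdict of $m$, hence of $n$), map $x$ to $\no$ and every other variable to $\vend$: then $\sigma(m)$ rejects $\varepsilon$, so $\sigma(n)$ must too, forcing $x$ to be a top-level summand of $n$. For \textbf{C3}, pick $b \neq a$ (this is where $|\Act|\geq 2$ enters) and map every variable to $b.\no$: since $m_a$ is $\yes$-free and $\neq \vend$, $\sigma(m_a)$ rejects some $s$, hence $\sigma(m)$ rejects $as$; but no summand of $\sigma(n)$ other than $a.n_a$ can reject a trace starting with $a$, so $a \in B$. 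Your more elaborate route to \textbf{C2} via Lemma~\ref{lem:sumAfterVar} and Lemma~\ref{lem:ExistsTrace} can be made to work, but it is not needed.

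The genuine gap is in \textbf{C4}. Your assertion that ``the acceptance and rejection sets of $\sigma(m_a)$ and $\sigma(n_a)$ are exactly the $a$-residuals of those of $\sigma(m)$ and $\sigma(n)$, modulo the already-matched top-level verdicts'' is false, because the shared top-level variable summands $x_i$ also contribute traces beginning with $a$. Concretely, from
\[
a\cdot L_r(\sigma(m_a)) \;\cup\; \bigcup_{i\in I} L_r(\sigma(x_i))
\;=\;
a\cdot L_r(\sigma(n_a)) \;\cup\; \bigcup_{i\in I} L_r(\sigma(x_i))
\]
you cannot cancel the common part to infer $L_r(\sigma(m_a)) = L_r(\sigma(n_a))$. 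This is exactly the phenomenon behind the family $\mathcal{O}$: a top-level $x$ can absorb a deeper one-sided occurrence of $x$ inside $m_a$, provided the appropriate $\overline{s}^{(k)}(\yes+\no)$ summand is also present. The paper spends most of its proof on \textbf{C4}: assuming some $\sigma_0$ separates $m_a$ from $n_a$ on a trace $s_0$, it shows that the only way $\sigma(n)$ can still match $\sigma(m)$ on $as_0$ is through a top-level variable $x_0$ that also occurs one-sidedly inside $m_a$ along a prefix $s'$ of $s_0$; it then invokes Lemma~\ref{lem:ExistsTrace} (applied to $m$, the top-level variable $x_0$, and the trace $as'$) to obtain a trace $s_b$ such that the modified substitution $\sigma_{bad}(x_0)=s_b.\no$ makes $m$ reject $as's_b$ while $n$ cannot---contradicting $m\simeq n$. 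The acceptance-set case is then reduced to the rejection-set case. You will need to supply this argument; it is where Lemma~\ref{lem:ExistsTrace} is actually used in the paper, not in \textbf{C2}.
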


\begin{proof}

By Lemma \ref{lem:RNFOpenFin} we may assume that $m$ and $m$ are in reduced normal form. We prove the claim by induction on the sum of the sizes of $m$ and $n$, and proceed with a case analysis on the form $m$ may have.

In the case where $m$ contains both a $\yes$ and a $\no$ summand then both $m$ and $n$ must be equal to $\yes + \no$ as they are in reduced normal form. 

Assume now that $$m = \yes+ \displaystyle\sum_{a \in A} a.m_a + \sum_{i \in I} x_i~, $$ where $\{x_i \mid i \in I \}$ is the set of variables occurring as summands of $m$ and each $m_a$ is $\yes$-free and different from $\vend$ (as a reduced normal form). Since $\sigma(m)$ accepts $\varepsilon$ for each $\sigma$ and $m \simeq n$, monitor $n$ is bound to have a similar form since it must contain the verdict $\yes$ as a summand (but not a $\no$ one). Therefore: $$n = \yes+ \displaystyle\sum_{b \in B} b.n_b + \sum_{j \in J} y_j$$ and we need to show that there is a way to apply our axioms to show that monitor $n$ is provably equal to $m$.

We start by proving that $\{x_i \mid i \in I \} = \{y_j \mid j \in J \}$. By symmetry, it suffices to show that $\{x_i \mid i \in I \} \subseteq \{y_j \mid j \in J \}$. To this end, assume $x \in \{x_i \mid i \in I \}$. Consider the substitution $\sigma$ mapping $x$ to $\no$ and every other variable to $\vend$, i.e:
\[
\sigma(y) = 
\begin{cases}
\no, & \textit{if}~ y = x \\
  \vend,& \textit{otherwise}.
\end{cases}
\] 

Then, $\sigma(m)$ rejects the empty trace $\varepsilon$. Since $\sigma(m) \simeq \sigma(n)$, we have that $\sigma(n)$ must also reject $\varepsilon$. By the form of $n$ and the definition of $\sigma$, this is only possible if $n$ has $x$ as a summand, and we are done. Therefore the set of variables of $m$ is a subset of the variables of $n$. 

%By constructing the symmetric substitution, the set of variables of $n$ is proved a subset of the variables of $m$ which makes them identical. 

Next, we  prove that the action sets $A,B$ are identical. 
%If both $A$ and $B$ are empty then they are equal and the monitors at also trivially proven equal. Assume now that $A$ is non empty. 
%At least one of the sets $A,B$ must be non-empty otherwise they are trivially equal.
%Since $m$ is not equal to $\yes +no$ there is at least one $a \in A$. 
Assume that $ a \in A$. Since $\Act$ contains at least two actions, there is some action $b \neq a$. Consider the substitution $\sigma_1$ defined by $\sigma_1(x) = b.\no$ for each $x \in \Var$. Since $a \in A$ and $m_a$ is $\yes$-free and different from $\vend$, it is easy to see that there exists an $s \in \Act^*$ such that $as\in L_r(\sigma_1(m))$. Since $m \simeq n$ we have that $\sigma_1(m) \simeq \sigma_1(n)$ and therefore $\sigma_1(n)$ must also reject $as$.  By the form of $n$ and the definition of $\sigma$, this is only possible if $n \xrightarrow[]{a} n_a$ for some $n_a$ and therefore $a \in B$. Hence, $A \subseteq B$ and the claim follows by symmetry.

For the final part of the proof we must show that $m_a \simeq n_a $ for each $a \in A$, which is enough to complete the proof, by the induction hypothesis.
Towards a contradiction we will assume that the two monitors $m_a,n_a$ are not verdict equivalent. Therefore there exists a substitution $\sigma_0$ that separates them, that is without loss of generality, there is a trace $s_0$ such that $s_0\in L_r(\sigma_0(m_a)), s_0 \not\in L_r(\sigma_0(n_a))$ or there is some $s_0 \in L_a(\sigma_0(m_a)), s_0 \not\in L_a(\sigma_0(n_a))$ .

%We are aware that if $m_a$ is not verdict equivalent to $n_a$ then various substitutions would cause different sorts of disagreements. For our convenience we will later on pick a specific one. As for now we continue without assuming something about the substitution. 

We will analyze first the case of rejection of the string $s_0$.  The substitution $\sigma_0$ must be a closed one for $m_a , n_a$ i.e. it must map to a closed monitor all variables in $(Var(m_a)\cup Var(n_a))$. We will use this substitution to create a new one $\sigma_{bad}$ that would also separate the original monitors $m,n$.

%The first step towards this is: $\sigma_{bad} (Var(m)\setminus (Var(m_a)\cup Var(n_a))) = b.\no$, where $b\in \Act$ and $b\not= a$. Also  $\sigma_{bad} ((Var(m_a)\cup Var(n_a))) = \sigma_0 $. It is important to see here that in order for the above argument to work there must be at least two discrete action in $|\Act|$. 
The first step towards this is:

$$
\sigma_{bad} (x) =
\begin{cases}
\vend ,~\text{if}~ x \in Var(m)\setminus (Var(m_a)\cup Var(n_a)),\\
  \sigma_0(x),~ otherwise. \\
\end{cases}$$ 

Now since $s_0 \in L_r(\sigma_0(m_a))$ and $\sigma_{bad}(m_a) = \sigma_{0}(m_a)$ we also know that $a.s_0 \in L_r(\sigma_{bad}(a.m_a))$. Our aim is to show that  $a.s_0 \not\in L_r(\sigma_{bad}(n))$. Following the definition $\sigma_{bad}(n_a) = \sigma_0 (n_a)$ and therefore $s_0 \not\in L_r(\sigma_{bad}(n_a))$.

Hence, the only way for $\sigma_{bad}(n)$ to reject $a.s_0$, like $\sigma_{bad}(m)$ does, is if it was rejected by the mapping of one the variables contained in the set $\{ x_i \mid i \in I  \}$.

It is useful to make here apparent that in order for $\sigma_{bad}(n)$ to reject $a.s_0$, it must do so completely independently of the summand $\sigma_{bad}(a.n_a)$, since the latter cannot reject any of the prefixes of $a.s_0$ as well. 
Even in the case where $s_0$ starts with $a$, and $\sigma_0(n_a)$ rejects some $a.s_1.s_2.\ldots s_{n-i}$ it would still be impossible for $\sigma_0(n_0)$ to reject $a.s_0$ since the assumption that $a.s_0 = a.a.s_1.s_2. \ldots s_{n-1}$ would automatically imply that $\sigma_0(n_a)$ rejects some prefix of $s_0$ which is a contradiction. 

%\begin{figure*}[h]
%\begin{tikzpicture}
%  [
%    grow                    = right,
%    sibling distance        = 6em,
%    level distance          = 10em,
%    edge from parent/.style = {draw, -latex},
%    every node/.style       = {font=\footnotesize},
%    sloped
%  ]
%  \node [root] {$m$}
%    child { node [env] {$m_a$}
%      edge from parent node [above] {a} }
%    child { node [dummy] {}
%      child { node [dummy] {}
%        child { node [env] {align\\flalign}
%          edge from parent node [below] {at relation sign?} }
%        child { node [env] {alignat}
%          edge from parent node [above] {at several}
%                           node [below] {places?} }
%        child { node [env] {gather}
%                edge from parent node [above] {centered?} }
%        edge from parent node [below] {aligned?} }
%      child { node [env] {multline}
%              edge from parent node [above, align=center]
%                {first left,\\centered,}
%              node [below] {last right}}
%              edge from parent node [above] {multi-line?} };
%\end{tikzpicture}
%\end{figure*}
\begin{figure}[h]
\begin{tikzpicture}
  [
    grow                    = right,
    sibling distance        = 6em,
    level distance          = 9em,
    edge from parent/.style = {draw, -latex},
    every node/.style       = {font=\footnotesize},
    sloped
  ]
  \node [env] {$\sigma_{bad}(m)$}
    child { node [dummy] {$\no$}
      edge from parent node [above] {$s''$} }
    child { node [env] {$\sigma_{bad}(m_a)$}
      child { node [env] {$\sigma_{bad}(m_s +x)$}
        child { node [dummy] {$\no$}
          edge from parent node [above] {$s''$} }
        edge from parent node [above] {$s'$} }
              edge from parent node [above] {$a$} };
\end{tikzpicture}
\caption{Transitions the monitor $\sigma_{bad}(m)$ can perform}
\end{figure}
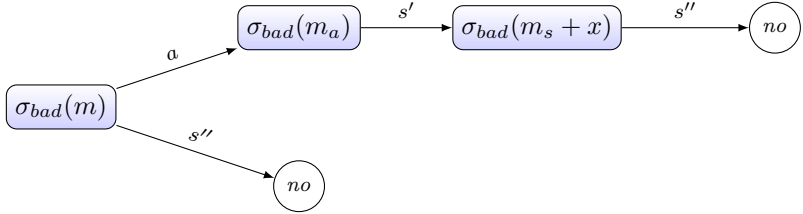

By the definition of $\sigma_{bad}$, the variables that did not appear at all in $n_a$ or $m_a$ were mapped to $\vend$ and therefore cannot reject any string. Therefore the only way for $n$ to reject $a.s_0$ is for one of the variables appearing in $Var(n_a) \cup Var(m_a)$ to have been mapped to a closed term that can reject $a.s_0$. (Note that this does not contradict the fact that $\sigma_{bad}(n_a)$ does not reject $s_0$). Therefore there is at least one $x_0 \in Var(m_a) \cup Var(n_a)$ and $x_0 \in \{x_i \mid i \in I \}$ such that $as_0 \in L_r(\sigma_{bad}(x_0) )$.

This leads to the case were $m,n$ reject a prefix of $as_0$ because of the mapping of $x_0$.
%($|\sigma_{bad}(x_0)| \geq 1$)
However this implies that we have the following situation: $$m = \yes + x_0 + a.m_a + \displaystyle\sum_{b \in A\setminus \{a\}} b.m_b + \sum_{i \in I\setminus\{0\}} x_i \simeq $$

$$\yes + x_0 + a.n_a +\displaystyle\sum_{b \in A\setminus \{a\}} b.n_b + \sum_{i \in I\setminus\{0\}} x_i = n$$  
 and that the monitor $m_a$ can perform the transitions: $ m_a \xrightarrow[\text{}]{s'} m_a' + x_0 $  and the monitor $\sigma_0(x_0) = \sigma_{bad}(x_0)$ respectively can perform the transitions: 
$ \sigma_{bad}(x_0) \xrightarrow[\text{}]{s''} \no ~,$ where $s'$ is a prefix of $s_0$ (i.e. $s_0 = s'.s''$) and in addition $n_a \not \xrightarrow[]{s'} x+n'$ for any $n'$. This means respectively that $ m \xrightarrow[]{as'}  m_a' + x_0$ and $ \sigma_{bad}(m_a' + x_0) \xrightarrow[\text{}]{s''} \no ~.$

By Lemma \ref{lem:ExistsTrace} we have that there exists at least one trace $s_b$ such that $m \not \xRightarrow[\text{}]{s_b} \yes $ or $m \not \xRightarrow[\text{}]{s_b} \no$ but $s_b \in L_r(\overline{as'}^{(k)}(\yes+\no))$ for all $k\geq 0$. Since $m$ contains a $\yes$ summand we have that it must be the case that  $m \not \xRightarrow[\text{}]{s_b} \no$. We now, further modify $\sigma_{bad}$ to map the variable $x_0$ to $s_b.\no$ and any other variable $y \neq x_0$ to $\vend$. We have then that $s_b$ and $as'.s_b \in L_r(\sigma_{bad}(m))$. In addition $s_b \in L_r(\sigma_{bad}(n))$. However the traces that are rejected by the term $\overline{as'}^{(k)}$, by definition, are exactly the traces such that their rejection does not cause a rejection of the $as'$ trace. This means that under the modified substitution $\sigma_{bad}$, monitor $n$ \emph{cannot} reject the trace $as'.s_b$. This deems the monitors $m,n$ not verdict equivalent, which contradicts our assumption. We conclude then that the rejection set of $m_a$ is equal to the rejection set of $n_a$ for each $a \in A$.

It remains to show that $m_a$ and $n_a$ also have identical acceptance sets. Towards a contradiction, assume they do not and take a trace $s$ that under some substitution $\sigma_0$ separates them, i.e. $s \in L_a(\sigma_0(m_a))$ and $s \not\in L_a(\sigma_0(n_a))$. In addition, assume that $s$ is of minimum length, meaning that no prefix of $s$ (under any substitution) has the property of separating the acceptance sets of $m_a$ and $n_a$. This fact in addition to $m_a$ and $n_a$ being $\yes$-free (as a result of $m$ and $n$ being in reduced normal form) means that the acceptance of $s$ by $m_a$ is the result of a variable $x$ occurring in $m_a$ as $m_a \xrightarrow[\text{}]{s} x+m'$ for some $m'$. Since however the assumption is that $s \not\in L_a(\sigma_0(n_a))$ we have that $n_a \not \xrightarrow[\text{}]{s} x+n'$ for any $n'$.
We know that this is exactly the case since if the variable $x$ occurred earlier in $m_a$ then by mapping it to $\yes$ we would have a shorter trace being accepted by $\sigma_0(m_a)$ but not $\sigma_0(n_a)$.

We are sure now that monitor $\sigma_0(n_a)$ cannot perform the transition $ \sigma_0(n_a )\xRightarrow[\text{}]{~s~} \yes$, which means that not only it does not arrive at the variable $x$ after reading the trace $s$, but also does not arrive to the $\yes$ verdict for any of its prefixes (say $s'$) as that would imply that it can reach the $\yes$ verdict for $s$ as well.

Finally, by $n$ being in reduced normal form, and by $m_a$ not arriving at a $\no$ verdict for any of the prefixes $s'$ of $s$ (as this would mean that if becomes a $\no$ and therefore cannot perform the transitions $m_a \xrightarrow[\text{}]{~s~} x$ ) we know that $n_a$ does not arrive to the $\no$ verdict after reading the trace $s$ or any of its prefixes either.  
%
% Note here that any variable mapped to $no$ immediately makes any sequence following it both accepted and rejected.
 
Given all of the above we can now construct the substitution $\sigma_{bad}$ that would separate the rejection sets of $n_a,m_a$ which is enough to prove the contradiction as the case where such a substitution exists and separates the rejection sets of the two sub-monitors has already been covered.
The situation we have at hand is as follows: 

Monitor $\sigma_0(m_a)$ can arrive to the verdict $\yes$ after reading the trace $s$ while $\sigma_0(n_a)$ cannot and also neither $n_a$ nor $m_a$ can produce a $\no$ verdict for the trace $s$. Therefore if we switch the mapping of $x$ to $\no$ in $\sigma'$ and the verdicts of all other variables that where mapped to a $\no$ verdict to $\vend$ we have produced a substitution that causes $s$ to be rejected by $\sigma'(m_a)$ but not from $\sigma'(n_a)$. By utilizing our previous construction there exists another one that separates the monitors $n,m$ as well which is a contradiction.

We have concluded then that the $L_a(m_a) = L_a(n_a)$ and $L_r(m_a) = L_r(n_a)$ which means that they are verdict equivalent. Therefore we can apply the inductive hypothesis and have that $\mathcal{E}_{\textit{v,f}}' \vdash m_a = n_a$. Using now congruence rules we have that $\mathcal{E}_{\textit{v,f}}' \vdash m = n$.  
All other possible forms of monitors $m,n$ are sub-cases that the relative analysis can be applied symmetrically and therefore they are omitted. \end{proof}   

\paragraph{\textbf{Singleton Action Set}}

We proceed now with the analysis of the completeness result when $\Act = \{ a \}$.

 As we mentioned earlier, when $a$ is the only action, the equation 
\[
(V_1) \quad x = x + a.x
\]
is sound, but cannot be proved from the equations in $\mathcal{E}_{\textit{v,f}}'$ over $\{ a \}$. Indeed, unlike $V_1$, all the equations in $E_v$ are sound regardless of the cardinality of the action set and those in the family $\mathcal{O}$ introduce subterms of the form $\yes + \no$, which can never be removed in equational derivations.

\begin{theorem}\label{thm:CompOpenUnary}
The finite axiom system $\mathcal{E}_{v,1}'=\mathcal{E}_v' \cup \{ V_1\} $ is
complete for verdict equivalence over open monitors when $\Act
= \{a \}$. That is, if $m \simeq n$ then $\mathcal{E}_{v,1}' \vdash m
= n$. Hence, verdict equivalence is finitely based when $\Act
= \{a \}$.
\end{theorem}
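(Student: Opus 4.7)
The plan is to follow the general blueprint of Theorems~\ref{thm:CompOpenInf} and~\ref{thm:CompOpenFin}: strengthen the notion of open reduced normal form so as to exploit the new axiom $V_1$, show that every monitor is provably equal to such a reduced normal form, and then prove that any two verdict-equivalent reduced normal forms are syntactically identical modulo $A1$--$A4$.

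First I would verify that $V_1$ is sound when $\Act = \{a\}$. Because every verdict has $\alpha$-self-loops for every $\alpha$, both $L_a(m)$ and $L_r(m)$ are closed under extension for every closed $m$; when $\Act = \{a\}$ this forces $\{a\} \cdot L_a(m) \subseteq L_a(m)$ and $\{a\} \cdot L_r(m) \subseteq L_r(m)$, so $L_a(m + a.m) = L_a(m)$ and $L_r(m + a.m) = L_r(m)$. I would then introduce a \emph{unary reduced normal form}: an open reduced normal form with the additional property that each variable occurs as a summand at a unique depth. Applying $V_1$ by substitution yields $t = t + a.t$ for every term $t$, and iterating gives $\mathcal{E}_{v,1}' \vdash x = x + a^k.x$ for every $k \geq 1$. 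Combined with $D_a$ and $A1$--$A4$, this allows one to absorb any deeper occurrence of a variable that already appears as a summand at a shallower depth, so every monitor is provably equal to a unary reduced normal form. Such a monitor has the shape of a finite chain $V_0 + a.(V_1 + a.(V_2 + \cdots + a.V_d))$, where each $V_i \subseteq \Var \cup \{\yes,\no\}$ is finite and, by the reduced normal form constraints, each of $\yes$, $\no$, and every variable occurs in at most one $V_i$.

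The heart of the proof is an induction on the sum of the sizes of $m$ and $n$ showing that any two verdict-equivalent monitors in unary reduced normal form are syntactically equal modulo $A1$--$A4$. After handling the degenerate cases $m = \vend$ and $m = \yes + \no$, write $m = V_0^m + [a.m_a]$ and $n = V_0^n + [a.n_a]$. Using the substitution $\sigma_\vend$ that maps every variable to $\vend$ one sees that $V_0^m$ and $V_0^n$ contain the same verdicts; using $\sigma_x: x \mapsto \no$, $y \mapsto \vend$ for $y \neq x$ (and the dual $x \mapsto \yes$ when $\no \in V_0^m$) one sees that they contain the same variables. A similar substitution argument, distinguishing $\varepsilon$-rejection from $a^k$-rejection with $k \geq 1$, shows that $m$ has an $a$-subterm iff $n$ does. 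It then remains to derive $\mathcal{E}_{v,1}' \vdash m_a = n_a$.

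For this last step, for any closed substitution $\sigma$ consider $\sigma^*$ that agrees with $\sigma$ outside the top-level variables of $m$ and maps those to $\vend$; by the unary reduced normal form constraint the top-level variables of $m$ do not recur in $m_a$ or $n_a$, so $\sigma^*(m_a) = \sigma(m_a)$ and $\sigma^*(n_a) = \sigma(n_a)$. When neither $\yes$ nor $\no$ lies in $V_0^m$ the equation $\sigma^*(m) \simeq \sigma^*(n)$ collapses to $\sigma(m_a) \simeq \sigma(n_a)$ and the induction hypothesis closes this case. The main obstacle is the sub-case $\yes \in V_0^m$ (and symmetrically $\no \in V_0^m$): now $L_a(\sigma^*(m)) = L_a(\sigma^*(n)) = a^*$ conveys no information about $m_a$ and $n_a$, so one must recover their structure from the $L_r$-side alone. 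I would exploit the reduced normal form shape of $m_a$, namely a chain whose summands at each depth form a set of variables with the possible exception of a terminal depth consisting solely of $\no$, and use the family of substitutions $\sigma_y^\ell: y \mapsto a^\ell.\no$, $z \mapsto \vend$ for $z \neq y$: the rejection set $L_r(\sigma_y^\ell(m_a))$ then takes the form $\{a^k : k \geq \min(d_y^{m_a} + \ell, d_{\no}^{m_a})\}$, where $d_y^{m_a}$ and $d_{\no}^{m_a}$ denote the unique depths of $y$ and of $\no$ in $m_a$ (set to $\infty$ when absent). Taking $\ell$ large identifies $d_{\no}^{m_a}$ with $d_{\no}^{n_a}$, and taking $\ell = 0$ together with the reduced normal form constraint $d_y^{m_a} < d_{\no}^{m_a}$ then identifies $d_y^{m_a}$ with $d_y^{n_a}$ for each variable $y$. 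Matching chains level by level yields $m_a = n_a$ modulo $A1$--$A4$, closing the induction.
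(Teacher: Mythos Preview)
Your proposal is correct and follows essentially the same strategy as the paper: both use $V_1$ to derive $x = x + a^k.x$ and thereby collapse each variable to a single depth, reducing $m$ and $n$ to chain-shaped canonical forms whose levels can then be matched via well-chosen substitutions. The differences are organizational rather than substantive: you front-load the reduction by defining a ``unary reduced normal form'' (which neatly eliminates the paper's asymmetric sub-case where only one side has an $a$-continuation), and you recover the variable depths directly via the family $\sigma_y^\ell$ instead of invoking Lemma~\ref{lem:firstvar}, but the underlying argument is the same.
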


\begin{proof}

Before we start the main proof we note that the new axiom $ V_{1}$ can prove the equation $x =a^n.x +x$ for each $n \geq 0$. 
%%%%%%%%%%%%
This is done as follows: if $n = 0$ then this is the axiom $A3$. Assume we can prove that equation for $n$. Then we can show it for $n+1$ thus: $$x \overset{\mathrm{V_1}}{=} x +a.x \overset{\mathrm{I.H.}}{=} x + a. (a^{n}.x +x) \overset{\mathrm{D_a}}{=} x + a.x + a ^{n +1}.x \overset{\mathrm{V_1}}{=} x + a ^{n +1}.x~~.$$

Note here that this means that $\mathcal{E}_v' \cup \{ V_1\}$ proves all the equations in $\mathcal{O}$ over $\{a\}$, which means that even though $\mathcal{E}_v' \cup \{ V_1\}$ is finite, it can prove the infinite family $\mathcal{E}_{\textit{v,f}}'$ over $\{a\}$. 

Let $m \simeq n$. By Lemma \ref{lem:RNFOpenFin}, we can assume that $m$ and $n$ are in reduced normal form. 
We will present the argument only for the case where $m = \yes + a.m_a + \displaystyle\sum_{i \in I} x_i$, where each $m_a$ is $\yes$-free, as every other case is either trivial or a sub-case of this one. 

By following the reasoning of previous proofs, we have that $n = \yes ~[+ a.n_a ] + \displaystyle\sum_{i \in I} x_i$.

Let us first consider the case that $a.n_a$ is not a summand of
$n$. (Note that this is possible, as witnessed by axiom $V_1$.)  That
is $$m = \yes + a.m_a + \displaystyle\sum_{i \in I} x_i \simeq \yes
+\displaystyle\sum_{i \in I} x_i = n \enspace .$$

Observe that, for each $s \in \Act^*$, we have $ m_a \not\xRightarrow[]{s} \no$. Indeed, $ m_a\xRightarrow[]{s} \no$ would imply that $m$ and $n$ are not verdict equivalent under the substitution $\sigma_{\vend}(x) = \vend$ for all $x$. This means that $m_a$ is both $\yes$- and $\no$-free. Moreover, note that the set of variables occurring in $m_a$ is included in  $\{x_i \mid i \in I\}$. To see this, assume that $x$ occurs in $m_a$, but is not contained in $\{x_i \mid i \in I\}$. Consider the substitution that maps $x$ to $\no$ and all the other variables to $\vend$. Again, we have that $m$ rejects some trace starting with $a$ while $n$ cannot reject any trace, which contradicts our assumption that $m \simeq n$.

%We can therefore see that $m_a$ is only able to perform the transitions
%$ m_a \xrightarrow[]{s} x + m_{s,x}$ for some $m_{s,x}$.
For each monitor $m'$, we define $\mathcal{V}(m')$ as the set of pairs $(s,x)$ such that $ m' \xrightarrow[]{s} x + m''$ for some $m''$. By structural induction on $m'$ and Lemma \ref{lem:seperationOfVariables}, one can easily prove that, when 
$m'$ is $\yes$- and $\no$-free, $\mathcal{E}_v$ proves $m' = \displaystyle\sum_{(s,x) \in \mathcal{V}(m')} {s.x }$.

Therefore  $m = \yes + a.\displaystyle\sum_{(s,x) \in \mathcal{V}(m_a)} s.x + \displaystyle\sum_{i \in I} x_i$. Since the only available action in $\Act$ is $a$ and the variables occurring in $m_a$ also occur in $\{x_i \mid i \in I\}$, we have that by applying the equations we proved earlier by using axiom $V_1$ we can prove $m = \yes + \displaystyle\sum_{i \in I} x_i = n$, and we are done.

%Our claim is that $m_a$ is equal to $a^k.x$ for some $k$. Otherwise,\\ $m_a = a^k.no $ for some $k$. Then by the substitution $\sigma(X) = a^{2*k}.\no$ the two monitors are proven not verdict equivalent. Therefore $m$ must be of the form $\yes + a^*.X +X$ which by the use of our new axioms is proved equal to $n = \yes +X$.

Assume now that $a.n_a$ is a summand of
$n$. We proceed to prove that  that $m_a \simeq n_a$.
In this case we have 

$$m_a = \displaystyle\sum_{(s,x) \in \mathcal{V}(m_a)} {s.x}  ~[+ a^h.\no]~\text{and}~
n_a = \sum_{(s,x) \in \mathcal{V}(n_a)} {s.x} ~[+ a^k.\no]~,$$
for some $h,k$. 

By mapping all variables to $\vend$ we can see that $h = k$. Additionally, for each variable $s$ and by using the axiom $V_1$ we can reduce both of the above summations so that only the shortest $s$ leading to $x$ is kept. By Lemma \ref{lem:firstvar}, we have that,  for each variable, this $s$ is identical for both sides of the equality $m \simeq n$ and we are done. \end{proof}

\subsubsection{Completeness of $\omega$-verdict equivalence}
 This section presents a complete axiomatization for $\omega$-verdict equivalence over \MonF. We have already presented the necessary axioms that capture $\omega$-verdict equivalence over closed terms, as well as the necessary ones to capture equivalence of terms that include variables. We will show here that the combination of the two axiom systems is enough for completeness of $\omega$-verdict equivalence over open terms and there is no need for extra axioms to be added. First we look at the case for a singleton action set, i.e. $\Act=\{a\}$.
In this case, the equation
$$(\mathbf{V1_{\omega}}) ~ x = a.x $$
is sound and we therefore we can shrink the axiom system to: 

$$\mathcal{E}_{\omega,1}' =  \{A1-A4\} \cup \{V1_{\omega}\} \cup \{O1\},$$
for which we prove: 

\begin{theorem}\label{thm:compOpenOmegaUnary}
$\mathcal{E}_{\omega,1}'$ is complete for $\omega $-verdict equivalence for open terms for a finite $\Act$, with $|\Act| =1$. That is, if $m \simeq_{\omega} n$ then $\mathcal{E}_{\omega,1}' \vdash m = n$. 
\end{theorem}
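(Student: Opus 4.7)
The plan is to leverage the axiom $V1_{\omega}$ aggressively: because it equates a variable with its $a$-prefix, substitutivity immediately yields the derived law $t = a.t$ for every term $t$, which lets us eliminate action prefixes altogether. Combined with $O1$, this collapses the normal-form hierarchy dramatically, leaving only a handful of canonical shapes to compare.

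First, I would establish by structural induction on $m$ that $\mathcal{E}_{\omega,1}'$ proves $m = m^{\flat}$, where $m^{\flat}$ is a flat sum of the form
$$m^{\flat} \;=\; \sum_{i \in I} x_i \;[+\yes]\;[+\no]$$
with the variables $x_i$ pairwise distinct. The only interesting inductive step is $m = a.m'$: by the induction hypothesis $\mathcal{E}_{\omega,1}' \vdash m' = (m')^{\flat}$, and then substituting $x \mapsto (m')^{\flat}$ in $V1_{\omega}$ gives $(m')^{\flat} = a.(m')^{\flat}$, so $a.m' = a.(m')^{\flat} = (m')^{\flat}$. The $+$-case just combines two flat sums and uses $A1$--$A3$ to remove repetitions. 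Next, I would apply $O1$ once, with the substitution $x \mapsto \sum_{i \in I} x_i$, to reduce any flat sum containing both $\yes$ and $\no$ to $\yes + \no$. The resulting \emph{canonical forms} are therefore either $\vend$, $\yes + \no$, or $\sum_{i \in I} x_i$ optionally augmented by a \emph{single} verdict $v \in \{\yes, \no\}$.

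Second, given two $\omega$-verdict equivalent canonical forms $m$ and $n$, I would argue that they are identical modulo $A1$--$A3$. Testing with the substitution $\sigma_{\vend}$ that sends every variable to $\vend$ shows that $m$ and $n$ carry the same verdicts among $\{\yes, \no\}$, since $\sigma_{\vend}(m)$ accepts (respectively rejects) $a^{\omega}$ iff $\yes$ (respectively $\no$) is a summand. If both verdicts appear, then both canonical forms equal $\yes + \no$ and we are done. Otherwise, for each variable $x$, a substitution mapping $x$ to a suitable conclusive verdict (the one opposite to the verdict already present in the canonical form, or to $\yes$ if no verdict is present) and every other variable to $\vend$ detects whether $x$ is a summand, by checking whether $a^{\omega}$ appears in the appropriate side of $\simeq_{\omega}$. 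Hence $m$ and $n$ have the same variable summands and the same verdict summands, and are therefore literally equal modulo $A1$--$A3$.

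The main obstacle here is bookkeeping rather than conceptual depth: once the derived law $t = a.t$ is in hand, action prefixes are inessential, and the completeness argument collapses to propositional-style reasoning about which variables and verdicts appear as summands. The only subtle point is verifying that $O1$ can be used in a single step to strip \emph{all} variable summands that coexist with both $\yes$ and $\no$, which is simply an instance of the substitutivity rule applied to the single axiom $O1$.
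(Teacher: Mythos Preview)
Your proposal is correct and follows essentially the same approach as the paper: reduce every term to a flat sum $\sum_{i \in I} x_i\,[+\yes]\,[+\no]$ (with $I=\emptyset$ when both verdicts are present, via $O1$), then argue that two such forms are $\omega$-verdict equivalent only if they coincide modulo $A1$--$A4$. The paper's own proof is just a two-line sketch of exactly this strategy; you have simply spelled out the details (the derived law $t = a.t$ from $V1_{\omega}$, the structural induction, and the separating substitutions), which the paper leaves implicit.
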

\begin{proof}
The proof of the above follows easily since, by using those equations, every term can be proved equal to one of the form $\displaystyle\sum_{i \in I} x_i~[+\yes]~[+\no]$, where $I$ is empty if both $\yes$ and $\no$ are summands, and two terms of that form are $\omega$-verdict equivalent iff they are equal modulo $A1-A4$. Note that, in this case, there are only four congruence classes of terms, namely the ones asymptotically  equivalent to subsets of $\yes$ and $\no$, so the quotient algebra is very small and equationally well behaved. \end{proof}

In the case where there the action set contains more than one action but is still finite we have a more interesting situation. We therefore define: $$\mathcal{E}_{\omega,f}' = \mathcal{E}_{\omega} \cup \mathcal{E}_{\textit{v,f}}'~,$$
for which we prove: 

\begin{theorem}\label{thm:CompOpenOmega}
$\mathcal{E}_{\omega,f}'$ is complete for $\omega $-verdict equivalence over open terms when $\Act$ is finite and $|\Act| \geq 2$. That is, if $m \simeq_{\omega} n$ then $\mathcal{E}_{\omega,f}' \vdash m = n$. 
\end{theorem}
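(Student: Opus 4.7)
The plan is to adapt the completeness strategy of Theorem~\ref{thm:CompOpenFin} to the coarser equivalence $\simeq_\omega$ by using the axioms $Y_\omega$ and $N_\omega$ to bridge mismatches between the verdict-summand structures of $m$ and $n$. After using Lemma~\ref{lem:RNFOpenFin} to assume both $m$ and $n$ are in reduced normal form (noting that its derivation uses only axioms of $\mathcal{E}_{\textit{v,f}}' \subseteq \mathcal{E}_{\omega,f}'$), I would proceed by induction on $\max(\depth(m),\depth(n))$, performing a case analysis on the sets $V_m, V_n \subseteq \{\yes,\no\}$ of verdict summands of $m$ and $n$.

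The easy cases are $V_m = V_n = \{\yes,\no\}$ (both sides equal $\yes+\no$ by reduced normal form) and $V_m = V_n \neq \{\yes,\no\}$, where the analysis closely parallels Theorem~\ref{thm:CompOpenFin}. In this latter case, probe substitutions mapping chosen variables to monitors such as $\no$ or $s.\no$ still serve to establish that $\{x_i\} = \{y_j\}$, $A_m = A_n$, and $m_a \simeq_\omega n_a$ for each $a$, because any finite accepting or rejecting trace exposed on one side by such a probe produces an $\Act^\omega$-closed set of infinite extensions that must be matched on the other side. The inductive hypothesis applied to each derivative pair, together with the $O1$-based reasoning underpinning Lemma~\ref{lem:redNFOpen}, then completes the case.

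The genuinely new cases are those with $V_m \neq V_n$; by symmetry it suffices to treat $\yes \in V_m \setminus V_n$. Since $m$ accepts $\varepsilon$, the equation $L_a(\sigma(n)) \cdot \Act^\omega = \Act^\omega$ holds for every closed $\sigma$, which together with the absence of $\yes$ from $V_n$ forces the initial action set of $n$ to be all of $\Act$. The bridging step applies $Y_\omega$ to the $\yes$ summand of $m$ and then uses $D_a$ to merge summands sharing an $a$-prefix, yielding
\[
\mathcal{E}_{\omega,f}' \vdash m \;=\; \sum_{a \in \Act \setminus A_m} a.\yes \;+\; \sum_{a \in A_m} a.(\yes + m_a) \;+\; \sum_{i \in I} x_i \;=:\; \hat m.
\]
The monitor $\hat m$ satisfies $V_{\hat m} = V_n = \emptyset$ (or the appropriate matching set when the analogous step is applied for $\no$), its variable summands coincide with those of $m$, and each of its $a$-derivatives has depth strictly less than $\max(\depth(m),\depth(n))$. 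Invoking the case analysis above on $\hat m \simeq_\omega n$ and then the inductive hypothesis on the derivative pairs $(\yes+m_a, n_a)$ and $(\yes, n_a)$ closes the case. The symmetric case $\no \in V_m \setminus V_n$ is handled identically using $N_\omega$, and the remaining mixed cases (for instance $V_m = \{\yes,\no\}$ against a smaller $V_n$) are handled by applying whichever of $Y_\omega$ and $N_\omega$ expands the verdict that is missing from $V_n$.

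The principal obstacle is that $\hat m$ is not automatically in reduced normal form, since a new sub-term $\yes + m_a$ may contain both $\yes$ and a $\no$ occurrence buried inside $m_a$; before the inductive hypothesis can be invoked on the $a$-derivatives, a re-normalization via Lemma~\ref{lem:RNFOpenFin} must be carried out, and its interaction with the one-sided variable occurrences described by Lemma~\ref{lem:sumAfterVar} and with the family of axioms $\mathcal{O}$ requires careful bookkeeping. Because every application of the bridging step produces sub-problems of strictly smaller maximum depth, however, well-foundedness of the induction is preserved, and the argument terminates.
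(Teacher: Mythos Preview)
Your high-level strategy is the paper's: pass to (finite-action-set) reduced normal forms via Lemma~\ref{lem:RNFOpenFin}, induct, case-split on the verdict summands, and use $Y_\omega$/$N_\omega$ to bridge mismatches. Whether you expand $m$ by $Y_\omega$ to drop its $\yes$-summand or, as the paper does, rewrite $n$ so as to expose a $\yes$-summand is a cosmetic difference; both reduce the asymmetric cases to the symmetric one.

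The genuine gap is the step you label as ``closely parallels Theorem~\ref{thm:CompOpenFin}'', namely inferring $m_a \simeq_\omega n_a$ from $m \simeq_\omega n$ once the top-level verdicts, action sets and variable summands are matched. In Theorem~\ref{thm:CompOpenFin}, the probe-substitution argument produces a \emph{single} finite trace on which $\sigma(m)$ and $\sigma(n)$ disagree, and that already contradicts $m\simeq n$. For $\simeq_\omega$ it does not: a disagreement on one finite trace, or even on any finite set of finite traces, is perfectly consistent with $\omega$-verdict equivalence. Your justification, that ``any finite accepting or rejecting trace exposed on one side by such a probe produces an $\Act^\omega$-closed set of infinite extensions that must be matched on the other side'', does not close the argument either: when $\sigma(m)$ rejects $a s_0$ via $a.m_a$, the matching $\omega$-rejection on the $n$-side can come from a top-level variable summand $\sigma(x_i)$ rather than from $n_a$, and that interference is exactly what makes the step delicate.

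The paper supplies the missing tool: Lemma~\ref{lem:verdictOmegaVerdTracesFin}, which characterises $m \simeq_\omega n$ as ``for every closed $\sigma$, the set of finite traces on which $\sigma(m)$ and $\sigma(n)$ disagree is finite''. With this in hand, the paper re-runs the $\sigma_{bad}$ construction of Theorem~\ref{thm:CompOpenFin} and observes that, under the resulting substitution $\sigma'$, the derivatives $m_a$ and $n_a'$ disagree on \emph{infinitely many} finite prefixes of the witnessing infinite trace, while the carefully chosen value $\sigma'(x_0)=s_b.\no$ (with $s_b$ obtained from Lemma~\ref{lem:ExistsTrace}) ensures those disagreements are not masked at the top level. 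This yields an infinite disagreement set for $m$ and $n$ under $\sigma'$, contradicting Lemma~\ref{lem:verdictOmegaVerdTracesFin}. Without this lemma (or an equivalent argument establishing infinite disagreement), your sketch does not derive a contradiction from $m_a \not\simeq_\omega n_a$, and the inductive step does not go through.
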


The rest of this section is devoted to the proof of the above theorem. We start by showing a lemma that tells us that if two
monitors are $\omega$-verdict equivalent then they can only disagree on finitely many finite traces.

\begin{lemma}\label{lem:verdictOmegaVerdTracesFin}
For two monitors in \MonF, we have that $m \simeq_{\omega} n$ if and only if, for any substitution $\sigma$, the set 

\begin{gather*}
   \mathcal{S}_{m,n,\sigma} = \left(L_a(\sigma(m))\setminus  L_a(\sigma(n))\right) \cup ( L_r(\sigma(m))\setminus L_r(\sigma(n))) \\ 
    \cup \left(L_a(\sigma(n))\setminus  L_a(\sigma(m))\right) \cup ( L_r(\sigma(n))\setminus L_r(\sigma(m)))
\end{gather*}
is finite.
\end{lemma}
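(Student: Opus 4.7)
The plan is to prove both directions separately. Throughout, the key structural property used is that for every closed monitor $p$, the sets $L_a(p)$ and $L_r(p)$ are upward closed under extension: whenever $p \xRightarrow{s} v$ for $v \in \{\yes,\no\}$, also $p \xRightarrow{st} v$ for every $t \in \Act^*$, because of the self-loop rule $v \xrightarrow{\alpha} v$ in Table~\ref{tab:sos_rules}. A straightforward structural induction also shows that the antichain of minimal traces in $L_a(p)$ (respectively $L_r(p)$) is finite and every such trace has length at most $\depth(p)$.

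For the implication ($\Leftarrow$) I would argue by contraposition. Assume some closed substitution $\sigma$ witnesses $\sigma(m) \not\simeq_\omega \sigma(n)$; without loss of generality suppose there is an infinite sequence $\pi \in (L_a(\sigma(m)) \cdot \Act^\omega) \setminus (L_a(\sigma(n)) \cdot \Act^\omega)$. Then $\pi$ admits a finite prefix $s \in L_a(\sigma(m))$, while no finite prefix of $\pi$ lies in $L_a(\sigma(n))$ (otherwise $\pi$ would belong to $L_a(\sigma(n)) \cdot \Act^\omega$). By upward closure of $L_a(\sigma(m))$, every finite prefix of $\pi$ extending $s$ lies in $L_a(\sigma(m)) \setminus L_a(\sigma(n))$, giving infinitely many elements of $\mathcal{S}_{m,n,\sigma}$.

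For ($\Rightarrow$) I would split on the cardinality of $\Act$. If $\Act$ is infinite, Lemma~\ref{lem:1} gives $\simeq_\omega{=}\simeq$, so $\sigma(m) \simeq \sigma(n)$ for every closed $\sigma$ and thus $\mathcal{S}_{m,n,\sigma}=\emptyset$. If $\Act$ is finite, fix $\sigma$ closed and let $M$ be the (finite) antichain of minimal accepting traces of $\sigma(m)$. For each $s \in M$, every $\omega$-extension $s \cdot \vec\alpha$ lies in $L_a(\sigma(m)) \cdot \Act^\omega = L_a(\sigma(n)) \cdot \Act^\omega$, so it has some prefix in $L_a(\sigma(n))$. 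Consider the finitely branching tree $T_s$ whose nodes are finite extensions $u$ of $s$ with no prefix in $L_a(\sigma(n))$: any infinite branch of $T_s$ would contradict the previous sentence, so K\"onig's lemma forces $T_s$ to be finite, yielding a length bound $N_s$. Setting $K = \max_{s \in M}(|s| + N_s)$, every trace in $L_a(\sigma(m))$ of length at least $K$ must contain a prefix in $L_a(\sigma(n))$ and therefore lies in $L_a(\sigma(n))$ by upward closure. Hence $L_a(\sigma(m)) \setminus L_a(\sigma(n))$ is contained in the finite set of $\Act$-strings of length $< K$. The symmetric inclusion and the two analogous inclusions for $L_r$ are proved identically, and the union of four finite sets is finite.

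The main obstacle is the K\"onig step, which genuinely needs $\Act$ finite; branching out on $|\Act|$ and invoking Lemma~\ref{lem:1} in the infinite case is the cleanest way around this. Everything else reduces to the upward-closure observation and the finiteness of the antichain of minimal accepting/rejecting traces of a closed recursion-free monitor.
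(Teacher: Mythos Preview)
Your argument is correct. The ($\Leftarrow$) direction matches the paper's essentially verbatim. For the ($\Rightarrow$) direction, however, the paper takes a shorter route that avoids both the case split on $|\Act|$ and the appeal to K\"onig's lemma. The paper argues the contrapositive: if $\mathcal{S}_{m,n,\sigma}$ is infinite, then it contains some trace $s$ with $|s| > \max\{\depth(\sigma(m)),\depth(\sigma(n))\}$; say $\sigma(m)$ accepts $s$ but $\sigma(n)$ does not. Then $sa^{\omega}$ (for any fixed $a\in\Act$) lies in $L_a(\sigma(m))\cdot\Act^{\omega}$ but not in $L_a(\sigma(n))\cdot\Act^{\omega}$, since $\sigma(n)$ accepts no prefix of $s$ (by upward closure) and cannot newly accept any $sa^{i}$ because minimal accepting traces of $\sigma(n)$ have length at most $\depth(\sigma(n)) < |s|$. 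This works uniformly for finite and infinite $\Act$, and the depth bound does directly what your K\"onig step reconstructs. Your approach has the minor advantage of not relying on the depth observation, trading it for compactness of the extension tree; the paper's approach is shorter and buys a uniform argument with no case distinction.
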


\begin{proof}

We prove both implications separately by establishing their contrapositive statements. For the implication from left to right, assume that  $\mathcal{S}_{m,n,\sigma}$ is infinite. It follows that there are some $\sigma$ and trace $s$ such that $s \in \mathcal{S}_{m,n,\sigma}$  with $$|s| > max\{\depth(\sigma(m)), \depth(\sigma(n))\}.$$

Assume, without loss of generality, that $\sigma(m)$ accepts $s$, but $\sigma(n)$ does not. Let $a \in \Act$. Then $sa^{\omega}$ is in $L_a(\sigma(m))\cdot \Act^{\omega}$. We claim that  $sa^{\omega}$ is not in $L_a(\sigma(n))\cdot \Act^{\omega}$. Indeed, $\sigma(n)$ does not accept any prefix of $s$, since it does not accept $s$ itself, and it does not accept $sa^i$ for any $i\geq 0$ because $|s| > \depth(\sigma(n))$. 
For the implication from right to left, assume, without loss of generality, that there are some substitution sigma and some $t\in \Act^\{omega$ such that $t$ is in $L_a(\sigma(m))\cdot \Act^{\omega}$, but not in $L_a(\sigma(n))\cdot \Act^{\omega}$. Since  $t$ is in $L_a(\sigma(m))\cdot \Act^{\omega}$, we have that there are some $s\in L_a(\sigma(m))$ and $u$ in $\Act^{\omega}$ such that $t = su$. It follows that $ss' \in L_a(\sigma(m))$  for each finite prefix $s'$ of $u$, but none of the $ss'$ is contained in $L_a(\sigma(n))$. Therefore,  $\mathcal{S}_{m,n,\sigma}$ is infinite, and we are done. \end{proof}

We are now ready to present the proof of the main theorem of this section (Theorem \ref{thm:CompOpenOmega}).
\begin{proof}

By Lemma \ref{lem:RNFOpenFin} we may assume without loss of generality that the monitors $m$ and $n$ are in finite-action-set reduced normal form (Definition \ref{def:RNFopenfin}).
% I.e. $$m = \displaystyle\sum_{a \in A} a.m_a + \sum_{i \in I} x_i ~[+\yes]~[+\no] ~\text{and}~n = \displaystyle\sum_{b \in B} b.n_b + \sum_{j \in J} y_i ~[+\yes]~[+\no]$$ .

 We proceed by a case analysis on the form $m$ and $n$ might have and by induction on the sum of the sizes of $m$ and $n$. 
\begin{itemize}
    \item Assume that $m = \yes +\no \simeq_{\omega} \displaystyle\sum_{a \in A}a.n_a  + \sum_{j \in J} y_j = n$. First of all, note that $A= \Act$. Indeed, assume $a \not\in A$. Then, under a substitution that maps every variable to $\vend$, all infinite traces starting with $a$ are neither accepted nor rejected by $n$ since $n$ cannot take an $a$ transition and it also does not accept and reject $\varepsilon$. However all infinite traces (including those starting from $a$) are both accepted and rejected by $m$, which is a contradiction as we have assumed that the two monitors are $\omega$-verdict equivalent.

Moreover, it is not hard to see that $n_a \simeq_{\omega} \yes +\no$ holds for each $a \in \Act$.
By the induction hypothesis,  $\mathcal{E}_{\omega,f}'$ proves $n_a = \yes + \no$,  for each $a \in \Act$.  Therefore,

$$\mathcal{E}_{\omega,f}' \vdash n = \displaystyle\sum_{a \in \Act} a.(\yes+\no) +  \sum_{j \in J} y_j \overset{\mathrm{D_a}}{=} \displaystyle\sum_{a \in \Act} a.\yes + \sum_{a\in \Act} a.\no + \sum_{j \in J} y_j$$

 $$ \overset{\mathrm{Y_{\omega},N_{\omega}}}{=} \yes +\no + \displaystyle\sum_{j \in J}  y_j\overset{\mathrm{O1}}{=} \yes +\no~~,$$ and we are done.

    \item Now, we assume that $m = \yes +\no \simeq_{\omega} \displaystyle\sum_{a \in A}a.n_a + \sum_{j \in J} y_j +\yes= n$, with each $n_a$ being $\yes$- and $\vend$-free. As above $A= \Act$. Moreover, for each $a \in \Act$, $L_r(n_a)\cdot \Act^{\omega}= \Act^{\omega}$. Following the same argument as above only for the $\no$ verdict we conclude that $$\mathcal{E}_{\omega,f}' \vdash n = \yes +\displaystyle\sum_{a \in \Act} a.\no + \sum_{j \in J}y_j =\yes +\no = m.$$

    \item The case  $m = \yes +\no \simeq_{\omega} \displaystyle\sum_{a \in A}a.n_a + \sum_{j \in J} y_j +\no= n$ is symmetrical to the previous one. 
    \item The final case whose proof we present in detail is when  $$m = \yes + \displaystyle\sum_{a \in A} a.m_a +\sum_{i \in I} x_i \simeq_{\omega} \sum_{b \in B} b.n_b + \sum_{j \in J} y_j ~[+\yes] ~[+\no]= n~,$$ where each side is in reduced normal form. To deal with this case,  we note, first of all, that by mimicking the argument in the first case of the proof, we can prove that $\mathcal{E}_{\omega,f}' \vdash n = \yes +\displaystyle\sum_{b \in B'} b.n_b' + \sum_{j \in J} y_j $, where now each $n_b'$ is $\yes$-free. By the same argument as for the verdict equivalence case (Proof of Theorem \ref{thm:CompOpenFin}) and by defining the appropriate substitutions $\sigma$  we can infer that $A = B'$ and $\{x_i \mid i \in I \} =\{y_j \mid j \in J \}$. In other words, we have:

    $$m = \yes + \displaystyle\sum_{a \in A} a.m_a +\sum_{i \in I} x_i \simeq_{\omega} \yes + \displaystyle\sum_{a \in A} a.n_a'+\sum_{i \in I} x_i ~~,$$ 

where $m$ and $n$ are in finite-action-set reduced normal form for open terms. It remains to show that under every substitution $\sigma$ we have that $\sigma(m_a) \simeq_{\omega} \sigma(n_a')$ so that we can apply our induction hypothesis and complete the proof.

Towards a contradiction assume that this is not the case. Therefore there exists a substitution $\sigma$ for which there is at least one infinite trace $s$ such that, without loss of generality, $s \in L_r(\sigma(m_a))\cdot \Act^{\omega}$ but $s \not\in L_r(\sigma(n_a'))\cdot \Act^{\omega}$ or $s \in L_a(\sigma(m_a))\cdot \Act^{\omega}$ but $s \not\in L_a(\sigma(n_a'))\cdot \Act^{\omega}$. We examine first the case of the rejection sets. Since $\sigma(m_a)$ rejects the infinite trace $s$, there is some finite prefix $s_0$ of $s$ that is rejected by $\sigma(m_a)$. Note that $\sigma(m_a)$ will also reject all the finite prefixes of $s$ that extend $s_0$. On the other hand, $\sigma(n'_a)$ does not reject any of those because it does not reject $s$.

As we saw in the proof of Theorem \ref{thm:CompOpenFin} this substitution and any such trace $s_0$ can be modified to a new substitution $\sigma'$ such that $\sigma'(m) \not\simeq \sigma'(n)$ and consequently $m$ is not verdict equivalent to $n$. Specifically from the proof of Theorem \ref{thm:CompOpenFin} we have that: 
\begin{itemize}
\item Under the substitution $\sigma'$, all variables except $x$ are mapped to $\vend$. 
\item $m_a \xrightarrow[]{s'} x + m_a'$ for some $m_a'$ and a trace $s'$ that is a prefix of $s_0$.
\item $n_a' \not\xrightarrow[]{s'} x + n_a''$ for any $n_a''$
\item The variable $x$ is mapped to $s_b.\no$  for a trace $s_b$ such that $m$ rejects the trace  $as's_b$, but $n$ does not. 
\end{itemize}

By Lemma \ref{lem:verdictOmegaVerdTracesFin} we have that the only way $m$ can be $\omega$-verdict equivalent to $n$ is if the number of traces they disagree on, under any substitution (including $\sigma'$), is finite. Since monitor $m$ is $\omega$-verdict equivalent to $n$, both monitors must disagree on finitely many extensions of $as's_b$. This however can be done only if $m_a$ and $n_a'$ also disagree on finitely many extensions of $s's_b$. This is because we have seen that under $\sigma'$, only the variable $x$ can contribute to the rejections sets of the monitors and it does so by being mapped to $s_b.\no$. However, as $s_b$ is not a prefix of $as's_b$ we know that also none of its extensions are prefixes of $as's_b$. Therefore the rejection of $s_b$ does not cause the rejection of any of the prefixes and extensions of $as's_b$. This implies that the infinite trace $s$ is only rejected by $\sigma'(m_a)$ but not $\sigma'(n_a')$, which implies that the monitors $m_a$ and $n_a'$ still disagree on infinitely many extensions of $s_0$ under the new substitution $\sigma'$ 
%and therefore infinitely many extensions of $s'.s_b$ 
which is a contradiction.

It is now easy to see that for each $a \in A$ and for each substitution $\sigma$ we have that $L_r(a.\sigma(m_a))\cdot \Act^{\omega} =  L_r(a.\sigma(n_a'))\cdot \Act^{\omega}$ which implies $L_r(\sigma(m_a))\cdot \Act^{\omega} =  L_r(\sigma(n_a'))\cdot \Act^{\omega}$. It remains to see that $L_a(\sigma(m_a))\cdot \Act^{\omega} =  L_a(\sigma(n_a'))\cdot \Act^{\omega}$.

To this end, assume, towards a contradiction, that there exist a substitution $\sigma$ and an infinite trace $s$ such that $s \in L_a(\sigma(m_a))\cdot \Act^{\omega}$ but $s \not\in L_a(\sigma(n_a'))\cdot \Act^{\omega}$. Following the argument for the rejection sets, we can infer that there is a finite trace $s_0$ accepted by $\sigma(m_a)$ but not by $\sigma(n_a')$. Again by using the proof of Theorem \ref{thm:CompOpenFin} , we can transform $\sigma$ into a $\sigma'$ that causes a disagreement over the rejection of a trace $s_0'$ for $\sigma(m_a)$ and $\sigma(n_a')$ i.e. $s_0' \in L_r(\sigma'(m_a))$ but $s_0' \not\in L_r(\sigma'(n_a'))$. This, in turn, means we can apply the same reasoning as before for the rejection of a trace to reach a contradiction, namely that $m$ and $n$ are not $\omega$-verdict equivalent.

We can therefore conclude that $\sigma(m_a) \simeq_{\omega} \sigma(n_a')$ under any substitution $\sigma$ and therefore we can apply our induction hypothesis to obtain $\mathcal{E}_{\omega,f}' \vdash m _a = n_a'$. Using the congruence rules, we have $\mathcal{E}_{\omega,f}' \vdash m=n$, and we are done. 
\qedhere
\end{itemize} \end{proof}
Table \ref{tab:allAxiomSystems} summarizes the equational axiom systems we have obtained.
\begin{table}
\rule{\textwidth}{0.2mm} 
\begin{minipage}{0.5\textwidth}
\begin{equation*}
\begin{split}
& \textbf{(A1)}~ x + y  = y + x \\
& \textbf{(A2)}~ x + (y + z)  = (x+y)+z \\
& \textbf{(A3)} ~x + x  = x \\
& \textbf{(A4)} ~x + \vend  = x 
\end{split}
\end{equation*}
\end{minipage}
\begin{minipage}{0.46\textwidth}
\begin{equation*}
\begin{split}
& \mathbf{(E_a)} ~a.\vend = \vend ~(a \in \Act)\\
& \mathbf{(Y_a)} ~ \yes = \yes + a.\yes ~(a \in \Act) \\
& \mathbf{(N_a)} ~\no = \no +  a.\no ~(a \in \Act) \\
& \mathbf{(D_a)} ~a.(x + y)  = a.x +a.y ~(a \in \Act)
\end{split}
\end{equation*}
\end{minipage}
\centering \vspace{3mm} 

The axioms of $\mathcal{E}_v$, which are ground complete for $\simeq$ (Theorem~\ref{thm:GrCompFin}).
\rule{\textwidth}{0.2mm} 
\begin{minipage}{0.5\textwidth}
\begin{equation*}
\begin{split}
& \mathbf{(Y_{\omega})}~\yes=\displaystyle\sum_{a\in \Act} a.\yes
\end{split}
\end{equation*}
\end{minipage}
\begin{minipage}{0.46\textwidth}
\begin{equation*}
\begin{split}
& \mathbf{(N_{\omega})}~ \no=\displaystyle\sum_{a\in \Act} a.\no
\end{split}
\end{equation*}
\end{minipage} 
\vspace{3mm} 

The axiom system $\mathcal{E}_{\omega} = \mathcal{E}_v \cup \{Y_{\omega},N_{\omega}\}$ is ground complete for $\simeq_{\omega}$ when $\Act$ is finite (Theorem~\ref{thm:GrCompOmega}).
\rule{\textwidth}{0.2mm} 

$\mathbf{(O1)}~\yes + \no = \yes + \no + x$
\vspace{3mm} 

The axiom system $\mathcal{E}_{v}' =  \mathcal{E}_v \cup \{O1\}$ is complete for $\simeq$ when $\Act$ is infinite (Theorem~\ref{thm:CompOpenInf}).
\rule{\textwidth}{0.2mm}

 $\mathcal{O} = \{ O2_{s,k} \mid s \in \Act^*, k \geq 0 \}$ where  \vspace{2mm}

$(\mathbf{O2_{s,k}})~ x + s.x + \overline{s}^{(k)}(\yes + \no)  = x + \overline{s}^{(k)}(\yes + \no) $
\vspace{3mm} 

The axiom system $\mathcal{E}_{\textit{v,f}}' = \mathcal{E}_v' \cup \mathcal{O}$ is complete for $\simeq$ when $\Act$ is finite and $|\Act|\geq 2$ (Theorem~\ref{thm:CompOpenFin}).
\rule{\textwidth}{0.2mm}

 $(\mathbf{V_1}) ~ a.x + x = x $
\vspace{3mm} 

The axiom system $\mathcal{E}_{v,1}' = \mathcal{E}_v' \cup \{V_1\}$ is  complete for $\simeq$ when $|\Act|=1$ (Theorem~\ref{thm:CompOpenUnary}).
\rule{\textwidth}{0.2mm}

 $(\mathbf{V1_{\omega}}) ~ x = a.x $
\vspace{3mm} 

The axiom system $\mathcal{E}_{\omega,1}' = \{A1,\ldots,A4, V1_{\omega}, O1\}$ is complete for $\simeq_{\omega}$ when $|\Act|=1$ (Theorem~\ref{thm:compOpenOmegaUnary}).
\rule{\textwidth}{0.2mm}
The axiom system $\mathcal{E}_{\omega,f}' = \mathcal{E}_{\omega} \cup \mathcal{E}_{\textit{v,f}}'$ is complete for $\simeq_{\omega}$ when $\Act$ is finite and $|\Act|\geq 2$ (Theorem~\ref{thm:CompOpenOmega}).
\rule{\textwidth}{0.2mm}

\caption{Our axiom systems}\label{tab:allAxiomSystems}
\end{table}
%%%%%%%%%%%%%%%%%%%%%%%%%%%%%%%

\section{A non-finite-axiomatizability result}\label{sec:neg}
% !TeX root = axioms_Monitors_VeEq.tex
%% Last modified: apr 19 10:51:37 GMT 2021
%% Last spell checked: 

Observe that the family of axioms $\mathcal{O} = \{ O2_{s,k} \mid s \in Act^*, k \geq 0 \}$, which is included in $\mathcal{E}_{\textit{v,f}}'$, is infinite. Thus it is natural to wonder whether verdict equivalence has a finite equational axiomatization over \MonF. In the remainder of this section, we will provide a negative answer to that question by showing that no finite subset of $\mathcal{E}_{\textit{v,f}}'$ is enough to prove all the equations in $\mathcal{O}$.

%The only case where this is not a problem is the case where $Act$ is a singleton alphabet $\{a\}$, as the equational basis for that case does not involve the family $\mathcal{O}$. When we have at least two actions available things change. Could it be that not all of these equations are necessary? This section is dedicated to proving that no finite subset of these equations is enough to prove them all. Specifically we will show that for an arbitrary finite subset of the equations in $O2_{s,k}$ there is always an infinite number of these equations that we cannot prove.

Intuitively, the proof of the above claim proceeds as follows. Let $\mathcal{E}$ be an arbitrary finite subset of $\mathcal{E}_{\textit{v,f}}'$. First of all, we isolate a property of equations that is satisfied by all the equations that are provable from $\mathcal{E}$. We then show that there are equations in the family $\mathcal{O}$ that do not have the given property. This means that those equations are not provable from $\mathcal{E}$ and, therefore, that $\mathcal{E}$ cannot be complete for verdict equivalence.

\paragraph{An arbitrary finite axiom set vs.~a finite subset of $\mathcal{E}_{\textit{v,f}}'$}

In Section \ref{sect:ComplOpenFin}, in Theorem \ref{thm:CompOpenFin}, we proved that $\mathcal{E}_{\textit{v,f}}'$ is complete for open terms over a finite action set modulo verdict equivalence. Therefore, without loss of generality, we can assume that this basis is in fact a subset of the equations in $\mathcal{E}_{\textit{v,f}}'$. To see this, consider any sound equation that could be involved in an arbitrary axiom set. Since $\mathcal{E}_{\textit{v,f}}'$ is complete this equation is derivable from it. In addition, since every proof is finite, there is a finite number of axioms of $\mathcal{E}_{\textit{v,f}}'$ involved in this proof. Therefore, any finite family of equations is derivable from a finite subset of the equations in $\mathcal{E}_{\textit{v,f}}'$. This means that if another finite family of equations was complete, there would also be a finite subset of equations from $\mathcal{E}_{\textit{v,f}}'$ which would also be complete. From now on, when considering a finite equational basis we will always mean a subset of the equations in $\mathcal{E}_{\textit{v,f}}'$.

We remind our readers that we assume that all axiom systems that we
consider are closed under symmetry. This preserves finiteness and
allows us to simplify our arguments, since the symmetry rule does not
need to be used in equational proofs.

\begin{definition}[\textbf{Notation}]
For a finite, non empty set of equations $\mathcal{E}$ we denote as $\depth(\mathcal{E})$ the quantity: 
$$max\{ \depth(m) \mid m = n \in \mathcal{E}  \}.$$
\end{definition}
The depth of an axiom system turns out to be a very important aspect of it when proving open equations.  We refer the reader to all the axioms we have defined so far (Figure \ref{tab:allAxiomSystems}) and particularly to the family $\mathcal{O}$. Take an instance of the family of equations $\mathcal{O}$, namely 

$$x + a^k.x + \overline{a^k}^{3}(\yes+\no) \simeq x + \overline{a^k}^{3}(\yes+\no) ~,$$ for some $k$. What we will focus on for equations like this one is the fact that every trace starting with $s^k$ followed by any trace of length larger than $3k +1$ (which is the depth of this equation), is both accepted and rejected by both sides of the equation for any closed substitution. This fact is exactly the intuition behind the property that we will use. We now proceed to formulate this property formally: 
\begin{lemma}\label{lem:property}
Let $\mathcal{E}$ be a finite subset of $\mathcal{E}_{\textit{v,f}}'$ and let $m = n$ be an equation in $\mathcal{E}$. Assume that for some string $s$:
\begin{itemize}
\item $m \xrightarrow[]{s} m' + x$, for some monitor $m'$ and variable $x$ and
\item $n \not\xrightarrow[]{s} n' + x$ for any $n'$.
\end{itemize} 
Then, for every trace of the form $s.s'$ where $|s'| \geq depth (\mathcal{E})$, we have that $ss'\in L_a(\sigma(m)) $ and $ss'\in L_r(\sigma(m))$ for every substitution $\sigma$. 
\end{lemma}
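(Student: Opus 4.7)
My proof plan would be to proceed by case analysis on the axiom $m = n \in \mathcal{E}$, using the fact that $\mathcal{E} \subseteq \mathcal{E}_{\textit{v,f}}'$ and that axiom systems are closed under symmetry. For every axiom in $\mathcal{E}_v$ (namely $A1$--$A4$, $E_a$, $Y_a$, $N_a$ and $D_a$), both sides reach the same variables via exactly the same traces; hence the hypothesis of the lemma cannot be satisfied for such an axiom, and the conclusion holds vacuously. For axiom $O1$, i.e.\ $\yes + \no + x = \yes + \no$, the only way to satisfy the hypothesis is with $m = \yes + \no + x$, $n = \yes + \no$ and $s = \varepsilon$. Then $\sigma(m) = \yes + \no + \sigma(x)$ still has $\yes + \no$ as a summand, so it accepts and rejects every trace for every $\sigma$, and the conclusion holds trivially.

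The interesting case is axiom $O2_{s_0,k}$, namely $x + s_0.x + \overline{s_0}^{(k)}(\yes + \no) = x + \overline{s_0}^{(k)}(\yes + \no)$. A direct inspection of the two directions shows that the hypothesis can hold only when $m$ is the left-hand side and $s = s_0$: in the other direction the right-hand side only reaches $x$ via $\varepsilon$, which is matched by the left-hand side, and in the chosen direction the only one-sided variable occurrence is the one coming from the summand $s_0.x$. For this case, $\sigma(m) = \sigma(x) + s_0.\sigma(x) + \overline{s_0}^{(k)}(\yes + \no)$ retains the closed subterm $\overline{s_0}^{(k)}(\yes + \no)$ as a summand, so by Lemma~\ref{lemma:axiomD} it suffices to show that
\[
\overline{s_0}^{(k)}(\yes + \no) \xRightarrow{s_0 s'} \yes \quad \text{and} \quad \overline{s_0}^{(k)}(\yes + \no) \xRightarrow{s_0 s'} \no
\]
whenever $|s'| \geq \depth(\mathcal{E})$. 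Since $\overline{s_0}^{(k)}(\yes + \no)$ is a subterm of $O2_{s_0,k}$, we have $\depth(\mathcal{E}) \geq \depth(O2_{s_0,k}) = k|s_0| + 1$, so this lower bound on $|s'|$ is available.

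To establish the above fact I would argue by induction on $k$, unrolling the recursive definition of $\overline{s_0}^{(k)}(\yes + \no)$. In the base case $k = 1$ one uses the explicit decomposition $\overline{s_0}(\yes + \no) = \overline{s_0}^{\leq}(\yes + \no) + s_0.\sum_{a \in \Act} a.(\yes + \no)$: after reading $s_0$ along the second summand one reaches $\sum_{a \in \Act} a.(\yes + \no)$, and any further action (of which $s'$ has at least $|s_0| + 1 \geq 1$) produces the verdict subterm $\yes + \no$, after which any extension is both accepted and rejected. For $k \geq 2$, the distinguished summand $s_0^{k-1}.\overline{s_0}(\yes + \no)$ of $\overline{s_0}^{(k)}(\yes + \no)$ allows us to peel off one copy of $s_0$ and apply the inductive hypothesis to the resulting subterm, whose depth decreases by exactly $|s_0|$. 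The main obstacle is checking that, despite the non-determinism introduced by the $\overline{s_0}^{\leq}(\yes + \no)$ branches, every maximal path in the syntactic tree of $\overline{s_0}^{(k)}(\yes + \no)$ ends in the verdict $\yes + \no$, so any trace of length at least $\depth(\overline{s_0}^{(k)}(\yes + \no))$ that begins with $s_0$ necessarily reaches a $\yes + \no$ leaf. Once this structural observation is pinned down, the rest of the argument is routine bookkeeping.
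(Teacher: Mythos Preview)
Your proposal is correct and follows exactly the case analysis the paper uses: the axioms in $\mathcal{E}_v$ are handled vacuously, $O1$ is handled trivially because both sides already contain $\yes+\no$, and for $O2_{s_0,k}$ one identifies the one-sided occurrence at $s=s_0$ and reduces to the closed summand $\overline{s_0}^{(k)}(\yes+\no)$, which accepts and rejects every trace $s_0 s'$ with $|s'|$ at least the depth of the axiom. The only difference is that the paper simply asserts this last fact after a depth computation, whereas you sketch an induction on $k$ to justify it; both are fine.
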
 

\begin{proof}

It suffices to examine each member of $\mathcal{E}_{\textit{v,f}}'$ separately. 
%If the statement holds for every axiom individually then by combining them into finite sets will only cause $\depth(\mathcal{E})$ to increase which means that the statement will hold as well. 

\begin{itemize}
\item Each axiom in $\mathcal{E}_v$ does not have any one-sided occurrence of a variable as the ones stated and therefore the lemma holds vacuously.
\item For the axiom $O1$ we have that both sides accept and reject all traces for each $\sigma$ and therefore the claim follows trivially.
\item We are left to discuss the family of equations $\mathcal{O}$. Let us select an arbitrary member of this family, i.e. for some $s_0 \in Act^*$ and some $k \geq 0$, the equation $$x + s_0.x + \overline{s_0}^{(k)}(\yes+\no) =x + \overline{s_0}^{(k)}(\yes+\no)~~.$$

We see that the depth of $x$ is $1$, the depth of $s_0.x$ is $|s_0| + 1$ and the depth of the term $\overline{s_0}^k(\yes+\no)$ is $(k+1)|s_0| +1$ (which follows by the definition of the term $\overline{s}^k(m)$). We can also see that the term $\overline{s_0}^k(\yes +\no)$ accepts and rejects all traces of the form $s_0s'$, where the length of $s'$ is strictly bigger than $(k-1)|s_0|$, which is enough for the statement to hold.

\end{itemize} \end{proof}

Now that we have defined the property we were looking for over a finite subset $\mathcal{E}$ of $\mathcal{E}_{\textit{v,f}}'$, we proceed to show that the property itself is preserved by equational proofs from $\mathcal{E}$.
\begin{theorem}\label{thm:nonFin}

Let $\mathcal{E}$ be a finite subset of $\mathcal{E}_{\textit{v,f}}'$ and let $m = n$ be an equation such that $\mathcal{E} \vdash m = n$.  Assume that:
\begin{itemize}
\item $m \xrightarrow[]{s} m' + x$ for some string $s$, monitor $m'$ and variable $x$ and
\item $n \not\xrightarrow[]{s} n' + x$ for any $n'$.
\end{itemize} 
Then, for every trace of the form $s.s'$ where $|s'| \geq depth (\mathcal{E})$, we have that $ss'\in L_a(\sigma(m)) $ and $ss'\in L_r(\sigma(m))$ for every substitution $\sigma$. 
\end{theorem}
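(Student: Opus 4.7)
The plan is to proceed by induction on the length of the equational derivation $\mathcal{E} \vdash m = n$, showing that the stated property is preserved by every rule of equational logic. In the base case, $m = n$ is one of the axioms of $\mathcal{E}$, and the conclusion is delivered by Lemma~\ref{lem:property}. For the inductive step I will treat reflexivity, transitivity, congruence (for $+$ and for action prefix), and substitutivity separately; the symmetry rule can be ignored because $\mathcal{E}$ is assumed closed under symmetry. Throughout, I will use that $\mathcal{E}$ is sound, so that $\mathcal{E} \vdash m = n$ already entails $m \simeq n$ and hence $L_a(\sigma(m)) = L_a(\sigma(n))$ and $L_r(\sigma(m)) = L_r(\sigma(n))$ for every closed substitution $\sigma$.

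Reflexivity is vacuous: the hypotheses $m \xrightarrow[]{s} m' + x$ and $n \not\xrightarrow[]{s} n' + x$ are incompatible when $m$ and $n$ coincide. For transitivity, where the derivation factors as $\mathcal{E} \vdash m = p$ and $\mathcal{E} \vdash p = n$ through strictly shorter derivations, I will split on whether $p \xrightarrow[]{s} p' + x$ holds for some $p'$. If it does, the subderivation $p = n$ meets the premise, so the inductive hypothesis gives the conclusion for $\sigma(p)$, which transfers to $\sigma(m)$ by soundness. Otherwise $m = p$ meets the premise, and the inductive hypothesis directly gives the conclusion for $\sigma(m)$. The congruence cases are routine propagation: for $m_1 + m_2 = n_1 + n_2$, any witnessing transition must originate in one of the summands $m_i$ (either $s = \varepsilon$ and $x$ is a top-level summand of $m_i$, or $s \neq \varepsilon$ and the first action fires in $m_i$), while the corresponding $n_i$ cannot produce $x$ after $s$ without $n_1 + n_2$ doing so; Lemma~\ref{lemma:axiomD} then lifts the inductive conclusion back to the whole sum. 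For $a.m_1 = a.n_1$ the trace must start with $a$, and the claim reduces to the inductive hypothesis on $m_1 = n_1$ with a strictly shorter trace.

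The main obstacle is the substitutivity case, where from $\mathcal{E} \vdash t = t'$ we obtain $\mathcal{E} \vdash \sigma_0(t) = \sigma_0(t')$ for an arbitrary substitution $\sigma_0$, and transitions of $\sigma_0(t)$ must be traced back to transitions of $t$. The key observation is that variables in \MonF~have no transitions, so every occurrence of $x$ as a top-level summand in a derivative of $\sigma_0(t)$ must be contributed by the image $\sigma_0(y)$ of some variable $y$ reached in $t$'s skeleton. More precisely, if $\sigma_0(t) \xrightarrow[]{s} u + x$ then there exist a variable $y$, a decomposition $s = s_1 s_2$, and monitors $v, w$ such that $t \xrightarrow[]{s_1} v + y$ and $\sigma_0(y) \xrightarrow[]{s_2} w + x$, with the degenerate subcase $s_2 = \varepsilon$ corresponding to $x$ being a top-level summand of $\sigma_0(y)$ itself. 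The hypothesis $\sigma_0(t') \not\xrightarrow[]{s} u' + x$ then forces $t' \not\xrightarrow[]{s_1} v' + y$ for any $v'$, because otherwise the same $s_2$-prefix of $\sigma_0(y)$ would complete a matching transition on the right.

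Granted this reduction, I will apply the inductive hypothesis to $t = t'$ with trace $s_1$ and variable $y$, but against the composed substitution $\sigma \circ \sigma_0$ for an arbitrary closed substitution $\sigma$. This yields $s_1 s'' \in L_a((\sigma \circ \sigma_0)(t)) \cap L_r((\sigma \circ \sigma_0)(t))$ for every $s''$ with $|s''| \geq \depth(\mathcal{E})$. Choosing $s'' = s_2 s'$ for an arbitrary $s'$ with $|s'| \geq \depth(\mathcal{E})$, and using that $(\sigma \circ \sigma_0)(t) = \sigma(\sigma_0(t))$, gives exactly $ss' \in L_a(\sigma(\sigma_0(t))) \cap L_r(\sigma(\sigma_0(t)))$, which is the required conclusion.
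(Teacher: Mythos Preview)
Your proof is correct and follows essentially the same approach as the paper: induction on derivation length, with the base case supplied by Lemma~\ref{lem:property} and the inductive step by case analysis on the last rule applied. Your treatment of the substitutivity case is in fact cleaner than the paper's: you make the decomposition $s = s_1 s_2$ with $t \xrightarrow{s_1} v + y$ and $\sigma_0(y) \xrightarrow{s_2} w + x$ explicit, and you spell out why applying the inductive hypothesis to $t = t'$ at $s_1$ with the composed substitution $\sigma \circ \sigma_0$ yields the claim for $ss' = s_1(s_2 s')$, whereas the paper's corresponding passage is somewhat informal.
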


\begin{proof}

We will use induction over the length of the proof that results in an arbitrary equation $m=n$. Our base case is a proof of length one, where the the only equations we can prove are the axioms themselves and therefore the property holds by Lemma \ref{lem:property}. 

Assume now we have shown that all proofs of length up to $\ell$ preserve the property. We will show that proofs of length up to $\ell+1$ do so as well.
The final step of a proof can be performed by applying: 
\begin{itemize}
\item The congruence rule for $+$,
\item The congruence rule for action prefixing $a.\_$,
\item A variable substitution (for an open substitution $\sigma$), or
\item Transitivity. 
%\item Axiom application.
\end{itemize}
Note here that, as we mentioned earlier, the axiom system $\mathcal{E}_{\textit{v,f}}'$ is closed with respect to symmetry and therefore there is no need to use the symmetry rule in proofs.  We proceed by considering each of the above-mentioned proof steps. 
\begin{itemize}
\item The congruence rule for $+$ must be applied as so: Assume two equations $m_1 = n_1$ and $m_2 = n_2$, two already proven equations for which the statement of the theorem holds (inductive hypothesis). By applying the congruence rule for $+$ we have proven the equation $m = m_1 + m_2 = n_1+n_2 = n$. Assume that $m \xrightarrow[]{s} m' + x$ for some string $s$, monitor $m'$ and variable $x$ and $n \not\xrightarrow[]{s} n' + x$ for any $n'$. By the operational semantics of \MonF~we have that either $m_1 \xrightarrow[]{s} m' + x$ or $m_2 \xrightarrow[]{s} m' + x$. Without loss of generality assume $m_1 \xrightarrow[]{s} m' + x$.  Moreover we have that $n_1 \not\xrightarrow[]{s} n_1' + x$ for any $n_1'$ since $n \not\xrightarrow[]{s} n' + x$ for any $n'$. By inductive hypothesis then for every trace of the form $s.s'$ where $|s'| \geq depth (\mathcal{E})$ we have that $s.s'\in L_a(\sigma(m_1) )$ and $s.s'\in L_r(\sigma(m_1))$ for every substitution $\sigma$. This in turn implies that $s.s'\in L_a(\sigma(m)) $ and $s.s'\in L_r(\sigma(m))$ for every substitution $\sigma$ and we are done.

\item We now consider the case of applying the congruence rule for action prefixing. Assume a proven equation $m_0 = n_0$ on which we apply the axiom prefixing congruence rule for an action $a \in Act$, that is, $m = a.m_0 = a.\no = n$. Assume now that $m \xrightarrow[]{s} m_s + x$ for some string $s$, monitor $m_s$ and variable $x$ and $n \not\xrightarrow[]{s} n_s + x$ for any $n_s$. Since $m = a.m_0$, it follows that $s = as_0$ and $m_0 \xrightarrow[]{s_0} m_0' + x$ for some $m_0'$ and $n_0 \not\xrightarrow[]{s_0} n_0' + x$ for any $n_0'$. Therefore by inductive hypothesis we have that all traces of the form $s_0.s'$ where $|s'| \geq \depth(\mathcal{E})$ are accepted and rejected by $m_0$ under any substitution. Consequently all traces of the form $as_0.s' = ss'$ are both accepted and rejected by $m$ under any substitution and we are done.

%
% By induction hypothesis for every $s'$ where $|s'| \geq k$ we have that $s.s'\in L_a(\sigma(m)) $ and $s.s'\in L_a(\sigma(m))$ for every substitution $\sigma$. After applying the axion prefixing rule we have that the variable now occurs after the trace $a.s$ and additionally for every trace $s'$ with $|s'| \geq k$ we have that both $a.m$ and $a.n$ can perform the transition $ \xrightarrow[]{a.s.s'} \yes +\no $  which means that the property is preserved. 

\item Consider now variable substitution. Note that we will consider open substitutions, in order to capture the more general case. The case of closed substitutions is of course trivial as after one of them is applied there are no variable occurrences left in any equation and therefore the result holds vacuously. We have now that $\mathcal{E} \vdash m'=n'$ for some open monitors $m'$ and $n'$ and that we apply the open substitution $\sigma_0$ in order to prove the open equation $m = \sigma_0(m') = \sigma_0(n') = n$. Assume now that $\sigma_0(m) \xrightarrow[]{s} m_s + x$ for some string $s$, monitor $m_s$ and variable $x$ and $\sigma_0(n) \not\xrightarrow[]{s} n_s + x$ for any $n_s$. We can easily see that every such one-sided occurrence of a variable in the new equation must have resulted from a one-sided variable occurrence in $m' = n'$. This is because if there were no one-sided variable occurrences in the old equation, then under no substitution could one have introduced a variable in only one side without also introducing it on the other side. This means that there exists some variable $y$ (which could be the same as $x$) such that $m' \xrightarrow[]{s_0} m_{s_0}' + y$ for some string $s_0$ where $s_0$ a prefix of $s$, monitor $m_{s_0}'$ and variable $y$ and $n' \not\xrightarrow[]{s_0} n_{s_0}' + y$ for any $n_{s_0}'$. The reason why $s_0$ must be a prefix of $s$ is that an open substitution can only expand the traces that lead to a variable occurrence in the original term. By applying our inductive hypothesis on $m' = n'$, we have that both $m'$ and $n'$ must accept and reject all traces of the form $s_0.s'$ where $|s'| \geq depth (\mathcal{E})$ under any substitution $\sigma$. 
This, in turn, implies that $\sigma_0(m') = m$ accepts and rejects traces of the form $ss'$ under any closed substitution $\sigma$. In fact $\sigma(\sigma_0(m')) = \sigma_0(\sigma_0(m'))$ which means that $m$ and $n$ reject the traces of the form $s_0.s$ as well. Since $s_0$ is a prefix of $s$ we have that for every extension of $s$ of length at least $\depth(\mathcal{E})$ there exists an extension of $s_0$ of length at least $\depth(\mathcal{E})$ that is a prefix of it. Since all traces $s_0.s'$ of this length are both accepted and rejected under any substitution, the same applies for the traces $s.s'$ and we are done.

\item The case of transitivity is also straightforward though the following inductive argument. We start by $\mathcal{E} \vdash m = m'$ and $\mathcal{E} \vdash m' = n$ and we apply the transitivity rule to prove $m =n$. Assume that $m \xrightarrow[]{s} m_s + x$ for some trace $s$, variable $x$ and monitor $m_s$, while $n \not\xrightarrow[]{s} n_s + x$ for any $n_s$. We have that either:  $m' \xrightarrow[]{s} m_s' + x$ for some $m_s'$ or $m' \not\xrightarrow[]{s} m_s' + x$. In the first case we have that the equation $m'=n$ which has already been proven by $\mathcal{E}$ satisfies the premises of the theorem and therefore by induction hypothesis all traces of the form $s.s'$ where $|s'| \geq \depth(\mathcal{E})$ are both accepted and rejected by both $m'$ and $n$. Since $n \simeq m$ by the soundness of $\mathcal{E}_{\textit{v,f}}'$ and thus $\mathcal{E}$, we have that $m$ also accept and rejects all of these traces and we are done. In the second case and via a similar argument we have the same result.

%
%\item We finally have to analyze the case of axiom application to some sub-term of a proven equation $m=n$. This is also fairly straightforward as we have that since the axioms are sound, when applied the do not alter the acceptance and rejection sets of the equation in which they are instantiated. Therefore if the one sided occurrence of a variable existed in an equation the necessary traces stated by the property also are there. If a one sided occurrence of a variable is created because of the axiom application we also have that the necessary traces must have been accepted and rejected before the application of the axiom so that the axiom could be applied. 

\end{itemize}
This concludes the case analysis for our inductive proof and we are done. \end{proof}

As we can see, if we start from any finite subset $\mathcal{E}$ of $\mathcal{E}_{\textit{v,f}}'$, we are bound to only prove equations that have the property in the statement of Theorem \ref{thm:nonFin}. We now argue that for each $\mathcal{E}$ there will always exist sound equations in $\mathcal{E}_{\textit{v,f}}'$ that do not satisfy the above property and therefore the axiom set $\mathcal{E}$ is not enough to prove them. 
\begin{lemma}
Let $\mathcal{E}$ be a finite subset of $\mathcal{E}_{\textit{v,f}}'$. There exists a sound  equation $m = n$ in $\mathcal{O}$ such that $m \xrightarrow[]{s} m' + x$ for some string $s$, monitor $m'$ and variable $x$ and $n \not\xrightarrow[]{s} n' + x$ for any $n'$ and there is at least one trace of the form $s.s'$ where $|s'| \geq \depth(\mathcal{E})$ and $s.s'\not\in L_a(\sigma(m)) $ and $s.s' \not\in L_a(\sigma(m))$ for the one substitution $\sigma_{\vend} = \vend$, for every $x$. 
\end{lemma}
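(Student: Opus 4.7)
Given a finite subset $\mathcal{E}$ of $\mathcal{E}_{\textit{v,f}}'$ with depth $d = \depth(\mathcal{E})$, the plan is to produce a counterexample directly from the family $\mathcal{O}$. Fix any action $a \in \Act$, set $s = a$, and choose $k = d+1$ (more generally, any $s \neq \varepsilon$ and $k$ with $(k-1)|s| \geq d$ will do). Take the equation $O2_{s,k}$, written as $m = n$ with $m = x + s.x + \overline{s}^{(k)}(\yes+\no)$ and $n = x + \overline{s}^{(k)}(\yes+\no)$. This equation is sound (it belongs to $\mathcal{O}$), so it remains to verify the three asymmetry properties for some trace.

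First, I would check the syntactic variable occurrence. The summand $s.x$ of $m$ yields $m \xrightarrow[\text{}]{s} x$ by the operational rules in Table \ref{tab:sos_rules}, exhibiting $x$ as a (syntactic) summand of the reached term. On the right-hand side, the only place $x$ occurs in $n$ is as a top-level summand of $n$ itself; since variables admit no outgoing transitions and the remainder $\overline{s}^{(k)}(\yes+\no)$ is a closed monitor, no $s$-derivative of $n$ contains $x$. Hence $n \not\xrightarrow[\text{}]{s} n' + x$ for any $n'$, and the first two bullets required by the lemma hold.

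Next I would exhibit the distinguishing trace under $\sigma_{\vend}$. Applying $\sigma_{\vend}$ to $m$ gives $\sigma_{\vend}(m) = \vend + s.\vend + \overline{s}^{(k)}(\yes+\no)$; the summands $\vend$ and $s.\vend$ contribute nothing to either $L_a$ or $L_r$ (they can only ever reach $\vend$), so $L_a(\sigma_{\vend}(m)) = L_a(\overline{s}^{(k)}(\yes+\no))$ and likewise for rejection. Take $s' = s^{k-1}$, so that $|s'| = (k-1)|s| \geq d$ and $s\cdot s' = s^k$. The key claim is that $\overline{s}^{(k)}(\yes+\no)$ neither accepts nor rejects $s^k$, which is the main content to verify. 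Unpacking the definition, for each summand $s^i.\overline{s}^{\leq}(\yes+\no)$ with $i < k-1$, after reading $s^i$ the residual $\overline{s}^{\leq}(\yes+\no)$ requires a continuation of length at most $|s|$ that is \emph{not} a prefix of $s$; but the required continuation of $s^k$ after $s^i$ begins with $s$, which is a prefix of $s$, so no successful run on $s^k$ arises through this summand. For the summand $s^{k-1}.\overline{s}(\yes+\no)$, after reading $s^{k-1}$ we reach $\overline{s}(\yes+\no) = \overline{s}^{\leq}(\yes+\no) + s.\sum_{b\in \Act} b.(\yes+\no)$, and processing the remaining $s$ fails in both summands for the same reason (it is a prefix of $s$, and the right summand needs a strict one-symbol extension of $s$). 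Hence $s^k \notin L_a(\sigma_{\vend}(m))$ and $s^k \notin L_r(\sigma_{\vend}(m))$, completing the argument.

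The main obstacle is precisely this last verification: keeping the bookkeeping of $\overline{s}^{(k)}$ straight and ruling out every summand as a possible source of acceptance or rejection of $s^k$. Once that is done, the other two conditions are immediate from the syntactic shape of $O2_{s,k}$ and the operational rules.
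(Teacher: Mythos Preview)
Your proof is correct and follows the same strategy as the paper: exhibit a member of $\mathcal{O}$ whose parameters are large enough to violate the property isolated in Theorem~\ref{thm:nonFin}. The only difference is in the instantiation---the paper fixes $k=3$ and takes $s=a^n$ with $n>\depth(\mathcal{E})$, whereas you fix $s=a$ and take $k=\depth(\mathcal{E})+1$; both choices make the ``forbidden'' trace $s^k$ long enough, and your parenthetical remark that any $s\neq\varepsilon$ and $k$ with $(k-1)|s|\geq \depth(\mathcal{E})$ works subsumes both.
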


\begin{proof}
It suffices to give an example from the members of the family $\mathcal{O}$. Namely we consider the equation: $$ x + a^n.x + \overline{(a^n)}^3(\yes+\no)= x + \overline{(a^n)}^3(\yes+\no) ~,$$ where $n > \depth(\mathcal{E})$.

We can clearly see that first of all the occurrence of $x$ after the trace $a^n$ is one-sided in the left hand side of the equation.  However there is a substitution (namely $\sigma(x)  = \vend $) under which the trace $a^{2n +1}$ is neither accepted nor rejected by the two monitors even though the length of $a^{(n+1)}$ is strictly larger than $\depth(\mathcal{E})$. \end{proof}

\begin{theorem}\label{thm:nonfinfinal}
There is no finite complete set of axioms for verdict equivalence over \MonF~over a finite, non-unary set of actions.
\end{theorem}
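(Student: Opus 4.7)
The plan is to assemble the three ingredients that have just been developed into a short contradiction argument. Suppose, towards a contradiction, that there exists a finite equational axiomatization $\mathcal{E}_0$ that is complete for $\simeq$ over \MonF~when $\Act$ is finite and $|\Act|\geq 2$. My first step will be the reduction, already sketched in the paragraph before Lemma~\ref{lem:property}: since $\mathcal{E}_{\textit{v,f}}'$ is complete (Theorem~\ref{thm:CompOpenFin}), each of the finitely many axioms of $\mathcal{E}_0$ admits a \emph{finite} equational derivation from $\mathcal{E}_{\textit{v,f}}'$, so collecting all axioms used in those derivations produces a \emph{finite} $\mathcal{E} \subseteq \mathcal{E}_{\textit{v,f}}'$ with $\mathcal{E} \vdash \mathcal{E}_0$ and hence $\mathcal{E} \vdash m = n$ for every sound equation $m = n$. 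Thus it suffices to exhibit a sound equation that is \emph{not} derivable from any finite $\mathcal{E} \subseteq \mathcal{E}_{\textit{v,f}}'$.

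For this, I will invoke Theorem~\ref{thm:nonFin}, which gives the key invariant: whenever $\mathcal{E} \vdash m = n$ and $m \xrightarrow[]{s} m' + x$ while $n \not\xrightarrow[]{s} n'' + x$ for any $n''$, then for every extension $ss'$ with $|s'| \geq \depth(\mathcal{E})$ and every substitution $\sigma$, the trace $ss'$ belongs to both $L_a(\sigma(m))$ and $L_r(\sigma(m))$. This property is preserved by the rules of equational logic, so it is a necessary condition for derivability from $\mathcal{E}$.

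The final step is to produce an equation that is sound yet violates this necessary condition. Set $n_0 = \depth(\mathcal{E}) + 1$ and consider the instance of $\mathcal{O}$
\[
x + a^{n_0}.x + \overline{a^{n_0}}^{(3)}(\yes + \no) \;=\; x + \overline{a^{n_0}}^{(3)}(\yes + \no) ,
\]
which is sound and in which the occurrence of $x$ after the trace $s = a^{n_0}$ is one-sided (left-hand side only). Instantiating the preceding lemma I will verify that under the substitution $\sigma_{\vend}$ mapping every variable to $\vend$, there is a trace $ss'$ with $|s'| \geq \depth(\mathcal{E})$ that is neither accepted nor rejected by $\sigma_{\vend}(m)$ (the summand $a^{n_0}.x$ collapses to $a^{n_0}.\vend \simeq \vend$, and $\overline{a^{n_0}}^{(3)}(\yes + \no)$ cannot accept or reject every long enough extension of $a^{n_0}$, only those that are not prefixes of $a^{3n_0}$; concretely one may take $s' = a^{3n_0}$). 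Hence the equation above cannot be derived from $\mathcal{E}$, contradicting completeness.

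The argument is essentially routine once Theorem~\ref{thm:nonFin} and the lemma immediately preceding this theorem are in hand; the only thing one has to be careful about is picking the witness $s'$ so that it is simultaneously long enough to trigger the property of Theorem~\ref{thm:nonFin} and yet not accepted or rejected by $\sigma_{\vend}(m)$. Because $\depth(\mathcal{E})$ is a \emph{fixed} constant depending only on the finite $\mathcal{E}$, taking $n_0$ strictly larger than $\depth(\mathcal{E})$ gives enough slack to make this choice, which is exactly the obstacle that a finite axiomatization fails to overcome and the essence of the non-finite-axiomatizability result.
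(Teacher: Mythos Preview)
Your approach is essentially the paper's: reduce an arbitrary finite basis to a finite $\mathcal{E}\subseteq\mathcal{E}_{\textit{v,f}}'$, invoke Theorem~\ref{thm:nonFin} as the proof-invariant, and exhibit an instance of $\mathcal{O}$ with a one-sided variable occurrence that violates the invariant under $\sigma_{\vend}$. The paper does exactly this, picking the same family of counterexamples.

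There is one arithmetic slip in your concrete witness. With $s=a^{n_0}$ and $m=x+a^{n_0}.x+\overline{a^{n_0}}^{(3)}(\yes+\no)$, the summand $a^{2n_0}.\overline{a^{n_0}}(\yes+\no)$ contains $a^{3n_0}.\sum_{c\in\Act}c.(\yes+\no)$, so $\sigma_{\vend}(m)$ \emph{does} accept and reject $a^{3n_0+1}$ and hence every longer $a$-trace, including $ss'=a^{4n_0}$ for your choice $s'=a^{3n_0}$. The traces of the form $ss'$ that are \emph{not} accepted/rejected by $\sigma_{\vend}(m)$ are precisely $a^{n_0},\ldots,a^{3n_0}$, i.e.\ $s'=a^{i}$ with $0\le i\le 2n_0$. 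Any such $s'$ with $i\ge \depth(\mathcal{E})$ works; the paper takes $s'=a^{n_0+1}$ (full trace $a^{2n_0+1}$). Once you adjust the witness accordingly, your argument goes through and coincides with the paper's.
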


\begin{proof}
Let $\mathcal{E}$ be a finite subset $\mathcal{E}_{\textit{v,f}}'$. Then, by the above lemma, $\mathcal{E}$ cannot prove the sound equation $$ x + a^n.x + \overline{(a^n)}^3(\yes+\no)= x + \overline{(a^n)}^3(\yes+\no) ~,$$ for $n > \depth(\mathcal{E})$ and we are done. \end{proof}

\section{Conclusions}\label{Sect:conclusions}  
% !TeX root = axioms_Monitors_VeEq.tex

In this article, we have studied the equational theory of recursion-free, regular monitors from~\cite{AcetoPOPL19,operGuidetoMon,FraAI17} modulo two natural notions of monitor equivalence, namely verdict and $\omega$-verdict equivalence.
 We have provided complete axiomatizations for those equivalences over closed and open terms.
  The axiomatizations over closed terms are finite when so is the set of actions monitors can process.
  On the other hand, even when the set of actions is finite, whether those equivalences have finite bases over open terms depends on the cardinality  of the action set.
  For instance, we have shown that verdict equivalence has no finite equational axiomatization when the set of actions contains at least two actions.

Since verdict and $\omega$-verdict equivalence are trace-based behavioral equivalences, our axiomatizations, which are summarized in  Table~\ref{tab:allAxiomSystems}, share a number of equations with those for trace and completed trace equivalence over BCCSP~\cite{Gla01} and for equality of regular expressions~\cite{JHConway71,Kozen94,Salomaa66}. However, the presence of the $\yes$, $\no$ and $\vend$ verdicts yields a number of novelties and technical complications, which are most evident in the axiomatization results over open terms and in the negative result we present in Section~\ref{sec:neg}. By way of example, we remark here that, as mentioned in~\cite{ChenFLN08}, trace and completed trace equivalence are finitely based over BCCSP when the set of actions is finite, unlike the notions we study in this paper over monitors. Moreover, unlike the one given in this paper, proofs of non-finite-axiomatizability results for regular expressions rely on families of equations that exploit the interplay between Kleene star and concatenation, such as 
\[
a^* = (a^n)^*(1+a+\cdots+a^{n-1}) \quad (n>0) . 
\]
See, for instance,~\cite{AcetoFI98,JHConway71,redko_reg_lang}.

The results presented in this article deal with a minimal language for
monitors that is mainly of theoretical interest and set the stage for
further research.  An interesting and natural avenue for future work
is to study the complexity of the equational theory of verdict and
$\omega$-verdict equivalence. Moreover, one could investigate
axiomatizations of those behavioral equivalences over extensions of
recursion-free monitors with the parallel operators considered
in~\cite{AcetoPOPL19} and/or with recursion \cite{FraAI17}. As shown
in~\cite{AcetoPOPL19}(Proposition~3.8), every `reactive parallel
monitor' is verdict equivalent to a regular one. This opens the
tantalizing possibility that verdict equivalence affords an elegant
equational axiomatization over such monitors. However, the proof of
Proposition~3.8 in~\citep{AcetoPOPL19} relies on a non-trivial
automata-theoretic construction, which would have to be simulated
equationally to transform `reactive parallel monitors' into regular
ones. We leave this interesting problem for further study.

\bibliographystyle{model5-names}
%\biboptions{authoryear}
\bibliography{mybibfile}

\begin{thebibliography}{52}
\expandafter\ifx\csname natexlab\endcsname\relax\def\natexlab#1{#1}\fi
\providecommand{\url}[1]{\texttt{#1}}
\providecommand{\href}[2]{#2}
\providecommand{\path}[1]{#1}
\providecommand{\DOIprefix}{doi:}
\providecommand{\ArXivprefix}{arXiv:}
\providecommand{\URLprefix}{URL: }
\providecommand{\Pubmedprefix}{pmid:}
\providecommand{\doi}[1]{\href{http://dx.doi.org/#1}{\path{#1}}}
\providecommand{\Pubmed}[1]{\href{pmid:#1}{\path{#1}}}
\providecommand{\bibinfo}[2]{#2}
\ifx\xfnm\relax \def\xfnm[#1]{\unskip,\space#1}\fi
%Type = Article
\bibitem[{Aceto et~al.(2019{\natexlab{a}})Aceto, Achilleos, Francalanza,
  Ing{\'{o}}lfsd{\'{o}}ttir \& Lehtinen}]{AcetoPOPL19}
\bibinfo{author}{Aceto, L.}, \bibinfo{author}{Achilleos, A.},
  \bibinfo{author}{Francalanza, A.},
  \bibinfo{author}{Ing{\'{o}}lfsd{\'{o}}ttir, A.}, \&
  \bibinfo{author}{Lehtinen, K.} (\bibinfo{year}{2019}{\natexlab{a}}).
\newblock \bibinfo{title}{Adventures in monitorability: from branching to
  linear time and back again}.
\newblock {\it \bibinfo{journal}{Proc. {ACM} Program. Lang.}\/},  {\it
  \bibinfo{volume}{3}\/}, \bibinfo{pages}{52:1--52:29}.
  \DOIprefix\doi{10.1145/3290365}.
%Type = Inproceedings
\bibitem[{Aceto et~al.(2019{\natexlab{b}})Aceto, Achilleos, Francalanza,
  Ing{\'{o}}lfsd{\'{o}}ttir \& Lehtinen}]{operGuidetoMon}
\bibinfo{author}{Aceto, L.}, \bibinfo{author}{Achilleos, A.},
  \bibinfo{author}{Francalanza, A.},
  \bibinfo{author}{Ing{\'{o}}lfsd{\'{o}}ttir, A.}, \&
  \bibinfo{author}{Lehtinen, K.} (\bibinfo{year}{2019}{\natexlab{b}}).
\newblock \bibinfo{title}{An operational guide to monitorability}.
\newblock In \bibinfo{editor}{P.~C. {\"{O}}lveczky}, \&
  \bibinfo{editor}{G.~Sala{\"{u}}n} (Eds.), {\it \bibinfo{booktitle}{Software
  Engineering and Formal Methods - 17th International Conference, {SEFM} 2019,
  Oslo, Norway, September 18-20, 2019, Proceedings}\/} (pp.
  \bibinfo{pages}{433--453}).
\newblock \bibinfo{publisher}{Springer} volume \bibinfo{volume}{11724} of {\it
  \bibinfo{series}{Lecture Notes in Computer Science}\/}.
\newblock \DOIprefix\doi{10.1007/978-3-030-30446-1\_23}.
%Type = Inproceedings
\bibitem[{Aceto et~al.(2021{\natexlab{a}})Aceto, Attard, Francalanza \&
  Ing{\'{o}}lfsd{\'{o}}ttir}]{AcetoAFI21}
\bibinfo{author}{Aceto, L.}, \bibinfo{author}{Attard, D.~P.},
  \bibinfo{author}{Francalanza, A.}, \&
  \bibinfo{author}{Ing{\'{o}}lfsd{\'{o}}ttir, A.}
  (\bibinfo{year}{2021}{\natexlab{a}}).
\newblock \bibinfo{title}{On benchmarking for concurrent runtime verification}.
\newblock In \bibinfo{editor}{E.~Guerra}, \& \bibinfo{editor}{M.~Stoelinga}
  (Eds.), {\it \bibinfo{booktitle}{Fundamental Approaches to Software
  Engineering - 24th International Conference, {FASE} 2021, Held as Part of the
  European Joint Conferences on Theory and Practice of Software, {ETAPS} 2021,
  Luxembourg City, Luxembourg, March 27 - April 1, 2021, Proceedings}\/} (pp.
  \bibinfo{pages}{3--23}).
\newblock \bibinfo{publisher}{Springer} volume \bibinfo{volume}{12649} of {\it
  \bibinfo{series}{Lecture Notes in Computer Science}\/}.
\newblock \DOIprefix\doi{10.1007/978-3-030-71500-7\_1}.
%Type = Inproceedings
\bibitem[{Aceto et~al.(2021{\natexlab{b}})Aceto, Castiglioni, Fokkink,
  Ing{\'{o}}lfsd{\'{o}}ttir \& Luttik}]{AcetoCFIL21}
\bibinfo{author}{Aceto, L.}, \bibinfo{author}{Castiglioni, V.},
  \bibinfo{author}{Fokkink, W.~J.}, \bibinfo{author}{Ing{\'{o}}lfsd{\'{o}}ttir,
  A.}, \& \bibinfo{author}{Luttik, B.} (\bibinfo{year}{2021}{\natexlab{b}}).
\newblock \bibinfo{title}{Are two binary operators necessary to finitely
  axiomatise parallel composition?}
\newblock In \bibinfo{editor}{C.~Baier}, \&
  \bibinfo{editor}{J.~Goubault{-}Larrecq} (Eds.), {\it \bibinfo{booktitle}{29th
  {EACSL} Annual Conference on Computer Science Logic, {CSL} 2021, January
  25-28, 2021, Ljubljana, Slovenia (Virtual Conference)}\/} (pp.
  \bibinfo{pages}{8:1--8:17}).
\newblock \bibinfo{publisher}{Schloss Dagstuhl - Leibniz-Zentrum f{\"{u}}r
  Informatik} volume \bibinfo{volume}{183} of {\it \bibinfo{series}{LIPIcs}\/}.
\newblock \DOIprefix\doi{10.4230/LIPIcs.CSL.2021.8}.
%Type = Inproceedings
\bibitem[{Aceto et~al.(2020)Aceto, Castiglioni, Ing{\'{o}}lfsd{\'{o}}ttir,
  Luttik \& Pedersen}]{AcetoCILP20}
\bibinfo{author}{Aceto, L.}, \bibinfo{author}{Castiglioni, V.},
  \bibinfo{author}{Ing{\'{o}}lfsd{\'{o}}ttir, A.}, \bibinfo{author}{Luttik,
  B.}, \& \bibinfo{author}{Pedersen, M.~R.} (\bibinfo{year}{2020}).
\newblock \bibinfo{title}{On the axiomatisability of parallel composition: {A}
  journey in the spectrum}.
\newblock In \bibinfo{editor}{I.~Konnov}, \& \bibinfo{editor}{L.~Kov{\'{a}}cs}
  (Eds.), {\it \bibinfo{booktitle}{31st International Conference on Concurrency
  Theory, {CONCUR} 2020}\/} (pp. \bibinfo{pages}{18:1--18:22}).
\newblock \bibinfo{publisher}{Schloss Dagstuhl - Leibniz-Zentrum f{\"{u}}r
  Informatik} volume \bibinfo{volume}{171} of {\it \bibinfo{series}{LIPIcs}\/}.
\newblock \DOIprefix\doi{10.4230/LIPIcs.CONCUR.2020.18}.
%Type = Article
\bibitem[{Aceto et~al.(1998)Aceto, Fokkink \&
  Ing{\'{o}}lfsd{\'{o}}ttir}]{AcetoFI98}
\bibinfo{author}{Aceto, L.}, \bibinfo{author}{Fokkink, W.~J.}, \&
  \bibinfo{author}{Ing{\'{o}}lfsd{\'{o}}ttir, A.} (\bibinfo{year}{1998}).
\newblock \bibinfo{title}{On a question of {A. Salomaa}: The equational theory
  of regular expressions over a singleton alphabet is not finitely based}.
\newblock {\it \bibinfo{journal}{Theoretical Computer Science}\/},  {\it
  \bibinfo{volume}{209}\/}, \bibinfo{pages}{163--178}.
  \DOIprefix\doi{10.1016/S0304-3975(97)00104-7}.
%Type = Inproceedings
\bibitem[{Aceto et~al.(2005)Aceto, Fokkink, Ing{\'{o}}lfsd{\'{o}}ttir \&
  Luttik}]{AcetoFIL05}
\bibinfo{author}{Aceto, L.}, \bibinfo{author}{Fokkink, W.~J.},
  \bibinfo{author}{Ing{\'{o}}lfsd{\'{o}}ttir, A.}, \& \bibinfo{author}{Luttik,
  B.} (\bibinfo{year}{2005}).
\newblock \bibinfo{title}{Finite equational bases in process algebra: Results
  and open questions}.
\newblock In \bibinfo{editor}{A.~Middeldorp}, \bibinfo{editor}{V.~van Oostrom},
  \bibinfo{editor}{F.~van Raamsdonk}, \& \bibinfo{editor}{R.~C. de~Vrijer}
  (Eds.), {\it \bibinfo{booktitle}{Processes, Terms and Cycles: Steps on the
  Road to Infinity, Essays Dedicated to Jan Willem Klop, on the Occasion of His
  60th Birthday}\/} (pp. \bibinfo{pages}{338--367}).
\newblock \bibinfo{publisher}{Springer} volume \bibinfo{volume}{3838} of {\it
  \bibinfo{series}{Lecture Notes in Computer Science}\/}.
\newblock \DOIprefix\doi{10.1007/11601548\_18}.
%Type = Book
\bibitem[{Baeten et~al.(2009)Baeten, Basten \& Reniers}]{BaetenBR2009}
\bibinfo{author}{Baeten, J. C.~M.}, \bibinfo{author}{Basten, T.}, \&
  \bibinfo{author}{Reniers, M.~A.} (\bibinfo{year}{2009}).
\newblock {\it \bibinfo{title}{Process Algebra: Equational Theories of
  Communicating Processes}\/} volume~\bibinfo{volume}{50} of {\it
  \bibinfo{series}{Cambridge Tracts in Theoretical Computer Science}\/}.
\newblock \bibinfo{publisher}{Cambridge University Press}.
\newblock \DOIprefix\doi{10.1017/CBO9781139195003}.
%Type = Inproceedings
\bibitem[{Baeten \& Bergstra(1990)}]{BaetenB90}
\bibinfo{author}{Baeten, J. C.~M.}, \& \bibinfo{author}{Bergstra, J.~A.}
  (\bibinfo{year}{1990}).
\newblock \bibinfo{title}{Process algebra with a zero object}.
\newblock In \bibinfo{editor}{J.~C.~M. Baeten}, \& \bibinfo{editor}{J.~W. Klop}
  (Eds.), {\it \bibinfo{booktitle}{{CONCUR} '90, Theories of Concurrency:
  Unification and Extension, Amsterdam, The Netherlands, August 27-30, 1990,
  Proceedings}\/} (pp. \bibinfo{pages}{83--98}).
\newblock \bibinfo{publisher}{Springer} volume \bibinfo{volume}{458} of {\it
  \bibinfo{series}{Lecture Notes in Computer Science}\/}.
\newblock \DOIprefix\doi{10.1007/BFb0039053}.
%Type = Inproceedings
\bibitem[{Barringer et~al.(2012)Barringer, Falcone, Havelund, Reger \&
  Rydeheard}]{BarringerFHRR12}
\bibinfo{author}{Barringer, H.}, \bibinfo{author}{Falcone, Y.},
  \bibinfo{author}{Havelund, K.}, \bibinfo{author}{Reger, G.}, \&
  \bibinfo{author}{Rydeheard, D.~E.} (\bibinfo{year}{2012}).
\newblock \bibinfo{title}{Quantified event automata: Towards expressive and
  efficient runtime monitors}.
\newblock In \bibinfo{editor}{D.~Giannakopoulou}, \&
  \bibinfo{editor}{D.~M{\'{e}}ry} (Eds.), {\it \bibinfo{booktitle}{{FM} 2012:
  Formal Methods - 18th International Symposium}\/} (pp.
  \bibinfo{pages}{68--84}).
\newblock \bibinfo{publisher}{Springer} volume \bibinfo{volume}{7436} of {\it
  \bibinfo{series}{Lecture Notes in Computer Science}\/}.
\newblock \DOIprefix\doi{10.1007/978-3-642-32759-9\_9}.
%Type = Inproceedings
\bibitem[{Barringer et~al.(2004)Barringer, Goldberg, Havelund \&
  Sen}]{BarringerGHS04}
\bibinfo{author}{Barringer, H.}, \bibinfo{author}{Goldberg, A.},
  \bibinfo{author}{Havelund, K.}, \& \bibinfo{author}{Sen, K.}
  (\bibinfo{year}{2004}).
\newblock \bibinfo{title}{Rule-based runtime verification}.
\newblock In \bibinfo{editor}{B.~Steffen}, \& \bibinfo{editor}{G.~Levi} (Eds.),
  {\it \bibinfo{booktitle}{Verification, Model Checking, and Abstract
  Interpretation, 5th International Conference, {VMCAI} 2004}\/} (pp.
  \bibinfo{pages}{44--57}).
\newblock \bibinfo{publisher}{Springer} volume \bibinfo{volume}{2937} of {\it
  \bibinfo{series}{Lecture Notes in Computer Science}\/}.
\newblock \DOIprefix\doi{10.1007/978-3-540-24622-0\_5}.
%Type = Article
\bibitem[{Barringer et~al.(2010)Barringer, Rydeheard \&
  Havelund}]{BarringerRH10}
\bibinfo{author}{Barringer, H.}, \bibinfo{author}{Rydeheard, D.~E.}, \&
  \bibinfo{author}{Havelund, K.} (\bibinfo{year}{2010}).
\newblock \bibinfo{title}{Rule systems for run-time monitoring: From {Eagle} to
  {RuleR}}.
\newblock {\it \bibinfo{journal}{Journal of Logic and Computation}\/},  {\it
  \bibinfo{volume}{20}\/}, \bibinfo{pages}{675--706}.
  \DOIprefix\doi{10.1093/logcom/exn076}.
%Type = Book
\bibitem[{Bartocci \& Falcone(2018)}]{BartocciF2018}
\bibinfo{editor}{Bartocci, E.}, \& \bibinfo{editor}{Falcone, Y.} (Eds.)
  (\bibinfo{year}{2018}).
\newblock {\it \bibinfo{title}{Lectures on Runtime Verification - Introductory
  and Advanced Topics}\/} volume \bibinfo{volume}{10457} of {\it
  \bibinfo{series}{Lecture Notes in Computer Science}\/}.
\newblock \bibinfo{publisher}{Springer}.
\newblock \DOIprefix\doi{10.1007/978-3-319-75632-5}.
%Type = Article
\bibitem[{Bauer et~al.(2015)Bauer, K{\"{u}}ster \& Vegliach}]{BauerKV15}
\bibinfo{author}{Bauer, A.}, \bibinfo{author}{K{\"{u}}ster, J.}, \&
  \bibinfo{author}{Vegliach, G.} (\bibinfo{year}{2015}).
\newblock \bibinfo{title}{The ins and outs of first-order runtime
  verification}.
\newblock {\it \bibinfo{journal}{Formal Methods in System Design}\/},  {\it
  \bibinfo{volume}{46}\/}, \bibinfo{pages}{286--316}.
  \DOIprefix\doi{10.1007/s10703-015-0227-2}.
%Type = Article
\bibitem[{Bauer et~al.(2011)Bauer, Leucker \& Schallhart}]{monitorLTL_tLTL}
\bibinfo{author}{Bauer, A.}, \bibinfo{author}{Leucker, M.}, \&
  \bibinfo{author}{Schallhart, C.} (\bibinfo{year}{2011}).
\newblock \bibinfo{title}{Runtime verification for {LTL} and {TLTL}}.
\newblock {\it \bibinfo{journal}{{ACM} Trans. Softw. Eng. Methodol.}\/},  {\it
  \bibinfo{volume}{20}\/}, \bibinfo{pages}{14:1--14:64}.
  \DOIprefix\doi{10.1145/2000799.2000800}.
%Type = Article
\bibitem[{Bergstra \& Klop(1984)}]{BergstraK84}
\bibinfo{author}{Bergstra, J.~A.}, \& \bibinfo{author}{Klop, J.~W.}
  (\bibinfo{year}{1984}).
\newblock \bibinfo{title}{Process algebra for synchronous communication}.
\newblock {\it \bibinfo{journal}{Information and Control}\/},  {\it
  \bibinfo{volume}{60}\/}, \bibinfo{pages}{109--137}.
  \DOIprefix\doi{10.1016/S0019-9958(84)80025-X}.
%Type = Inproceedings
\bibitem[{Bonakdarpour et~al.(2016)Bonakdarpour, Fraigniaud, Rajsbaum,
  Rosenblueth \& Travers}]{BonakdarpourFRR16}
\bibinfo{author}{Bonakdarpour, B.}, \bibinfo{author}{Fraigniaud, P.},
  \bibinfo{author}{Rajsbaum, S.}, \bibinfo{author}{Rosenblueth, D.~A.}, \&
  \bibinfo{author}{Travers, C.} (\bibinfo{year}{2016}).
\newblock \bibinfo{title}{Decentralized asynchronous crash-resilient runtime
  verification}.
\newblock In \bibinfo{editor}{J.~Desharnais}, \&
  \bibinfo{editor}{R.~Jagadeesan} (Eds.), {\it \bibinfo{booktitle}{27th
  International Conference on Concurrency Theory, {CONCUR} 2016}\/} (pp.
  \bibinfo{pages}{16:1--16:15}).
\newblock \bibinfo{publisher}{Schloss Dagstuhl - Leibniz-Zentrum f{\"{u}}r
  Informatik} volume~\bibinfo{volume}{59} of {\it \bibinfo{series}{LIPIcs}\/}.
\newblock \DOIprefix\doi{10.4230/LIPIcs.CONCUR.2016.16}.
%Type = Inproceedings
\bibitem[{Brookes(1983)}]{Brookes83}
\bibinfo{author}{Brookes, S.~D.} (\bibinfo{year}{1983}).
\newblock \bibinfo{title}{A semantics and proof system for communicating
  processes}.
\newblock In \bibinfo{editor}{E.~M. Clarke}, \& \bibinfo{editor}{D.~Kozen}
  (Eds.), {\it \bibinfo{booktitle}{Logics of Programs}\/} (pp.
  \bibinfo{pages}{68--85}).
\newblock \bibinfo{publisher}{Springer} volume \bibinfo{volume}{164} of {\it
  \bibinfo{series}{Lecture Notes in Computer Science}\/}.
\newblock \DOIprefix\doi{10.1007/3-540-12896-4\_356}.
%Type = Article
\bibitem[{Chen et~al.(2008)Chen, Fokkink, Luttik \& Nain}]{ChenFLN08}
\bibinfo{author}{Chen, T.}, \bibinfo{author}{Fokkink, W.~J.},
  \bibinfo{author}{Luttik, B.}, \& \bibinfo{author}{Nain, S.}
  (\bibinfo{year}{2008}).
\newblock \bibinfo{title}{On finite alphabets and infinite bases}.
\newblock {\it \bibinfo{journal}{Information and Computation}\/},  {\it
  \bibinfo{volume}{206}\/}, \bibinfo{pages}{492--519}.
  \DOIprefix\doi{10.1016/j.ic.2007.09.003}.
%Type = Book
\bibitem[{Conway(1971)}]{JHConway71}
\bibinfo{author}{Conway, J.~H.} (\bibinfo{year}{1971}).
\newblock {\it \bibinfo{title}{Regular Algebra and Finite Machines}\/}.
\newblock \bibinfo{address}{London}: \bibinfo{publisher}{Chapman and Hall}.
%Type = Inproceedings
\bibitem[{Cranen et~al.(2013)Cranen, Groote, Keiren, Stappers, de~Vink,
  Wesselink \& Willemse}]{CranenGKSVWW13}
\bibinfo{author}{Cranen, S.}, \bibinfo{author}{Groote, J.~F.},
  \bibinfo{author}{Keiren, J. J.~A.}, \bibinfo{author}{Stappers, F. P.~M.},
  \bibinfo{author}{de~Vink, E.~P.}, \bibinfo{author}{Wesselink, W.}, \&
  \bibinfo{author}{Willemse, T. A.~C.} (\bibinfo{year}{2013}).
\newblock \bibinfo{title}{An overview of the {mCRL2} toolset and its recent
  advances}.
\newblock In \bibinfo{editor}{N.~Piterman}, \& \bibinfo{editor}{S.~A. Smolka}
  (Eds.), {\it \bibinfo{booktitle}{Tools and Algorithms for the Construction
  and Analysis of Systems - 19th International Conference, {TACAS} 2013}\/}
  (pp. \bibinfo{pages}{199--213}).
\newblock \bibinfo{publisher}{Springer} volume \bibinfo{volume}{7795} of {\it
  \bibinfo{series}{Lecture Notes in Computer Science}\/}.
\newblock \DOIprefix\doi{10.1007/978-3-642-36742-7\_15}.
%Type = Inproceedings
\bibitem[{Diekert \& Gastin(2008)}]{DiekertG08}
\bibinfo{author}{Diekert, V.}, \& \bibinfo{author}{Gastin, P.}
  (\bibinfo{year}{2008}).
\newblock \bibinfo{title}{First-order definable languages}.
\newblock In \bibinfo{editor}{J.~Flum}, \bibinfo{editor}{E.~Gr{\"{a}}del}, \&
  \bibinfo{editor}{T.~Wilke} (Eds.), {\it \bibinfo{booktitle}{Logic and
  Automata: History and Perspectives [in Honor of Wolfgang Thomas]}\/} (pp.
  \bibinfo{pages}{261--306}).
\newblock \bibinfo{publisher}{Amsterdam University Press}
  volume~\bibinfo{volume}{2} of {\it \bibinfo{series}{Texts in Logic and
  Games}\/}.
%Type = Article
\bibitem[{Falcone et~al.(2012)Falcone, Fernandez \& Mounier}]{FalconeFM12}
\bibinfo{author}{Falcone, Y.}, \bibinfo{author}{Fernandez, J.}, \&
  \bibinfo{author}{Mounier, L.} (\bibinfo{year}{2012}).
\newblock \bibinfo{title}{What can you verify and enforce at runtime?}
\newblock {\it \bibinfo{journal}{International Journal on Software Tools for
  Technology Transfer}\/},  {\it \bibinfo{volume}{14}\/},
  \bibinfo{pages}{349--382}. \DOIprefix\doi{10.1007/s10009-011-0196-8}.
%Type = Incollection
\bibitem[{Falcone et~al.(2013)Falcone, Havelund \& Reger}]{FalconeHR13}
\bibinfo{author}{Falcone, Y.}, \bibinfo{author}{Havelund, K.}, \&
  \bibinfo{author}{Reger, G.} (\bibinfo{year}{2013}).
\newblock \bibinfo{title}{A tutorial on runtime verification}.
\newblock In \bibinfo{editor}{M.~Broy}, \bibinfo{editor}{D.~A. Peled}, \&
  \bibinfo{editor}{G.~Kalus} (Eds.), {\it \bibinfo{booktitle}{Engineering
  Dependable Software Systems}\/} (pp. \bibinfo{pages}{141--175}).
\newblock \bibinfo{publisher}{{IOS} Press} volume~\bibinfo{volume}{34} of {\it
  \bibinfo{series}{{NATO} Science for Peace and Security Series, {D:}
  Information and Communication Security}\/}.
\newblock \DOIprefix\doi{10.3233/978-1-61499-207-3-141}.
%Type = Inproceedings
\bibitem[{Francalanza et~al.(2017{\natexlab{a}})Francalanza, Aceto, Achilleos,
  Attard, Cassar, Della~Monica \& Ing{\'o}lfsd{\'o}ttir}]{FraIntroRV}
\bibinfo{author}{Francalanza, A.}, \bibinfo{author}{Aceto, L.},
  \bibinfo{author}{Achilleos, A.}, \bibinfo{author}{Attard, D.~P.},
  \bibinfo{author}{Cassar, I.}, \bibinfo{author}{Della~Monica, D.}, \&
  \bibinfo{author}{Ing{\'o}lfsd{\'o}ttir, A.}
  (\bibinfo{year}{2017}{\natexlab{a}}).
\newblock \bibinfo{title}{A foundation for runtime monitoring}.
\newblock In \bibinfo{editor}{S.~Lahiri}, \& \bibinfo{editor}{G.~Reger} (Eds.),
  {\it \bibinfo{booktitle}{Runtime verification. RV}\/} (pp.
  \bibinfo{pages}{8--29}).
\newblock \bibinfo{publisher}{Springer} volume \bibinfo{volume}{10548} of {\it
  \bibinfo{series}{Lecture Notes in Computer Science}\/}.
\newblock \DOIprefix\doi{10.1007/978-3-319-67531-2_2}.
%Type = Article
\bibitem[{Francalanza et~al.(2017{\natexlab{b}})Francalanza, Aceto \&
  Ingolfsdottir}]{FraAI17}
\bibinfo{author}{Francalanza, A.}, \bibinfo{author}{Aceto, L.}, \&
  \bibinfo{author}{Ingolfsdottir, A.} (\bibinfo{year}{2017}{\natexlab{b}}).
\newblock \bibinfo{title}{Monitorability for the {H}ennessy--{M}ilner {L}ogic
  with recursion}.
\newblock {\it \bibinfo{journal}{Formal Methods in System Design}\/},  {\it
  \bibinfo{volume}{51}\/}, \bibinfo{pages}{87--116}.
  \DOIprefix\doi{10.1007/s10703-017-0273-z}.
%Type = Incollection
\bibitem[{van Glabbeek(2001)}]{Gla01}
\bibinfo{author}{van Glabbeek, R.~J.} (\bibinfo{year}{2001}).
\newblock \bibinfo{title}{The linear time - branching time spectrum {I}}.
\newblock In {\it \bibinfo{booktitle}{Handbook of Process Algebra}\/} (pp.
  \bibinfo{pages}{3--99}).
\newblock \bibinfo{publisher}{North-Holland / Elsevier}.
\newblock \DOIprefix\doi{10.1016/b978-044482830-9/50019-9}.
%Type = Inproceedings
\bibitem[{Grabmayer \& Fokkink(2020)}]{GrabmayerF20}
\bibinfo{author}{Grabmayer, C.}, \& \bibinfo{author}{Fokkink, W.}
  (\bibinfo{year}{2020}).
\newblock \bibinfo{title}{A complete proof system for $1$-free regular
  expressions modulo bisimilarity}.
\newblock In \bibinfo{editor}{H.~Hermanns}, \bibinfo{editor}{L.~Zhang},
  \bibinfo{editor}{N.~Kobayashi}, \& \bibinfo{editor}{D.~Miller} (Eds.), {\it
  \bibinfo{booktitle}{{LICS} '20: 35th Annual {ACM/IEEE} Symposium on Logic in
  Computer Science}\/} (pp. \bibinfo{pages}{465--478}).
\newblock \bibinfo{publisher}{{ACM}}.
\newblock \DOIprefix\doi{10.1145/3373718.3394744}.
%Type = Incollection
\bibitem[{Groote \& Reniers(2001)}]{GrooteR01}
\bibinfo{author}{Groote, J.~F.}, \& \bibinfo{author}{Reniers, M.~A.}
  (\bibinfo{year}{2001}).
\newblock \bibinfo{title}{Algebraic process verification}.
\newblock In \bibinfo{editor}{J.~A. Bergstra}, \bibinfo{editor}{A.~Ponse}, \&
  \bibinfo{editor}{S.~A. Smolka} (Eds.), {\it \bibinfo{booktitle}{Handbook of
  Process Algebra}\/} (pp. \bibinfo{pages}{1151--1208}).
\newblock \bibinfo{publisher}{North-Holland / Elsevier}.
\newblock \DOIprefix\doi{10.1016/b978-044482830-9/50035-7}.
%Type = Inproceedings
\bibitem[{Havelund \& Goldberg(2005)}]{HavelundG05}
\bibinfo{author}{Havelund, K.}, \& \bibinfo{author}{Goldberg, A.}
  (\bibinfo{year}{2005}).
\newblock \bibinfo{title}{Verify your runs}.
\newblock In \bibinfo{editor}{B.~Meyer}, \& \bibinfo{editor}{J.~Woodcock}
  (Eds.), {\it \bibinfo{booktitle}{Verified Software: Theories, Tools,
  Experiments, First {IFIP} {TC}, 2/WG 2.3 Conference, {VSTTE} 2005}\/} (pp.
  \bibinfo{pages}{374--383}).
\newblock \bibinfo{publisher}{Springer} volume \bibinfo{volume}{4171} of {\it
  \bibinfo{series}{Lecture Notes in Computer Science}\/}.
\newblock \DOIprefix\doi{10.1007/978-3-540-69149-5\_40}.
%Type = Article
\bibitem[{Havelund \& Rosu(2001)}]{HavelundR01a}
\bibinfo{author}{Havelund, K.}, \& \bibinfo{author}{Rosu, G.}
  (\bibinfo{year}{2001}).
\newblock \bibinfo{title}{Monitoring {Java} programs with {Java PathExplorer}}.
\newblock {\it \bibinfo{journal}{Electron. Notes Theor. Comput. Sci.}\/},  {\it
  \bibinfo{volume}{55}\/}, \bibinfo{pages}{200--217}.
  \DOIprefix\doi{10.1016/S1571-0661(04)00253-1}.
%Type = Article
\bibitem[{Heering(1986)}]{Heering86}
\bibinfo{author}{Heering, J.} (\bibinfo{year}{1986}).
\newblock \bibinfo{title}{Partial evaluation and $\omega$-completeness of
  algebraic specifications}.
\newblock {\it \bibinfo{journal}{Theor. Comput. Sci.}\/},  {\it
  \bibinfo{volume}{43}\/}, \bibinfo{pages}{149--167}.
  \DOIprefix\doi{10.1016/0304-3975(86)90173-8}.
%Type = Article
\bibitem[{Hennessy(1981)}]{Hennessy81}
\bibinfo{author}{Hennessy, M.} (\bibinfo{year}{1981}).
\newblock \bibinfo{title}{A term model for synchronous processes}.
\newblock {\it \bibinfo{journal}{Information and Control}\/},  {\it
  \bibinfo{volume}{51}\/}, \bibinfo{pages}{58--75}.
  \DOIprefix\doi{10.1016/S0019-9958(81)90082-6}.
%Type = Article
\bibitem[{Hennessy \& Milner(1985)}]{HennessyM85}
\bibinfo{author}{Hennessy, M.}, \& \bibinfo{author}{Milner, R.}
  (\bibinfo{year}{1985}).
\newblock \bibinfo{title}{Algebraic laws for nondeterminism and concurrency}.
\newblock {\it \bibinfo{journal}{Journal of the {ACM}}\/},  {\it
  \bibinfo{volume}{32}\/}, \bibinfo{pages}{137--161}.
  \DOIprefix\doi{10.1145/2455.2460}.
%Type = Article
\bibitem[{Hoare et~al.(1987)Hoare, Hayes, He, Morgan, Roscoe, Sanders,
  S{\o}rensen, Spivey \& Sufrin}]{HoareHJMRSSSS87}
\bibinfo{author}{Hoare, C. A.~R.}, \bibinfo{author}{Hayes, I.~J.},
  \bibinfo{author}{He, J.}, \bibinfo{author}{Morgan, C.},
  \bibinfo{author}{Roscoe, A.~W.}, \bibinfo{author}{Sanders, J.~W.},
  \bibinfo{author}{S{\o}rensen, I.~H.}, \bibinfo{author}{Spivey, J.~M.}, \&
  \bibinfo{author}{Sufrin, B.} (\bibinfo{year}{1987}).
\newblock \bibinfo{title}{Laws of programming}.
\newblock {\it \bibinfo{journal}{Communications of the {ACM}}\/},  {\it
  \bibinfo{volume}{30}\/}, \bibinfo{pages}{672--686}.
  \DOIprefix\doi{10.1145/27651.27653}.
%Type = Phdthesis
\bibitem[{Kamp(1968)}]{Kamp1968-KAMTLA}
\bibinfo{author}{Kamp, H.} (\bibinfo{year}{1968}).
\newblock {\it \bibinfo{title}{Tense Logic and the Theory of Linear Order}\/}.
\newblock Ph.D. thesis {UCLA}.
%Type = Inproceedings
\bibitem[{Kapp{\'{e}} et~al.(2020)Kapp{\'{e}}, Brunet, Silva, Wagemaker \&
  Zanasi}]{KappeB0WZ20}
\bibinfo{author}{Kapp{\'{e}}, T.}, \bibinfo{author}{Brunet, P.},
  \bibinfo{author}{Silva, A.}, \bibinfo{author}{Wagemaker, J.}, \&
  \bibinfo{author}{Zanasi, F.} (\bibinfo{year}{2020}).
\newblock \bibinfo{title}{Concurrent kleene algebra with observations: From
  hypotheses to completeness}.
\newblock In \bibinfo{editor}{J.~Goubault{-}Larrecq}, \&
  \bibinfo{editor}{B.~K{\"{o}}nig} (Eds.), {\it \bibinfo{booktitle}{Foundations
  of Software Science and Computation Structures - 23rd International
  Conference, {FOSSACS} 2020}\/} (pp. \bibinfo{pages}{381--400}).
\newblock \bibinfo{publisher}{Springer} volume \bibinfo{volume}{12077} of {\it
  \bibinfo{series}{Lecture Notes in Computer Science}\/}.
\newblock \DOIprefix\doi{10.1007/978-3-030-45231-5\_20}.
%Type = Article
\bibitem[{Kozen(1994)}]{Kozen94}
\bibinfo{author}{Kozen, D.} (\bibinfo{year}{1994}).
\newblock \bibinfo{title}{A completeness theorem for {Kleene} algebras and the
  algebra of regular events}.
\newblock {\it \bibinfo{journal}{Information and Computation}\/},  {\it
  \bibinfo{volume}{110}\/}, \bibinfo{pages}{366--390}.
  \DOIprefix\doi{10.1006/inco.1994.1037}.
%Type = Article
\bibitem[{Kozen \& Silva(2020)}]{KozenS20}
\bibinfo{author}{Kozen, D.}, \& \bibinfo{author}{Silva, A.}
  (\bibinfo{year}{2020}).
\newblock \bibinfo{title}{Left-handed completeness}.
\newblock {\it \bibinfo{journal}{Theoretical Computer Science}\/},  {\it
  \bibinfo{volume}{807}\/}, \bibinfo{pages}{220--233}.
  \DOIprefix\doi{10.1016/j.tcs.2019.10.040}.
%Type = Article
\bibitem[{Leucker \& Schallhart(2009)}]{LeuckerS09}
\bibinfo{author}{Leucker, M.}, \& \bibinfo{author}{Schallhart, C.}
  (\bibinfo{year}{2009}).
\newblock \bibinfo{title}{A brief account of runtime verification}.
\newblock {\it \bibinfo{journal}{Journal of Logical and Algebraic Methods in
  Programming}\/},  {\it \bibinfo{volume}{78}\/}, \bibinfo{pages}{293--303}.
  \DOIprefix\doi{10.1016/j.jlap.2008.08.004}.
%Type = Article
\bibitem[{Lin(1995)}]{Lin95}
\bibinfo{author}{Lin, H.} (\bibinfo{year}{1995}).
\newblock \bibinfo{title}{{PAM:} {A} process algebra manipulator}.
\newblock {\it \bibinfo{journal}{Formal Methods in System Design}\/},  {\it
  \bibinfo{volume}{7}\/}, \bibinfo{pages}{243--259}.
  \DOIprefix\doi{10.1007/BF01384078}.
%Type = Book
\bibitem[{Milner(1980)}]{M80}
\bibinfo{author}{Milner, R.} (\bibinfo{year}{1980}).
\newblock {\it \bibinfo{title}{A Calculus of Communicating Systems}\/}
  volume~\bibinfo{volume}{92} of {\it \bibinfo{series}{Lecture Notes in
  Computer Science}\/}.
\newblock \bibinfo{publisher}{Springer}.
\newblock \DOIprefix\doi{10.1007/3-540-10235-3}.
%Type = Article
\bibitem[{Milner(1984)}]{Milner84}
\bibinfo{author}{Milner, R.} (\bibinfo{year}{1984}).
\newblock \bibinfo{title}{A complete inference system for a class of regular
  behaviours}.
\newblock {\it \bibinfo{journal}{Journal of Computer and System Sciences}\/},
  {\it \bibinfo{volume}{28}\/}, \bibinfo{pages}{439--466}.
  \DOIprefix\doi{10.1016/0022-0000(84)90023-0}.
%Type = Book
\bibitem[{Milner(1989)}]{Milner:1989:CC:63446}
\bibinfo{author}{Milner, R.} (\bibinfo{year}{1989}).
\newblock {\it \bibinfo{title}{Communication and Concurrency}\/}.
\newblock \bibinfo{address}{Upper Saddle River, NJ, USA}:
  \bibinfo{publisher}{Prentice-Hall, Inc.}
\newblock \DOIprefix\doi{10.5555/534666}.
%Type = Inproceedings
\bibitem[{Peled \& Havelund(2018)}]{PeledH18}
\bibinfo{author}{Peled, D.}, \& \bibinfo{author}{Havelund, K.}
  (\bibinfo{year}{2018}).
\newblock \bibinfo{title}{Refining the safety-liveness classification of
  temporal properties according to monitorability}.
\newblock In \bibinfo{editor}{T.~Margaria}, \bibinfo{editor}{S.~Graf}, \&
  \bibinfo{editor}{K.~G. Larsen} (Eds.), {\it \bibinfo{booktitle}{Models,
  Mindsets, Meta: The What, the How, and the Why Not? - Essays Dedicated to
  Bernhard Steffen on the Occasion of His 60th Birthday}\/} (pp.
  \bibinfo{pages}{218--234}).
\newblock \bibinfo{publisher}{Springer} volume \bibinfo{volume}{11200} of {\it
  \bibinfo{series}{Lecture Notes in Computer Science}\/}.
\newblock \DOIprefix\doi{10.1007/978-3-030-22348-9\_14}.
%Type = Inproceedings
\bibitem[{Pnueli \& Zaks(2006)}]{PnueliZ06}
\bibinfo{author}{Pnueli, A.}, \& \bibinfo{author}{Zaks, A.}
  (\bibinfo{year}{2006}).
\newblock \bibinfo{title}{{PSL} model checking and run-time verification via
  testers}.
\newblock In \bibinfo{editor}{J.~Misra}, \bibinfo{editor}{T.~Nipkow}, \&
  \bibinfo{editor}{E.~Sekerinski} (Eds.), {\it \bibinfo{booktitle}{{FM} 2006:
  Formal Methods, 14th International Symposium on Formal Methods}\/} (pp.
  \bibinfo{pages}{573--586}).
\newblock \bibinfo{publisher}{Springer} volume \bibinfo{volume}{4085} of {\it
  \bibinfo{series}{Lecture Notes in Computer Science}\/}.
\newblock \DOIprefix\doi{10.1007/11813040\_38}.
%Type = Inproceedings
\bibitem[{Reddy et~al.(2020)Reddy, Lemieux, Padhye \&
  Sen}]{sen_rosu_reg_exp_monitors}
\bibinfo{author}{Reddy, S.}, \bibinfo{author}{Lemieux, C.},
  \bibinfo{author}{Padhye, R.}, \& \bibinfo{author}{Sen, K.}
  (\bibinfo{year}{2020}).
\newblock \bibinfo{title}{Quickly generating diverse valid test inputs with
  reinforcement learning}.
\newblock In \bibinfo{editor}{G.~Rothermel}, \& \bibinfo{editor}{D.~Bae}
  (Eds.), {\it \bibinfo{booktitle}{{ICSE} '20: 42nd International Conference on
  Software Engineering, Seoul, South Korea, 27 June - 19 July, 2020}\/} (pp.
  \bibinfo{pages}{1410--1421}).
\newblock \bibinfo{publisher}{{ACM}}.
\newblock \DOIprefix\doi{10.1145/3377811.3380399}.
%Type = Article
\bibitem[{Redko(1964)}]{redko_reg_lang}
\bibinfo{author}{Redko, V.} (\bibinfo{year}{1964}).
\newblock \bibinfo{title}{On defining relations for the algebra of regular
  events}.
\newblock {\it \bibinfo{journal}{Ukrainski{\u\i} matematicheski{\u\i}
  Zhurnal}\/},  {\it \bibinfo{volume}{16}\/}, \bibinfo{pages}{120--126 (in
  Russian)}.
%Type = Article
\bibitem[{Salomaa(1966)}]{Salomaa66}
\bibinfo{author}{Salomaa, A.} (\bibinfo{year}{1966}).
\newblock \bibinfo{title}{Two complete axiom systems for the algebra of regular
  events}.
\newblock {\it \bibinfo{journal}{Journal of the {ACM}}\/},  {\it
  \bibinfo{volume}{13}\/}, \bibinfo{pages}{158--169}.
  \DOIprefix\doi{10.1145/321312.321326}.
%Type = Article
\bibitem[{Sch{\"{u}}tzenberger(1965)}]{Schutzenberger65a}
\bibinfo{author}{Sch{\"{u}}tzenberger, M.~P.} (\bibinfo{year}{1965}).
\newblock \bibinfo{title}{On finite monoids having only trivial subgroups}.
\newblock {\it \bibinfo{journal}{Inf. Control.}\/},  {\it
  \bibinfo{volume}{8}\/}, \bibinfo{pages}{190--194}.
  \DOIprefix\doi{10.1016/S0019-9958(65)90108-7}.
%Type = Article
\bibitem[{Sokolsky \& Rosu(2012)}]{SokolskyR12}
\bibinfo{author}{Sokolsky, O.}, \& \bibinfo{author}{Rosu, G.}
  (\bibinfo{year}{2012}).
\newblock \bibinfo{title}{Introduction to the special issue on runtime
  verification}.
\newblock {\it \bibinfo{journal}{Formal Methods Syst. Des.}\/},  {\it
  \bibinfo{volume}{41}\/}, \bibinfo{pages}{233--235}.
  \DOIprefix\doi{10.1007/s10703-012-0174-0}.
%Type = Article
\bibitem[{Tabakov et~al.(2012)Tabakov, Rozier \& Vardi}]{TabakovRV12}
\bibinfo{author}{Tabakov, D.}, \bibinfo{author}{Rozier, K.~Y.}, \&
  \bibinfo{author}{Vardi, M.~Y.} (\bibinfo{year}{2012}).
\newblock \bibinfo{title}{Optimized temporal monitors for {SystemC}}.
\newblock {\it \bibinfo{journal}{Formal Methods Syst. Des.}\/},  {\it
  \bibinfo{volume}{41}\/}, \bibinfo{pages}{236--268}.
  \DOIprefix\doi{10.1007/s10703-011-0139-8}.

\end{thebibliography}

\end{document}